\newcommand{\prob}[2][]{\var{\mathbb{P}}{#1}\left(#2\right)}  
\newcommand{\norm}[1]{\left\lVert#1\right\rVert} 
\newcommand{\vertlr}[1]{\left\lvert#1\right\rvert} 
\newcommand{\set}[1]{\mathcal{#1}} 
\newcommand{\setdef}[2][]{
	\left\{
		\ifblank{#1}{}{#1 \hspace{.1cm} \middle| \hspace{.1cm}}
		#2
	\right\}
} 
\newcommand{\supp}[0]{\hspace{.01cm}\text{supp}\hspace{.01cm}} 
\newcommand{\argmin}[1]{\underset{#1}{\hspace{.1cm}\mathrm{arg}\hspace{.05cm}\mathrm{min}} \hspace{.1cm}} 
\newcommand{\argmax}[1]{\underset{#1}{\mathrm{arg}\hspace{.05cm}\mathrm{max}} \hspace{.1cm}} 
\newcommand{\rv}[1]{\text{#1}} 
\newcommand{\var}[2]{{#1}_{\text{#2}}} 
\newcommand{\vark}[2]{{#1}_{\text{#2},k}} 
\newcommand{\varj}[2]{{#1}_{\text{#2},j}} 
\newcommand{\vari}[2]{{#1}_{\text{#2},i}} 
\newcommand{\varz}[2]{{#1}_{\text{#2},0}} 
\newcommand{\varkk}[2]{{#1}_{\text{#2},1:k}} 
\newcommand{\f}[2]{#1\left(#2\right)} 
\newcommand{\lr}[1]{\left(#1\right)} 
\newcommand{\T}[0]{T}
\DeclareMathOperator\erf{erf} 
\DeclarePairedDelimiterXPP\onenorm[1]{}\lVert\rVert{_1}{\ifblank{#1}{\:\cdot\:}{#1}} 
\DeclarePairedDelimiterXPP\twonorm[1]{}\lVert\rVert{_2}{\ifblank{#1}{\:\cdot\:}{#1}} 
\newtheorem{theorem}{Theorem}
\newtheorem{assumption}{Assumption}
\newtheorem{remark}{Remark}
\newtheorem{definition}{Definition}
\newtheorem{lemma}{Lemma}
\newtheorem{corollary}{Corollary}
\newtheorem{proposition}{Proposition}
\newacronym{MPC}{MPC}{Model Predictive Control}
\newacronym{MPCa}{MPC algorithm}{MPC algorithm}
\newacronym{RMPC}{RMPC}{Robust Model Predictive Control}
\newacronym{SMPC}{SMPC}{Stochastic Model Predictive Control}
\newacronym{SCMPC}{SCMPC}{Scenario Model Predictive Control}
\newacronym{MILP}{MILP}{Mixed Integer Linear Program}
\newacronym{PIT}{PIT}{Pointwise\hyp{}In\hyp{}Time}
\newacronym{POMDP}{POMDP}{Partially Observable Markov Decision Process}
\newacronym{MDP}{MDP}{Markov Decision Process}
\newacronym{KKT}{KKT}{Karush\hyp{}Kuhn\hyp{}Tucker}
\newacronym{pdf}{PDF}{probability density function}
\newacronym{CVPM}{CVPM}{constraint violation probability minimization}
\newacronym{nSMPC}{CVPM-SMPC}{constraint violation probability minimization in \gls{SMPC}}
\newacronym{cv}{CV}{controlled vehicle}
\newacronym{ob}{OB}{obstacle}
\newglossaryentry{simux}{type=symbols,
	sort={xk},
	name={\ensuremath{x}},
	dimension={n},
	body={\ensuremath{\mathbb{R}}},
	description={State vector to dynamic linear system}
}
\newglossaryentry{simuy}{type=symbols,
	sort={xk},
	name={\ensuremath{y}},
	dimension={n},
	body={\ensuremath{\mathbb{R}}},
	description={State vector to dynamic linear system}
}
\newglossaryentry{simuxr}{type=symbols,
	sort={xk},
	name={\ensuremath{x_\text{r}}},
	dimension={n},
	body={\ensuremath{\mathbb{R}}},
	description={State vector to dynamic linear system}
}
\newglossaryentry{simuyr}{type=symbols,
	sort={xk},
	name={\ensuremath{y_\text{r}}},
	dimension={n},
	body={\ensuremath{\mathbb{R}}},
	description={State vector to dynamic linear system}
}
\newglossaryentry{xk}{type=symbols,
	sort={xk},
	name={\ensuremath{\bm{x}_k}},
	dimension={n},
	body={\ensuremath{\mathbb{R}}},
	description={State vector to dynamic linear system}
}
\newglossaryentry{y10wp}{type=symbols,
	sort={xk},
	name={\ensuremath{\gls{Mm1}+\gls{wimax0}}},
	dimension={n},
	body={\ensuremath{\mathbb{R}}},
	description={State vector to dynamic linear system}
}
\newglossaryentry{y10wm}{type=symbols,
	sort={xk},
	name={\ensuremath{\gls{Mm1}-\gls{wimax0}}},
	dimension={n},
	body={\ensuremath{\mathbb{R}}},
	description={State vector to dynamic linear system}
}
\newglossaryentry{hy10wp}{type=symbols,
	sort={xk},
	name={\ensuremath{\gls{hi}\lr{\gls{Mm1}+\gls{wimax0}}}},
	dimension={n},
	body={\ensuremath{\mathbb{R}}},
	description={State vector to dynamic linear system}
}
\newglossaryentry{hy10wm}{type=symbols,
	sort={xk},
	name={\ensuremath{\gls{hi}\lr{\gls{Mm1}-\gls{wimax0}}}},
	dimension={n},
	body={\ensuremath{\mathbb{R}}},
	description={State vector to dynamic linear system}
}
\newglossaryentry{hyk}{type=symbols,
	sort={xk},
	name={\ensuremath{\gls{hi}\lr{\gls{Mkm1}}}},
	dimension={n},
	body={\ensuremath{\mathbb{R}}},
	description={State vector to dynamic linear system}
}
\newglossaryentry{hykb}{type=symbols,
	sort={xk},
	name={\ensuremath{\gls{hi}\lr{\gls{Mkm1b}}}},
	dimension={n},
	body={\ensuremath{\mathbb{R}}},
	description={State vector to dynamic linear system}
}
\newglossaryentry{h0ykb}{type=symbols,
	sort={xk},
	name={\ensuremath{\gls{hi}\lr{\norm{\bm{0} - \overline{\bm{y}}_{\text{r},k}}}}},
	dimension={n},
	body={\ensuremath{\mathbb{R}}},
	description={State vector to dynamic linear system}
}
\newglossaryentry{n}{type=symbols,
	sort={n},
	name={\ensuremath{l}},
	dimension={},
	body={\ensuremath{\mathbb{R}^+}},
	description={Conflict probability for the $i$-th step, given that no collision occured in previous steps}
}
\newglossaryentry{Pcoli}{type=symbols,
	sort={Pcoli},
	name={\ensuremath{p_{\text{col},k}}}, 
	dimension={},
	body={\ensuremath{\mathbb{R}^+}},
	description={Conflict probability for the $i$-th step, given that no collision occured in previous steps}
}
\newglossaryentry{Pcoli1}{type=symbols,
	sort={Pcoli},
	name={\ensuremath{p_{\text{col},1}}}, 
	dimension={},
	body={\ensuremath{\mathbb{R}^+}},
	description={Conflict probability for the $i$-th step, given that no collision occured in previous steps}
}
\newglossaryentry{d}{type=symbols,
	sort={d},
	name={\ensuremath{d_k}},
	dimension={},
	body={\ensuremath{\mathbb{R}^+}},
	description={Conflict probability for the $i$-th step, given that no collision occured in previous steps}
}
\newglossaryentry{rr}{type=symbols,
	sort={d},
	name={\ensuremath{r_{\text{r}}}},
	dimension={},
	body={\ensuremath{\mathbb{R}^+}},
	description={Conflict probability for the $i$-th step, given that no collision occured in previous steps}
}
\newglossaryentry{rcv}{type=symbols,
	sort={d},
	name={\ensuremath{r_{\text{c}}}},
	dimension={},
	body={\ensuremath{\mathbb{R}^+}},
	description={Conflict probability for the $i$-th step, given that no collision occured in previous steps}
}
\newglossaryentry{rcomb}{type=symbols,
	sort={d},
	name={\ensuremath{r_{\text{comb}}}},
	dimension={},
	body={\ensuremath{\mathbb{R}^+}},
	description={Conflict probability for the $i$-th step, given that no collision occured in previous steps}
}
\newglossaryentry{rwi}{type=symbols,
	sort={Rsi},
	name={\ensuremath{r_{\text{occ},k}}},
	dimension={},
	body={\ensuremath{\mathbb{R}^+}},
	description={Radius defining the boundary of $\supp \gls{pconi}$}
}
\newglossaryentry{rti}{type=symbols,
	sort={Rsi},
	name={\ensuremath{d_{\text{safe},k}}},
	dimension={},
	body={\ensuremath{\mathbb{R}^+}},
	description={Radius defining the boundary of $\supp \gls{pconi}$}
}
\newglossaryentry{Pw}{type=symbols,
	sort={Pw},
	name={\ensuremath{p_{\bm{W}_k}}}, 
	dimension={},
	body={\ensuremath{\mathbb{R}}},
	description={Probability density function for random variable \gls{Wi}}
}
\newglossaryentry{fPw}{type=symbols,
	sort={fPw},
	name={\ensuremath{f_{\bm{W}_k}}},
	dimension={},
	body={\ensuremath{\mathbb{R}}},
	description={Probability density function for random variable \gls{Wi}}
}
\newglossaryentry{fPw2}{type=symbols,
	sort={fPw},
	name={\ensuremath{f_{\bm{W}_{k,\text{pol}}}}},
	dimension={},
	body={\ensuremath{\mathbb{R}}},
	description={Probability density function for random variable \gls{Wi}}
}
\newglossaryentry{fPwi}{type=symbols,
	sort={fPw},
	name={\ensuremath{f_{\bm{W}_i}}},
	dimension={},
	body={\ensuremath{\mathbb{R}}},
	description={Probability density function for random variable \gls{Wi}}
}
\newglossaryentry{pp}{type=symbols,
	sort={pp},
	name={\ensuremath{\bm{u}^*_0}},
	dimension={},
	body={\ensuremath{\mathbb{R}}},
	description={Probability density function for random variable \gls{Wi}}
}
\newglossaryentry{convfunc}{type=symbols,
	sort={convfunc},
	name={\ensuremath{z}},
	dimension={},
	body={\ensuremath{\mathbb{R}}},
	description={Probability density function for random variable \gls{Wi}}
}
\newglossaryentry{wi}{type=symbols,
	sort={wi},
	name={\ensuremath{\bm{w}_{k}}},
	dimension={\glsd{wi}},
	body={\ensuremath{\mathbb{R}}},
	description={Realization of \gls{Yri}}
}
\newglossaryentry{wii}{type=symbols,
	sort={wi},
	name={\ensuremath{\bm{w}_{i}}},
	dimension={\glsd{wi}},
	body={\ensuremath{\mathbb{R}}},
	description={Realization of \gls{Yri}}
}
\newglossaryentry{wi0}{type=symbols,
	sort={wi},
	name={\ensuremath{\bm{w}_{0}}},
	dimension={\glsd{wi}},
	body={\ensuremath{\mathbb{R}}},
	description={Realization of \gls{Yri}}
}
\newglossaryentry{wimax}{type=symbols,
	sort={wi},
	name={\ensuremath{\vark{w}{max}}},
	dimension={\glsd{wi}},
	body={\ensuremath{\mathbb{R}}},
	description={Realization of \gls{Yri}}
}
\newglossaryentry{wimaxt}{type=symbols,
	sort={wi},
	name={\ensuremath{\tilde{\bm{w}}_{\text{max},k}}},
	dimension={\glsd{wi}},
	body={\ensuremath{\mathbb{R}}},
	description={Realization of \gls{Yri}}
}
\newglossaryentry{wimaxii}{type=symbols,
	sort={wi},
	name={\ensuremath{\vari{w}{max}}},
	dimension={\glsd{wi}},
	body={\ensuremath{\mathbb{R}}},
	description={Realization of \gls{Yri}}
}
\newglossaryentry{wimaxm1}{type=symbols,
	sort={wi},
	name={\ensuremath{w_{\text{max},k-1}}},
	dimension={\glsd{wi}},
	body={\ensuremath{\mathbb{R}}},
	description={Realization of \gls{Yri}}
}
\newglossaryentry{wimax0}{type=symbols,
	sort={wi},
	name={\ensuremath{\varz{w}{max}}},
	dimension={\glsd{wi}},
	body={\ensuremath{\mathbb{R}}},
	description={Realization of \gls{Yri}}
}
\newglossaryentry{pcoli}{type=symbols,
	sort={pcoli},
	name={\ensuremath{f_{\text{col},k}}},
	dimension={},
	body={\ensuremath{\mathbb{R}^+}},
	description={Conflict probability density for the $i$-th step, given that no collision occured in previous steps}
}
\newglossaryentry{pcv}{type=symbols,
	sort={pcon},
	name={\ensuremath{\var{pp}{cv}}},
	dimension={},
	body={\ensuremath{\mathbb{R}^+}},
	description={Conflict probability}
}
\newglossaryentry{pocc}{type=symbols,
	sort={pcon},
	name={\ensuremath{\vark{f}{occ}}},
	dimension={},
	body={\ensuremath{\mathbb{R}^+}},
	description={Conflict probability}
}
\newglossaryentry{Pcv}{type=symbols,
	sort={pcon},
	name={\ensuremath{p_{\text{cv},k}}},
	dimension={},
	body={\ensuremath{\mathbb{R}^+}},
	description={Conflict probability}
}
\newglossaryentry{Pcv1}{type=symbols,
	sort={pcon},
	name={\ensuremath{p_{\text{cv},1}}},
	dimension={},
	body={\ensuremath{\mathbb{R}^+}},
	description={Conflict probability}
}
\newglossaryentry{att0}{type=symbols,
	sort={pcon},
	name={\ensuremath{\lr{\bm{u}_{0}}}},
	dimension={},
	body={\ensuremath{\mathbb{R}^+}},
	description={Conflict probability}
}
\newglossaryentry{attj}{type=symbols,
	sort={pcon},
	name={\ensuremath{\lr{\bm{u}_{j}}}},
	dimension={},
	body={\ensuremath{\mathbb{R}^+}},
	description={Conflict probability}
}
\newglossaryentry{Pcv1att}{type=symbols,
	sort={pcon},
	name={\ensuremath{p_{\text{cv},1}\gls{att0}}},
	dimension={},
	body={\ensuremath{\mathbb{R}^+}},
	description={Conflict probability}
}
\newglossaryentry{fPcv}{type=symbols,
	sort={pcon},
	name={\ensuremath{f_{\text{cv},k}}},
	dimension={},
	body={\ensuremath{\mathbb{R}^+}},
	description={Conflict probability}
}
\newglossaryentry{XXnorm}{type=symbols,
	sort={XXf},
	name={\ensuremath{\mathcal{X}_{\text{cv},k}}},
	dimension={},
	body={\ensuremath{\mathbb{R}^+}},
	description={set forbidden by norm}
}
\newglossaryentry{Vf}{type=symbols,
	sort={Vf},
	name={\ensuremath{V_{\text{f}}\lr{\bm{x}_{N}}}},
	dimension={},
	body={\ensuremath{\mathbb{R}^+}},
	description={terminal cost}
}
\newglossaryentry{XXf}{type=symbols,
	sort={XXf},
	name={\ensuremath{\mathcal{X}_\text{f}}},
	dimension={},
	body={\ensuremath{\mathbb{R}^+}},
	description={terminal set}
}
\newglossaryentry{Rswj}{type=symbols,
	sort={Rswj},
	name={\ensuremath{R_{\text{s}_{\text{W}},j}}},
	dimension={},
	body={\ensuremath{\mathbb{R}^+}},
	description={Radius of support boundary of \gls{Wi}}
}
\newglossaryentry{M}{type=symbols,
	sort={M},
	name={\ensuremath{\norm{\bm{y}_1-\bm{y}_{\text{r},1}}_2}},
	dimension={},
	body={\ensuremath{\mathbb{R}^+}},
	description={Measure}
}
\newglossaryentry{Mm1}{type=symbols,
	sort={M},
	name={\ensuremath{\norm{\bm{y}_1-\bm{y}_{\text{r},0}}_2}},
	dimension={},
	body={\ensuremath{\mathbb{R}^+}},
	description={Measure}
}
\newglossaryentry{Mm1b}{type=symbols,
	sort={M},
	name={\ensuremath{\norm{\bm{y}_1-\overline{\bm{y}}_{\text{r},1}}_2}},
	dimension={},
	body={\ensuremath{\mathbb{R}^+}},
	description={Measure}
}
\newglossaryentry{Mm1us}{type=symbols,
	sort={M},
	name={\ensuremath{\norm{\bm{y}_1\lr{\gls{pp}}-\overline{\bm{y}}_{\text{r},1}}_2}},
	dimension={},
	body={\ensuremath{\mathbb{R}^+}},
	description={Measure}
}
\newglossaryentry{Mk}{type=symbols,
	sort={M},
	name={\ensuremath{\norm{\gls{yi}-\gls{yri}}_2}},
	dimension={},
	body={\ensuremath{\mathbb{R}^+}},
	description={Measure}
}
\newglossaryentry{Mkb}{type=symbols,
	sort={M},
	name={\ensuremath{\norm{\gls{yi}-\gls{yrib}}_2}},
	dimension={},
	body={\ensuremath{\mathbb{R}^+}},
	description={Measure}
}
\newglossaryentry{Mkp1}{type=symbols,
	sort={M},
	name={\ensuremath{\norm{\gls{yip1}-\gls{yrip1}}_2}},
	dimension={},
	body={\ensuremath{\mathbb{R}^+}},
	description={Measure}
}
\newglossaryentry{Mkj}{type=symbols,
	sort={M},
	name={\ensuremath{\norm{\gls{yj}-\gls{yrj}}_2}},
	dimension={},
	body={\ensuremath{\mathbb{R}^+}},
	description={Measure}
}
\newglossaryentry{Mkm1}{type=symbols,
	sort={M},
	name={\ensuremath{\norm{\gls{yi}-\bm{y}_{\text{r},k-1}}_2}},
	dimension={},
	body={\ensuremath{\mathbb{R}^+}},
	description={Measure}
}
\newglossaryentry{Mkm1b}{type=symbols,
	sort={M},
	name={\ensuremath{\norm{\gls{yi}-\overline{\bm{y}}_{\text{r},k}}_2}},
	dimension={},
	body={\ensuremath{\mathbb{R}^+}},
	description={Measure}
}
\newglossaryentry{Mkm1ur}{type=symbols,
	sort={M},
	name={\ensuremath{\norm{\gls{yi}-\lr{\bm{y}_{\text{r},k-1} + \bm{u}_{\text{r},k-1}}}_2}},
	dimension={},
	body={\ensuremath{\mathbb{R}^+}},
	description={Measure}
}
\newglossaryentry{Mkm1urt}{type=symbols,
	sort={M},
	name={\ensuremath{\norm{\gls{yi}-\tilde{\bm{y}}_{\text{r},k-1}}_2}},
	dimension={},
	body={\ensuremath{\mathbb{R}^+}},
	description={Measure}
}
\newglossaryentry{Mp}{type=symbols,
	sort={M},
	name={\ensuremath{\norm{\bm{y}_{j}-\bm{y}_{\text{r},j}}_p}},
	dimension={},
	body={\ensuremath{\mathbb{R}^+}},
	description={Measure}
}
\newglossaryentry{Mj}{type=symbols,
	sort={M},
	name={\ensuremath{\norm{\bm{y}_{j}-\bm{y}_{\text{r},j}}_2}},
	dimension={},
	body={\ensuremath{\mathbb{R}^+}},
	description={Measure}
}
\newglossaryentry{Mjb}{type=symbols,
	sort={M},
	name={\ensuremath{\norm{\bm{y}_{j}-\overline{\bm{y}}_{\text{r},j}}_2}},
	dimension={},
	body={\ensuremath{\mathbb{R}^+}},
	description={Measure}
}
\newglossaryentry{Mcon}{type=symbols,
	sort={Mmin},
	name={\ensuremath{c}},
	dimension={},
	body={\ensuremath{\mathbb{R}^+}},
	description={constraint for measure}
}
\newglossaryentry{Mmin}{type=symbols,
	sort={Mmin},
	name={\ensuremath{c_{1}}},
	dimension={},
	body={\ensuremath{\mathbb{R}^+}},
	description={constraint for measure}
}
\newglossaryentry{Mtot1}{type=symbols,
	sort={Mmin},
	name={\ensuremath{\gls{Mmin}+\gls{wimax0}}},
	dimension={},
	body={\ensuremath{\mathbb{R}^+}},
	description={constraint for measure}
}
\newglossaryentry{Mtotl}{type=symbols,
	sort={Mmin},
	name={\ensuremath{\gls{Mcon}+\sum_{k=0}^{\gls{n}-1}\gls{wimax}}},
	dimension={},
	body={\ensuremath{\mathbb{R}^+}},
	description={constraint for measure}
}
\newglossaryentry{Mtotli}{type=symbols,
	sort={Mmin},
	name={\ensuremath{\gls{Mcon}+\sum_{i=0}^{\gls{n}-1}\gls{wimaxii}}},
	dimension={},
	body={\ensuremath{\mathbb{R}^+}},
	description={constraint for measure}
}
\newglossaryentry{Mtot}{type=symbols,
	sort={Mmin},
	name={\ensuremath{\gls{Mmin}+\gls{wimax}}},
	dimension={},
	body={\ensuremath{\mathbb{R}^+}},
	description={constraint for measure}
}
\newglossaryentry{Mmink}{type=symbols,
	sort={Mmin},
	name={\ensuremath{{c}_k}},
	dimension={},
	body={\ensuremath{\mathbb{R}^+}},
	description={constraint for measure}
}
\newglossaryentry{Mminj}{type=symbols,
	sort={Mmin},
	name={\ensuremath{{c}_j}},
	dimension={},
	body={\ensuremath{\mathbb{R}^+}},
	description={constraint for measure}
}
\newglossaryentry{Mminkj}{type=symbols,
	sort={Mmin},
	name={\ensuremath{{c}_j}},
	dimension={},
	body={\ensuremath{\mathbb{R}^+}},
	description={constraint for measure}
}
\newglossaryentry{Mtotk}{type=symbols,
	sort={Mmin},
	name={\ensuremath{\vark{c}{\text{total}}}},
	dimension={},
	body={\ensuremath{\mathbb{R}^+}},
	description={constraint for measure}
}
\newglossaryentry{CC}{type=symbols,
	sort={CC},
	name={\ensuremath{\mathbb{C}_{\textnormal{ch}}}},
	dimension={},
	body={},
	description={Chance Constraint}
}
\newglossaryentry{u1j}{type=symbols,
	sort={u1j},
	name={\ensuremath{\bm{u}_{1:j}}},
	dimension={m},
	body={\ensuremath{\mathbb{R}}},
	description={Input vector to dynamic linear system}
}
\newglossaryentry{hiyy}{type=symbols,
	sort={hi},
	name={\ensuremath{\f{\gls{hi}}{\gls{Mkm1}}}},
	dimension={},
	body={\ensuremath{\mathbb{R}}},
	description={Measure function at step $i$; $\f{h^{(i)}}{\f{\gls{yi}}{\bm{u}^{(0:i-1)}},\gls{yri}} = \f{g^{(i)}}{\norm{\f{\gls{yi}}{\bm{u}^{(0:i-1)}},\gls{yri}}_2}$}
}
\newglossaryentry{hiyyurt}{type=symbols,
	sort={hi},
	name={\ensuremath{\f{\gls{hi}}{\gls{Mkm1urt}}}},
	dimension={},
	body={\ensuremath{\mathbb{R}}},
	description={Measure function at step $i$; $\f{h^{(i)}}{\f{\gls{yi}}{\bm{u}^{(0:i-1)}},\gls{yri}} = \f{g^{(i)}}{\norm{\f{\gls{yi}}{\bm{u}^{(0:i-1)}},\gls{yri}}_2}$}
}
\newglossaryentry{hiyy1}{type=symbols,
	sort={hi},
	name={\ensuremath{\f{\gls{hi}}{\gls{Mm1}}}},
	dimension={},
	body={\ensuremath{\mathbb{R}}},
	description={Measure function at step $i$; $\f{h^{(i)}}{\f{\gls{yi}}{\bm{u}^{(0:i-1)}},\gls{yri}} = \f{g^{(i)}}{\norm{\f{\gls{yi}}{\bm{u}^{(0:i-1)}},\gls{yri}}_2}$}
}
\newglossaryentry{hiyy1b}{type=symbols,
	sort={hi},
	name={\ensuremath{\f{\gls{hi}}{\gls{Mm1b}}}},
	dimension={},
	body={\ensuremath{\mathbb{R}}},
	description={Measure function at step $i$; $\f{h^{(i)}}{\f{\gls{yi}}{\bm{u}^{(0:i-1)}},\gls{yri}} = \f{g^{(i)}}{\norm{\f{\gls{yi}}{\bm{u}^{(0:i-1)}},\gls{yri}}_2}$}
}
\newglossaryentry{hiyy1wmax}{type=symbols,
	sort={hi},
	name={\ensuremath{\f{\gls{hi}}{\gls{Mm1}-\gls{wimax0}}}},
	dimension={},
	body={\ensuremath{\mathbb{R}}},
	description={Measure function at step $i$; $\f{h^{(i)}}{\f{\gls{yi}}{\bm{u}^{(0:i-1)}},\gls{yri}} = \f{g^{(i)}}{\norm{\f{\gls{yi}}{\bm{u}^{(0:i-1)}},\gls{yri}}_2}$}
}
\newglossaryentry{hiyy1wmaxus}{type=symbols,
	sort={hi},
	name={\ensuremath{\f{\gls{hi}}{\gls{Mm1us}-\gls{wimax0}}}},
	dimension={},
	body={\ensuremath{\mathbb{R}}},
	description={Measure function at step $i$; $\f{h^{(i)}}{\f{\gls{yi}}{\bm{u}^{(0:i-1)}},\gls{yri}} = \f{g^{(i)}}{\norm{\f{\gls{yi}}{\bm{u}^{(0:i-1)}},\gls{yri}}_2}$}
}
\newglossaryentry{hiyy1us}{type=symbols,
	sort={hi},
	name={\ensuremath{\f{\gls{hi}}{\gls{Mm1us}}}},
	dimension={},
	body={\ensuremath{\mathbb{R}}},
	description={Measure function at step $i$; $\f{h^{(i)}}{\f{\gls{yi}}{\bm{u}^{(0:i-1)}},\gls{yri}} = \f{g^{(i)}}{\norm{\f{\gls{yi}}{\bm{u}^{(0:i-1)}},\gls{yri}}_2}$}
}
\newglossaryentry{hiyy11}{type=symbols,
	sort={hi},
	name={\ensuremath{\f{\gls{hi}}{\gls{M}}}},
	dimension={},
	body={\ensuremath{\mathbb{R}}},
	description={Measure function at step $i$; $\f{h^{(i)}}{\f{\gls{yi}}{\bm{u}^{(0:i-1)}},\gls{yri}} = \f{g^{(i)}}{\norm{\f{\gls{yi}}{\bm{u}^{(0:i-1)}},\gls{yri}}_2}$}
}
\newglossaryentry{xip1}{type=symbols,
	sort={xi},
	name={\ensuremath{\bm{x}_{k+1}}},
	dimension={n},
	body={\ensuremath{\mathbb{R}}},
	description={State vector to dynamic linear system}
}
\newglossaryentry{A}{type=symbols,
	sort={A},
	name={\ensuremath{A}},
	dimension={\glsd{xi}\times \glsd{xi}},
	body={\ensuremath{\mathbb{R}}},
	description={System matrix}
}
\newglossaryentry{B}{type=symbols,
	sort={B},
	name={\ensuremath{B}},
	dimension={\glsd{xi}\times \glsd{ui}},
	body={\ensuremath{\mathbb{R}}},
	description={Input matrix}
}
\newglossaryentry{C}{type=symbols,
	sort={C},
	name={\ensuremath{C}},
	dimension={\glsd{yi}\times \glsd{xi}},
	body={\ensuremath{\mathbb{R}}},
	description={Output matrix}
}
\newglossaryentry{Nms}{type=symbols,
	sort={N},
	name={\ensuremath{\var{N}{multistep}}},
	dimension={},
	body={\ensuremath{\mathbb{N}}},
	description={Prediction horizon for calculation of \gls{success_prob}}
}
\newglossaryentry{Nmpc}{type=symbols,
	sort={Nmpc},
	name={\ensuremath{N}}, 
	dimension={},
	body={\ensuremath{\mathbb{N}}},
	description={Prediction horizon for MPC problem}
}
\newglossaryentry{di}{type=symbols,
	sort={di},
	name={\ensuremath{d_k}},
	dimension={},
	body={\ensuremath{\mathbb{R}^+}},
	description={Distance between estimated position of obstacle, \gls{yr}, and controlled object, \gls{yr}}
}
\newglossaryentry{gi}{type=symbols,
	sort={g},
	name={\ensuremath{\gls{hi}}},
	dimension={},
	body={\ensuremath{\mathbb{R}}},
	description={Alternative definition of \gls{hi}}
}
\newglossaryentry{hi}{type=symbols,
	sort={hi},
	name={\ensuremath{h}},
	dimension={},
	body={\ensuremath{\mathbb{R}}},
	description={Measure function at step $i$; $\f{h^{(i)}}{\f{\gls{yi}}{\bm{u}^{(0:i-1)}},\gls{yri}} = \f{g^{(i)}}{\norm{\f{\gls{yi}}{\bm{u}^{(0:i-1)}},\gls{yri}}_2}$}
}
\newglossaryentry{hic}{type=symbols,
	sort={hic},
	name={\ensuremath{\partial \gls{UUm3} }}, 
	dimension={\glsd{ui}},
	body={\ensuremath{\mathbb{R}}},
	description={Level set of \gls{hi}, i.e. $\f{\gls{hi}}{\gls{yi},\gls{yr}}=c, \ c \in \mathbb{R}$}
}
\newglossaryentry{hic0}{type=symbols,
	sort={hic},
	name={\ensuremath{\partial \gls{UUm30} }}, 
	dimension={\glsd{ui}},
	body={\ensuremath{\mathbb{R}}},
	description={Level set of \gls{hi}, i.e. $\f{\gls{hi}}{\gls{yi},\gls{yr}}=c, \ c \in \mathbb{R}$}
}
\newglossaryentry{hmaxi}{type=symbols,
	sort={hmaxi},
	name={\ensuremath{\vark{h}{max}}},
	dimension={},
	body={\ensuremath{\mathbb{R}}},
	description={Maximum of the value function for one step, i.e. $\gls{hmaxi}=\f{h}{\f{\bm{y}}{\gls{umaxi}},\gls{yr}}$}
}
\newglossaryentry{hmaxi1}{type=symbols,
	sort={hmaxi},
	name={\ensuremath{h_{\text{max},1}}},
	dimension={},
	body={\ensuremath{\mathbb{R}}},
	description={Maximum of the value function for one step, i.e. $\gls{hmaxi}=\f{h}{\f{\bm{y}}{\gls{umaxi}},\gls{yr}}$}
}
\newglossaryentry{hmaxil}{type=symbols,
	sort={hmaxi},
	name={\ensuremath{h_{\text{max},l}}},
	dimension={},
	body={\ensuremath{\mathbb{R}}},
	description={Maximum of the value function for one step, i.e. $\gls{hmaxi}=\f{h}{\f{\bm{y}}{\gls{umaxi}},\gls{yr}}$}
}
\newglossaryentry{hmaxi0}{type=symbols,
	sort={hmaxi},
	name={\ensuremath{h_{\text{max},0}}},
	dimension={},
	body={\ensuremath{\mathbb{R}}},
	description={Maximum of the value function for one step, i.e. $\gls{hmaxi}=\f{h}{\f{\bm{y}}{\gls{umaxi}},\gls{yr}}$}
}
\newglossaryentry{hmini}{type=symbols,
	sort={hmini},
	name={\ensuremath{\vark{h}{min}}},
	dimension={},
	body={\ensuremath{\mathbb{R}}},
	description={Minimum of the value function for one step, i.e. $\gls{hmini}=\f{h}{\f{\bm{y}}{\gls{umini}},\gls{yr}}$}
}
\newglossaryentry{hmini1}{type=symbols,
	sort={hmin},
	name={\ensuremath{h_{\text{min},1}}},
	dimension={},
	body={\ensuremath{\mathbb{R}}},
	description={Maximum of the value function for one step, i.e. $\gls{hmaxi}=\f{h}{\f{\bm{y}}{\gls{umaxi}},\gls{yr}}$}
}
\newglossaryentry{hmini0}{type=symbols,
	sort={hmin},
	name={\ensuremath{h_{\text{min},0}}},
	dimension={},
	body={\ensuremath{\mathbb{R}}},
	description={Maximum of the value function for one step, i.e. $\gls{hmaxi}=\f{h}{\f{\bm{y}}{\gls{umaxi}},\gls{yr}}$}
}
\newglossaryentry{hminil}{type=symbols,
	sort={hmin},
	name={\ensuremath{h_{\text{min},l}}},
	dimension={},
	body={\ensuremath{\mathbb{R}}},
	description={Maximum of the value function for one step, i.e. $\gls{hmaxi}=\f{h}{\f{\bm{y}}{\gls{umaxi}},\gls{yr}}$}
}
\newglossaryentry{J}{type=symbols,
	sort={J},
	name={\ensuremath{\f{J}{\hat{\bm{u}}_k;\bm{x}_k}}},
	dimension={},
	body={\ensuremath{\mathbb{R}}},
	description={Objective function of the MPC optimization problem}
}
\newglossaryentry{Oci}{type=symbols,
	sort={Oci},
	name={\ensuremath{\vark{\set{O}}{c}}},
	dimension={\glsd{yi}},
	body={\ensuremath{\glsbody{yi}}},
	description={Area in the output space occupied by the controlled object with center \gls{yi}}
}
\newglossaryentry{Ooi}{type=symbols,
	sort={Ooi},
	name={\ensuremath{\vark{\set{O}}{o}}},
	dimension={\glsd{yri}},
	body={\ensuremath{\glsbody{yri}}},
	description={Area in the output space occupied by the controlled object with center \gls{yri}}
}
\newglossaryentry{success_prob}{type=symbols,
	sort={success_prob},
	name={\ensuremath{\f{P_k}{\bm{u}_{(1:k)}}}},
	dimension={},
	body={\ensuremath{\left[0;1\right]}},
	description={Success probability for the $i$-th step, given that no collision occured in previous steps}
}
\newglossaryentry{success_prob_N}{type=symbols,
	sort={success_prob_N},
	name={\ensuremath{\prob{\bm{u}}}},
	dimension={},
	body={\ensuremath{\left[0;1\right]}},
	description={Total success probability over \gls{N} steps}
}
\newglossaryentry{conflict_prob}{type=symbols,
	sort={conflict_prob},
	name={\ensuremath{\f{\vark{P}{con}}{\bm{u}_{(1:k)}}}},
	dimension={},
	body={\ensuremath{\left[0;1\right]}},
	description={Conflict probability for the $i$-th step, given that no collision occured in previous steps; equivalent to $1-\gls{success_prob}$}
}
\newglossaryentry{conflict_prob_N}{type=symbols,
	sort={conflict_prob_N},
	name={\ensuremath{\prob[c]{\bm{u}}}},
	dimension={},
	body={\ensuremath{\left[0;1\right]}},
	description={Total conflict probability over \gls{N} steps}
}
\newglossaryentry{pconi}{type=symbols,
	sort={pconi},
	name={\ensuremath{\vark{p}{con}}},
	dimension={},
	body={\ensuremath{\mathbb{R}^+}},
	description={Conflict probability density for the $i$-th step, given that no collision occured in previous steps}
}
\newglossaryentry{pwi}{type=symbols,
	sort={pwi},
	name={\ensuremath{\var{p}{\gls{Wi}}}},
	dimension={},
	body={\ensuremath{\mathbb{R}}},
	description={Probability density function for random variable \gls{Wi}}
}
\newglossaryentry{pw1i}{type=symbols,
	sort={pw1i},
	name={\ensuremath{\varkk{p}{\bm{\rv{W}}}}},
	dimension={},
	body={\ensuremath{\mathbb{R}}},
	description={Probability density function for random variable $\bm{\rv{W}}^{(1:i)}=\sum_{j=1}^{i}\bm{\rv{W}}^{(j)}$}
}
\newglossaryentry{Rc}{type=symbols,
	sort={Rc},
	name={\ensuremath{\var{R}{c}}},
	dimension={},
	body={\ensuremath{\mathbb{R}^+}},
	description={Radius of boundary around controlled object}
}
\newglossaryentry{Ro}{type=symbols,
	sort={Ro},
	name={\ensuremath{\var{R}{o}}},
	dimension={},
	body={\ensuremath{\mathbb{R}^+}},
	description={Radius of boundary around obstacle}
}
\newglossaryentry{Rsi}{type=symbols,
	sort={Rsi},
	name={\ensuremath{\vark{R}{s}}},
	dimension={},
	body={\ensuremath{\mathbb{R}^+}},
	description={Radius defining the boundary of $\supp \gls{pconi}$}
}
\newglossaryentry{Rswi}{type=symbols,
	sort={Rswi},
	name={\ensuremath{R_{\text{s}_{\text{W}},k}}},
	dimension={},
	body={\ensuremath{\mathbb{R}^+}},
	description={Radius of support boundary of \gls{Wi}}
}
\newglossaryentry{T}{type=symbols,
	sort={T},
	name={\ensuremath{T}},
	dimension={},
	body={\ensuremath{\mathbb{R}^+}},
	description={Sampling time}
}
\newglossaryentry{Tsim}{type=symbols,
	sort={Tsim},
	name={\ensuremath{\var{T}{sim}}},
	dimension={},
	body={\ensuremath{\mathbb{R}^+}},
	description={Total simulation time}
}
\newglossaryentry{umaxi}{type=symbols,
	sort={umaxi},
	name={\ensuremath{\vark{\bm{u}}{max}}},
	dimension={},
	body={\ensuremath{\mathbb{R}}},
	description={Input vector for one step which achieves \gls{hmaxi}}
}
\newglossaryentry{umaxi0}{type=symbols,
	sort={umaxi},
	name={\ensuremath{\bm{u}_{\text{cvpm},0}}},
	dimension={},
	body={\ensuremath{\mathbb{R}}},
	description={Input vector for one step which achieves \gls{hmaxi}}
}
\newglossaryentry{umaxi0x}{type=symbols,
	sort={umaxi},
	name={\ensuremath{\bm{u}_{\text{max},0}}},
	dimension={},
	body={\ensuremath{\mathbb{R}}},
	description={Input vector for one step which achieves \gls{hmaxi}}
}
\newglossaryentry{umini}{type=symbols,
	sort={umini},
	name={\ensuremath{\vark{\bm{u}}{min}}},
	dimension={},
	body={\ensuremath{\mathbb{R}}},
	description={Input vector for one step which achieves \gls{hmini}}
}
\newglossaryentry{umini0}{type=symbols,
	sort={umini},
	name={\ensuremath{\bm{u}_{\text{min},0}}},
	dimension={},
	body={\ensuremath{\mathbb{R}}},
	description={Input vector for one step which achieves \gls{hmini}}
}
\newglossaryentry{UUpit}{type=symbols,
	sort={UUpit},
	name={\ensuremath{\var{\set{U}}{PIT}}},
	dimension={},
	body={\ensuremath{\mathbb{R}}},
	description={\gls{PIT} constraints}
}
\newglossaryentry{UUdiff}{type=symbols,
	sort={UUdiff},
	name={\ensuremath{\var{\set{U}}{diff}}},
	dimension={},
	body={\ensuremath{\mathbb{R}}},
	description={Difference constraints}
}
\newglossaryentry{uhati}{type=symbols,
	sort={uhati},
	name={\ensuremath{\hat{\bm{u}}_k}},
	dimension={},
	body={\ensuremath{\mathbb{R}}},
	description={Vector of set \gls{UU_total} at time step $i$}
}
\newglossaryentry{ui}{type=symbols,
	sort={ui},
	name={\ensuremath{\bm{u}_k}},
	dimension={m},
	body={\ensuremath{\mathbb{R}}},
	description={Input vector to dynamic linear system}
}
\newglossaryentry{ui0}{type=symbols,
	sort={ui},
	name={\ensuremath{\bm{u}_0}},
	dimension={m},
	body={\ensuremath{\mathbb{R}}},
	description={Input vector to dynamic linear system}
}
\newglossaryentry{uj}{type=symbols,
	sort={uj},
	name={\ensuremath{\bm{u}_j}},
	dimension={m},
	body={\ensuremath{\mathbb{R}}},
	description={Input vector to dynamic linear system}
}
\newglossaryentry{xj}{type=symbols,
	sort={xj},
	name={\ensuremath{\bm{x}_j}},
	dimension={m},
	body={\ensuremath{\mathbb{R}}},
	description={Input vector to dynamic linear system}
}
\newglossaryentry{UUi}{type=symbols,
	sort={UUi},
	name={\ensuremath{\set{U}}},
	dimension={\glsd{ui}},
	body={\ensuremath{\mathbb{R}}},
	description={Input set of admissable \gls{ui}}
}
\newglossaryentry{UUj}{type=symbols,
	sort={UUj},
	name={\ensuremath{\set{U}}}, 
	dimension={\glsd{ui}},
	body={\ensuremath{\mathbb{R}}},
	description={Input set of admissable \gls{ui}}
}
\newglossaryentry{UUti}{type=symbols,
	sort={UUti},
	name={\ensuremath{\f{\hat{\set{U}}_k}{\gls{pp}} }}, 
	dimension={\glsd{ui}},
	body={\ensuremath{\mathbb{R}}},
	description={Approximation of non-convex set using hyperplane with reference point $\bm{p}$}
}
\newglossaryentry{UUti0}{type=symbols,
	sort={UUti},
	name={\ensuremath{\f{\hat{\set{U}}_0}{\gls{pp}} }}, 
	dimension={\glsd{ui}},
	body={\ensuremath{\mathbb{R}}},
	description={Approximation of non-convex set using hyperplane with reference point $\bm{p}$}
}
\newglossaryentry{UUkp1}{type=symbols,
	sort={UUti},
	name={\ensuremath{\tilde{\set{U}}_k^N }},
	dimension={\glsd{ui}},
	body={\ensuremath{\mathbb{R}}},
	description={UkN without first input vector}
}
\newglossaryentry{UUNmpc}{type=symbols,
	sort={UUNmpc},
	name={\ensuremath{\set{U}_k^N }}, 
	dimension={\glsd{ui}\cdot\gls{Nmpc}},
	body={\ensuremath{\mathbb{R}}},
	description={Given set of admissable inputs, excluding \gls{UUopt}}
}
\newglossaryentry{uNmpc}{type=symbols,
	sort={uNmpc},
	name={\ensuremath{\bm{U}_{0} }}, 
	dimension={\glsd{ui}\cdot\gls{Nmpc}},
	body={\ensuremath{\mathbb{R}}},
	description={Input sequence for the \gls{Nmpc} step MPC problem}
}
\newglossaryentry{uNmpck}{type=symbols,
	sort={uNmpc},
	name={\ensuremath{\bm{U}_{k} }}, 
	dimension={\glsd{ui}\cdot\gls{Nmpc}},
	body={\ensuremath{\mathbb{R}}},
	description={Input sequence for the \gls{Nmpc} step MPC problem}
}
\newglossaryentry{UUopt}{type=symbols,
	sort={UUopt},
	name={\ensuremath{\var{\set{U}}{opt}^N }}, 
	dimension={\glsd{ui}\cdot \gls{N}},
	body={\ensuremath{\mathbb{R}}},
	description={Set containing $\bm{u}$ which maximize success probability for \gls{N} steps}
}
\newglossaryentry{UUc}{type=symbols, 
	sort={UUX},
	name={\ensuremath{\mathbb{U}_k }},
	dimension={\glsd{ui}\cdot \gls{Nmpc}},
	body={\ensuremath{\mathbb{R}}},
	description={Set containing all \gls{uNmpc} which fulfill the state constraint \gls{XX} at each time instant}
}
\newglossaryentry{UUxc}{type=symbols, 
	sort={UUX},
	name={\ensuremath{\mathbb{U}_{\bm{x},k}}},
	dimension={\glsd{ui}\cdot \gls{Nmpc}},
	body={\ensuremath{\mathbb{R}}},
	description={Set containing all \gls{uNmpc} which fulfill the state constraint \gls{XX} at each time instant}
}
\newglossaryentry{UUxc0}{type=symbols, 
	sort={UUX},
	name={\ensuremath{\mathbb{U}_{\bm{x},0}}},
	dimension={\glsd{ui}\cdot \gls{Nmpc}},
	body={\ensuremath{\mathbb{R}}},
	description={Set containing all \gls{uNmpc} which fulfill the state constraint \gls{XX} at each time instant}
}
\newglossaryentry{UUxcj}{type=symbols, 
	sort={UUX},
	name={\ensuremath{\mathbb{U}_{\bm{x},j}}},
	dimension={\glsd{ui}\cdot \gls{Nmpc}},
	body={\ensuremath{\mathbb{R}}},
	description={Set containing all \gls{uNmpc} which fulfill the state constraint \gls{XX} at each time instant}
}
\newglossaryentry{UUm3}{type=symbols, 
	sort={UUX},
	name={\ensuremath{\mathbb{U}_{\text{mode3},k}}},
	dimension={\glsd{ui}\cdot \gls{Nmpc}},
	body={\ensuremath{\mathbb{R}}},
	description={Set containing all \gls{uNmpc} which fulfill the state constraint \gls{XX} at each time instant}
}
\newglossaryentry{UUm30}{type=symbols, 
	sort={UUX},
	name={\ensuremath{\mathbb{U}_{\text{mode3},0}}},
	dimension={\glsd{ui}\cdot \gls{Nmpc}},
	body={\ensuremath{\mathbb{R}}},
	description={Set containing all \gls{uNmpc} which fulfill the state constraint \gls{XX} at each time instant}
}
\newglossaryentry{UUX}{type=symbols,
	sort={UUX},
	name={\ensuremath{\set{U}_{\gls{XX}}}},
	dimension={\glsd{ui}\cdot \gls{Nmpc}},
	body={\ensuremath{\mathbb{R}}},
	description={Set containing all \gls{uNmpc} which fulfill the state constraint \gls{XX} at each time instant}
}
\newglossaryentry{UU_total}{type=symbols,
	sort={UU_total},
	name={\ensuremath{\mathbb{U}_k^* }}, 
	dimension={\glsd{ui}\cdot \gls{Nmpc}},
	body={\ensuremath{\mathbb{R}}},
	description={Set of admissable inputs $\bm{u}$ for the complete \gls{MPC} problem given previous input $\hat{\bm{u}}_i^{(0)}$}
}
\newglossaryentry{UU_total0}{type=symbols,
	sort={UU_total},
	name={\ensuremath{\mathbb{U}_0^* }}, 
	dimension={\glsd{ui}\cdot \gls{Nmpc}},
	body={\ensuremath{\mathbb{R}}},
	description={Set of admissable inputs $\bm{u}$ for the complete \gls{MPC} problem given previous input $\hat{\bm{u}}_i^{(0)}$}
}
\newglossaryentry{UUopti}{type=symbols,
	sort={UUopti},
	name={\ensuremath{\vark{\set{U}}{cvpm}}}, 
	dimension={\glsd{ui}},
	body={\ensuremath{\mathbb{R}}},
	description={Set containing \gls{ui} which maximize success probability for one step}
}
\newglossaryentry{UUopti0}{type=symbols,
	sort={UUopti},
	name={\ensuremath{\set{U}_{\text{cvpm},0}}}, 
	dimension={\glsd{ui}},
	body={\ensuremath{\mathbb{R}}},
	description={Set containing \gls{ui} which maximize success probability for one step}
}
\newglossaryentry{apUUopti}{type=symbols,
	sort={apUUopti},
	name={\ensuremath{\vark{\hat{\set{U}}}{cvpm}}},
	dimension={\glsd{ui}},
	body={\ensuremath{\mathbb{R}}},
	description={Approximated Set containing \gls{ui} which maximize success probability for one step}
}
\newglossaryentry{apUUopti0}{type=symbols,
	sort={apUUopti},
	name={\ensuremath{\varz{\hat{\set{U}}}{cvpm}}},
	dimension={\glsd{ui}},
	body={\ensuremath{\mathbb{R}}},
	description={Approximated Set containing \gls{ui} which maximize success probability for one step}
}
\newglossaryentry{Wi}{type=symbols,
	sort={Wi},
	name={\ensuremath{\bm{\rv{W}}_k}},
	dimension={\glsd{yi}},
	body={\ensuremath{\mathbb{R}}},
	description={I.i.d. random variable $\forall i$ representing the uncertainty in the obstacle output}
}
\newglossaryentry{Wii}{type=symbols,
	sort={Wi},
	name={\ensuremath{\bm{\rv{W}}_i}},
	dimension={\glsd{yi}},
	body={\ensuremath{\mathbb{R}}},
	description={I.i.d. random variable $\forall i$ representing the uncertainty in the obstacle output}
}
\newglossaryentry{Xri}{type=symbols,
	sort={Xri},
	name={\ensuremath{\vark{\bm{\rv{X}}}{r}}},
	dimension={\glsd{xri}},
	body={\ensuremath{\mathbb{R}}},
	description={Random variable representing uncertain obstacle state at time $i$}
} 
\newglossaryentry{xri}{type=symbols,
	sort={xri},
	name={\ensuremath{\vark{\bm{x}}{r}}},
	dimension={r},
	body={\ensuremath{\mathbb{R}}},
	description={Realization of \gls{Xri}}
}
\newglossaryentry{XXri}{type=symbols,
	sort={XXri},
	name={\ensuremath{\vark{\set{X}}{r}}},
	dimension={\glsd{xri}},
	body={\ensuremath{\mathbb{R}}},
	description={Set of possible realizations \gls{xri} of the random variable \gls{Xri}}
}
\newglossaryentry{XX}{type=symbols,
	sort={XX},
	name={\ensuremath{\set{X}}},
	dimension={n},
	body={\ensuremath{\mathbb{R}}},
	description={Set of admissable \gls{xi} for each step}
}
\newglossaryentry{XXtilde}{type=symbols,
	sort={XX},
	name={\ensuremath{\tilde{\gls{XX}}}},
	dimension={n},
	body={\ensuremath{\mathbb{R}}},
	description={Set of admissable \gls{xi} for each step}
}
\newglossaryentry{xi}{type=symbols,
	sort={xi},
	name={\ensuremath{\bm{x}_0}},
	dimension={n},
	body={\ensuremath{\mathbb{R}}},
	description={State vector to dynamic linear system}
}
\newglossaryentry{yRi}{type=symbols,
	sort={yRi},
	name={\ensuremath{\vark{\bm{y}}{R}}},
	dimension={},
	body={\ensuremath{\mathbb{R}}},
	description={Reference point for the controlled object at time step $i$}
}
\newglossaryentry{yi}{type=symbols,
	sort={yi},
	name={\ensuremath{\bm{y}_k}},
	dimension={q},
	body={\ensuremath{\mathbb{R}}},
	description={Output vector of dynamic linear system}
}
\newglossaryentry{yip1}{type=symbols,
	sort={yi},
	name={\ensuremath{\bm{y}_{k+1}}},
	dimension={q},
	body={\ensuremath{\mathbb{R}}},
	description={Output vector of dynamic linear system}
}
\newglossaryentry{yj}{type=symbols,
	sort={yi},
	name={\ensuremath{\bm{y}_j}},
	dimension={q},
	body={\ensuremath{\mathbb{R}}},
	description={Output vector of dynamic linear system}
}
\newglossaryentry{yjp1}{type=symbols,
	sort={yi},
	name={\ensuremath{\bm{y}_{j+1}}},
	dimension={q},
	body={\ensuremath{\mathbb{R}}},
	description={Output vector of dynamic linear system}
}
\newglossaryentry{yi1}{type=symbols,
	sort={yi},
	name={\ensuremath{\bm{y}_1}},
	dimension={q},
	body={\ensuremath{\mathbb{R}}},
	description={Output vector of dynamic linear system}
}
\newglossaryentry{YYi}{type=symbols,
	sort={YYi},
	name={\ensuremath{\set{Y}_k}},
	dimension={\glsd{yi}},
	body={\ensuremath{\mathbb{R}}},
	description={Output set/space of admissible \gls{yi}}
}
\newglossaryentry{yri}{type=symbols,
	sort={yri},
	name={\ensuremath{\vark{\bm{y}}{r}}},
	dimension={\glsd{yi}},
	body={\ensuremath{\mathbb{R}}},
	description={Realization of \gls{Yri}}
}
\newglossaryentry{yrib}{type=symbols,
	sort={yri},
	name={\ensuremath{\overline{\bm{y}}_{\text{r},k}}},
	dimension={\glsd{yi}},
	body={\ensuremath{\mathbb{R}}},
	description={Realization of \gls{Yri}}
}
\newglossaryentry{uri}{type=symbols,
	sort={yri},
	name={\ensuremath{\vark{\bm{u}}{r}}},
	dimension={\glsd{yi}},
	body={\ensuremath{\mathbb{R}}},
	description={Realization of \gls{Yri}}
}
\newglossaryentry{yrj}{type=symbols,
	sort={yri},
	name={\ensuremath{\varj{\bm{y}}{r}}},
	dimension={\glsd{yi}},
	body={\ensuremath{\mathbb{R}}},
	description={Realization of \gls{Yri}}
}
\newglossaryentry{yri0}{type=symbols,
	sort={yri},
	name={\ensuremath{\varz{\bm{y}}{r}}},
	dimension={\glsd{yi}},
	body={\ensuremath{\mathbb{R}}},
	description={Realization of \gls{Yri}}
}
\newglossaryentry{yrip1}{type=symbols,
	sort={yri},
	name={\ensuremath{\bm{y}_{\text{r},k+1}}},
	dimension={\glsd{yi}},
	body={\ensuremath{\mathbb{R}}},
	description={Realization of \gls{Yri}}
}
\newglossaryentry{byrip1}{type=symbols,
	sort={yri},
	name={\ensuremath{\overline{\bm{y}}_{\text{r},k+1}}},
	dimension={\glsd{yi}},
	body={\ensuremath{\mathbb{R}}},
	description={Realization of \gls{Yri}}
}
\newglossaryentry{byri}{type=symbols,
	sort={yri},
	name={\ensuremath{\overline{\bm{y}}_{\text{r},k}}},
	dimension={\glsd{yi}},
	body={\ensuremath{\mathbb{R}}},
	description={Realization of \gls{Yri}}
}
\newglossaryentry{yr}{type=symbols,
	sort={yr},
	name={\ensuremath{\var{\overline{\bm{y}}}{r}}},
	dimension={\glsd{yi}},
	body={\ensuremath{\mathbb{R}}},
	description={Known initial obstacle output, i.e. $\var{\set{Y}}{r}^{(0)}=\var{\bar{\bm{y}}}{r}$}
}
\newglossaryentry{Yri}{type=symbols,
	sort={Yri},
	name={\ensuremath{\vark{\bm{\rv{Y}}}{r}}},
	dimension={\glsd{yi}},
	body={\ensuremath{\mathbb{R}}},
	description={Random variable representing the uncertain obstacle output at time $i$}
}
\newglossaryentry{UUcvpml}{type=symbols,
	sort={},
	name={\ensuremath{\mathcal{U}_{\text{cvpm},0:\gls{n}-1}}},
	dimension={\glsd{yi}},
	body={\ensuremath{\mathbb{R}}},
	description={Random variable representing the uncertain obstacle output at time $i$}
}
\newglossaryentry{apUUcvpml}{type=symbols,
	sort={},
	name={\ensuremath{\hat{\mathcal{U}}_{\text{cvpm},0:\gls{n}-1}}},
	dimension={\glsd{yi}},
	body={\ensuremath{\mathbb{R}}},
	description={Random variable representing the uncertain obstacle output at time $i$}
}
\newglossaryentry{Ucvpml}{type=symbols,
	sort={},
	name={\ensuremath{\bm{U}_{0:\gls{n}-1}}},
	dimension={\glsd{yi}},
	body={\ensuremath{\mathbb{R}}},
	description={Random variable representing the uncertain obstacle output at time $i$}
}
\newglossaryentry{UUxci}{type=symbols, 
	sort={UUX},
	name={\ensuremath{\mathbb{U}_{\bm{x},i}}},
	dimension={\glsd{ui}\cdot \gls{Nmpc}},
	body={\ensuremath{\mathbb{R}}},
	description={Set containing all \gls{uNmpc} which fulfill the state constraint \gls{XX} at each time instant}
}
\newglossaryentry{Mlm1}{type=symbols,
	sort={M},
	name={\ensuremath{\norm{\bm{y}_{\gls{n}}-\bm{y}_{\text{r},0}}_2}},
	dimension={},
	body={\ensuremath{\mathbb{R}^+}},
	description={Measure}
}
\newglossaryentry{Mlm1b}{type=symbols,
	sort={M},
	name={\ensuremath{\norm{\bm{y}_{\gls{n}}-\overline{\bm{y}}_{\text{r},\gls{n}}}_2}},
	dimension={},
	body={\ensuremath{\mathbb{R}^+}},
	description={Measure}
}
\newglossaryentry{Pcvj}{type=symbols,
	sort={pcon},
	name={\ensuremath{p_{\text{cv},j}}},
	dimension={},
	body={\ensuremath{\mathbb{R}^+}},
	description={Conflict probability}
}
\newglossaryentry{Pcvjp1}{type=symbols,
	sort={pcon},
	name={\ensuremath{p_{\text{cv},j+1}}},
	dimension={},
	body={\ensuremath{\mathbb{R}^+}},
	description={Conflict probability}
}
\newglossaryentry{Pcvjp1att}{type=symbols,
	sort={pcon},
	name={\ensuremath{p_{\text{cv},j+1}\gls{attj}}},
	dimension={},
	body={\ensuremath{\mathbb{R}^+}},
	description={Conflict probability}
}
\newglossaryentry{jl}{type=symbols,
	sort={jl},
	name={\ensuremath{j \in \mathbb{I}_{0:\gls{n}-1}}},
	dimension={},
	body={\ensuremath{\mathbb{R}^+}},
	description={Conflict probability}
}
\newglossaryentry{il}{type=symbols,
	sort={jl},
	name={\ensuremath{i \in \mathbb{I}_{0:\gls{n}-1}}},
	dimension={},
	body={\ensuremath{\mathbb{R}^+}},
	description={Conflict probability}
}
\newglossaryentry{uii}{type=symbols,
	sort={ui},
	name={\ensuremath{\bm{u}_i}},
	dimension={m},
	body={\ensuremath{\mathbb{R}}},
	description={Input vector to dynamic linear system}
}
\newglossaryentry{Pcvl}{type=symbols,
	sort={pcon},
	name={\ensuremath{p_{\text{cv},l}}},
	dimension={},
	body={\ensuremath{\mathbb{R}^+}},
	description={Conflict probability}
}
\newglossaryentry{vector}{type=notation,
	sort={vector},
	name={\ensuremath{\bm{z}_k}},
	description={Vector at timestep $i$}
}
\newglossaryentry{seqi}{type=notation,
	sort={seqi},
	name={\ensuremath{(\cdot)_{(1:k)}}},
	description={Combined variable up to $i$-th step if $(\cdot)^{(i)}$ exists; e.g.: $\bm{z}^{(1:i)}=\begin{bmatrix}\bm{z}^{(1)} & \cdots & \bm{z}^{(i)}\end{bmatrix}^\T$}
}	
\newglossaryentry{seqN}{type=notation,
	sort={seqN},
	name={\ensuremath{(\cdot)}},
	description={Combined variable for \gls{N}/\gls{Nmpc} steps if $(\cdot)^{(i)}$ exists; e.g.: $\bm{z}=\begin{bmatrix}\bm{z}^{(1)} & \cdots & \bm{z}^{(N)}\end{bmatrix}^\T$}
}	
\newglossaryentry{seqNi}{type=notation,
	sort={seqNi},
	name={\ensuremath{(\cdot)_k}},
	description={Combined variable for \gls{Nmpc} steps if $(\cdot)^{(i)}$ exists; e.g.: $\bm{z}_i=\begin{bmatrix}\bm{z}_i^{(i+1)} & \cdots & \bm{z}_i^{(i+N)}\end{bmatrix}^\T$}
}	
\newglossaryentry{matrix}{type=notation,
	sort={matrix},
	name={\ensuremath{Z}},
	description={Matrix}
}		
\newglossaryentry{rv}{type=notation,
	sort={rv},
	name={\ensuremath{\bm{\rv{Z}}_k}},
	description={Random vector at timestep $i$}
}	
\newglossaryentry{set}{type=notation,
	sort={set},
	name={\ensuremath{\set{Z}_k}},
	description={Set of admissable \gls{vector}}
}
\newglossaryentry{I}{type=notation,
	sort={I},
	name={\ensuremath{I_{k\times k}}},
	description={$k$-by-$k$ identity matrix}
}
\title{\LARGE \bf
Minimization of Constraint Violation Probability\\ in Model Predictive Control
}
\author{Tim~Br\"udigam,~Victor~Ga{\ss}mann,~Dirk~Wollherr,~and~Marion~Leibold
\thanks{T. Br\"udigam, D. Wollherr, M. Leibold are with the Chair of Automatic Control Engineering at the Technical University of Munich, Germany.
{\tt\small \{tim.bruedigam; dw; marion.leibold\}@tum.de}} 
\thanks{V. Ga{\ss}mann is with the Chair of Robotics, Artificial Intelligence and Real-time Systems at the Technical University of Munich, Germany.
{\tt\small victor.gassmann@tum.de}} 
}
\begin{document}

\maketitle
\thispagestyle{empty}
\pagestyle{empty}

\begin{abstract}
While Robust Model Predictive Control considers the worst-case system uncertainty, Stochastic Model Predictive Control, using chance constraints, provides less conservative solutions by allowing a certain constraint violation probability depending on a predefined risk parameter. However, for safety-critical systems it is not only important to bound the constraint violation probability but to reduce this probability as much as possible. Therefore, an approach is necessary that minimizes the constraint violation probability while ensuring that the Model Predictive Control optimization problem remains feasible.
We propose a novel Model Predictive Control scheme that yields a solution with minimal constraint violation probability for a norm constraint in an environment with uncertainty. After minimal constraint violation is guaranteed the solution is then also optimized with respect to other control objectives. Further, it is possible to account for changes over time of the support of the uncertainty. We first present a general method and then provide an approach for uncertainties with symmetric, unimodal probability density function. Recursive feasibility and convergence of the method are proved. 
A simulation example demonstrates the effectiveness of the proposed method.
\end{abstract}

\section{Introduction}
\label{sec:introduction}

\vspace{-9.4cm}
\mbox{\small This~is~the~pre-peer~reviewed~version~of~the~work~published~by~the~International~Journal~of~Robust~and~Nonlinear~Control.}
\vspace{9.0cm}

\vspace{-9.0cm}
\mbox{\small The~published~version~is~available~at~https://doi.org/10.1002/rnc.5636.}
\vspace{8.6cm}

Autonomous systems in safety-critical applications, such as autonomous driving or human-robot interaction, depend on controllers that are able to safely and efficiently cope with uncertainties. In these applications autonomous vehicles and robots must avoid collisions to ensure safety, while it is also of interest to optimize other objectives for efficiency, e.g., energy consumption. \gls{MPC} is a promising method for this problem setup as it is an online optimization that has the ability to cope with hard constraints.

Classic \gls{MPC} methods deal well with deterministic systems and provide guarantees for stability as well as recursive feasibility \cite{MayneEtalScokaert2000, PrimbsNevistic2000,Gruene2012}, where recursive feasibility ensures that the \gls{MPC} optimization problem remains feasible at future time steps if it is initially feasible. 

More advanced \gls{MPC} algorithms are necessary in the presence of uncertainty in the system. \gls{RMPC} methods provide control laws that satisfy the control objectives and constraints by accounting for the worst-case realization of the uncertainty, assuming that the bound, i.e., the support, of the probability distribution for the uncertainty is known a priori \cite{MayneSeronRakovic2005,MagniEtalAllgoewer2003}. Among the many approaches to ensure robustness, summarized by Mayne \cite{Mayne2014}, tube-based \gls{MPC} \cite{LangsonEtalMayne2004,MayneEtalFalugi2011,NikouDimarogonas2019} has received much attention. In tube-based \gls{MPC} it is required that the closed-loop trajectories remain within a tube to guarantee constraint satisfaction for all possible uncertainty realizations. A major drawback of \gls{RMPC} is its conservative control law due to accounting for the worst-case uncertainty realization. This can be problematic in applications with high levels of uncertainty, e.g., autonomous driving in dense traffic. 

This issue is addressed by \gls{SMPC} methods, which exploit knowledge of the uncertainty by introducing probabilistic constraints and potentially applying an expectation value based objective function. \gls{SMPC} methods, instead of considering the worst possible uncertainty realization as in \gls{RMPC}, introduce a risk parameter that specifies how likely a constraint violation may be. This probabilistic constraint is referred to as a chance constraint. In many applications it is acceptable to allow a small probability of constraint violation. This results in a positive effect on performance, as the control law is no longer required to account for unlikely uncertainty realizations. Extensive summaries of diverse \gls{SMPC} methods are given by Mesbah \cite{Mesbah2016} and specifically for linear \gls{SMPC} by Farina et al. \cite{FarinaGiulioniScattolini2016}. Stability of \gls{SMPC} without a terminal constraint is addressed by Lorenzen et al. \cite{LorenzenMuellerAllgoewer2019} while a performance analysis comparing \gls{MPC} and \gls{SMPC} is presented by Seron et al. \cite{SeronGoodwinCarrasco2018}.

In general, it is difficult to handle probabilistic constraints directly, requiring a transformation into a tractable deterministic representation of the chance constraint. If the uncertainty has a Gaussian distribution, analytic expressions can be determined \cite{SchwarmNikolaou1999}. These methods, however, are not applicable if the uncertainty does not have a Gaussian distribution or if it is unknown. Among various other methods used to cope with probabilistic constraints are the particle and the scenario approach. The particle \gls{SMPC} method \cite{BlackmoreEtalWilliams2010} is able to handle arbitrary probability distributions by sampling a finite number of particles to approximate the uncertainty, allowing only a certain percentage of the sampled particles to violate the chance constraint, depending on the risk parameter. In comparison, in \gls{SCMPC} \cite{SchildbachEtalMorari2014} a required number of samples, called scenarios, is obtained given a specified risk parameter, using the scenario approach \cite{CalafioreCampi2006,CampiGaratti2011}. Then, the chance constraint must be satisfied for all drawn scenarios. A low risk parameter yields a large number of necessary scenarios to be considered.

However, the benefits of using chance constraints come with the disadvantage of constraint violations if unlikely uncertainty realizations occur. This also leads to the problem of ensuring recursive feasibility, i.e., guaranteeing that the \gls{MPC} optimization problem remains solvable at every step. In case of an uncertainty realization with low probability, there might not be an admissible control input that can satisfy the chance constraints. Recursive feasibility also becomes an issue if the maximal uncertainty value is not constant, i.e., the support of the uncertainty probability density function changes over time. 

Recursive feasibility of \gls{SMPC} for bounded disturbances is addressed by Korda et al. \cite{KordaEtalOldewurtel2011}. A further approach addressing recursive feasibility in \gls{SMPC} are stochastic tube methods \cite{KouvaritakisEtalCheng2010,CannonEtalCheng2011} using constraint tightening. Lorenzen et al. \cite{LorenzenEtalAllgoewer2017} suggested an approach that combines the works of Korda et al. \cite{KordaEtalOldewurtel2011} and Kouvaritakis et al. \cite{KouvaritakisEtalCheng2010} where a tuning parameter is introduced that allows for shifting priority between performance and an increased feasible region. Recursive feasibility in \gls{SMPC} for probabilistically constrained Markovian jump linear systems is addressed by Lu et al. \cite{LuXiLi2019}.

Due to its ability to efficiently cope with environments subject to uncertainty, \gls{SMPC} has become increasingly popular in applications such as process control \cite{SchwarmNikolaou1999,JuradoEtalRubio2015}, energy control \cite{OldewurtelEtalMorari2014} and power systems \cite{KumarEtalZavala2018,JiangEtalDong2019}, finance \cite{GrafPlessenEtalBemporad2019}, general automotive applications \cite{RipaccioliEtalKolmanovsky2010}, as well as more specifically safety-critical applications, e.g., path planning \cite{BlackmoreEtalWilliams2010} and autonomous driving \cite{CarvalhoEtalBorrelli2014,SchildbachBorrelli2015,LenzEtalKnoll2015,CesariEtalBorrelli2017,BruedigamEtalWollherr2018b,SuhChaeYi2018}. However, the possible constraint violation and the resulting infeasibility of the optimization problem are limiting factors when designing an efficient \gls{SMPC} algorithm in practice, especially in safety-critical applications. 

A further drawback of chance constraints in \gls{SMPC} appears if the optimal solution is `on the chance constraint' even though other solutions are possible with no or only minimal effect on the cost function. In other words, a solution of the \gls{SMPC} optimization problem minimizes the cost function and satisfies the required probability for the chance constraint. There might be other solutions with low cost that have a chance constraint violation probability less than required by the risk parameter or even zero. However, the \gls{SMPC} optimization problem is solved once a solution is found with minimal cost and that satisfies the chance constraint, given the risk parameter. This means that the solution with a lower constraint violation probability is not found. Additionally, choosing a suitable risk parameter is challenging, as high values increase risk while low values reduce efficiency.

These issues are especially relevant in safety-critical systems. One example is an autonomous vehicle that plans to avoid collisions in an uncertain environment, e.g., a car avoiding a bicycle with uncertain behavior. If the support of the uncertainty is not known a priori, \gls{RMPC} algorithms are either not applicable or require that the vehicle does not move until all surrounding vehicles are distanced enough. This, however, is not practical. Therefore, the collision constraint, realized with a norm constraint, could be transformed into a chance constraint in an \gls{SMPC} approach, allowing a small collision probability. While this yields a more efficient solution than \gls{RMPC}, it might result in a collision. However, an autonomous vehicle must always choose the safest, sensible trajectory, even if it comes at the cost of increased energy or longer travel time. Further, if the chance constraint in \gls{SMPC} cannot be satisfied anymore because an unlikely scenario occurred or the uncertainty support changed, the optimization problem becomes infeasible. Alternative control laws, e.g., full breaking, and recovery strategies can then be used to regain a feasible controller. However, in such scenarios where the chance constraint cannot be satisfied, the controller ideally yields the safest solution possible, which is not guaranteed with standard recovery strategies. In the example of the autonomous vehicle this is the solution with the lowest collision probability.

In this paper we propose a novel \gls{MPC} strategy for linear, discrete-time systems which not only satisfies general hard constraints over the entire prediction horizon, but additionally minimizes the probability of violating a norm constraint in the next predicted step, while also optimizing for other control objectives. This is achieved by first calculating a set that constrains the system inputs such that only those inputs are allowed which minimize the constraint violation probability. This is then followed by an optimization problem which optimizes further required objectives such as fast reference tracking or energy consumption. In this subsequent optimization problem only those inputs are admissible which minimize the norm constraint violation probability. The proposed method can handle time-varying bounds for the support of the system uncertainty and considers hard constraints on the state and input for the entire prediction horizon, e.g., due to actuator limitations. 

We will first present the general method to minimize constraint violation probability. For the general method it can be difficult to determine a tightened set of admissible inputs which guarantee minimal constraint violation probability. Therefore, we suggest an approach which allows the computation of this tightened input set, given uncertainties with symmetric, unimodal probability density function, i.e., the relative likelihood of uncertainty realizations decreases with increased distance to the mean. This tractable approach yields a convex set of inputs which minimize the constraint violation probability. Guarantees are provided for recursive feasibility and convergence of the proposed \gls{MPC} algorithm. In the following we will refer to the proposed method as CVPM-\gls{MPC}, i.e., \gls{MPC} with \gls{CVPM}. A simulation for a vehicle collision avoidance scenario is shown to display the effectiveness of the proposed method and highlight its advantages compared to \gls{SMPC} and \gls{RMPC}. 

In summary, the contribution is as follows:
\begin{itemize}
\item Proposition of a general CVPM-\gls{MPC} method to minimize the constraint violation probability for the next predicted step, while satisfying state and input constraints and optimizing further objectives
\item Derivation of a CVPM-\gls{MPC} approach for uncertainties with symmetric, unimodal probability density function 
\item Guarantee of recursive feasibility and convergence 
\end{itemize}

The proposed CVPM-\gls{MPC} method can be beneficial to multiple applications, especially to safety-critical applications such as autonomous driving or human-robot interaction where the risk measure regarding collision is norm-based \cite{CarvalhoEtalBorrelli2014,SchulmanEtalAbbeel2014,ZanchettinEtalMatthias2016}. In these safety-critical applications there is a clear priority on maximizing safety, i.e., the constraint violation probability of safety constraints needs to be minimal, before optimizing other objectives, e.g., energy consumption. 

While in general it is possible to minimize the constraint violation probability not only for the first step but for multiple steps, this significantly increases conservatism, resulting in solutions which are more similar to \gls{RMPC} solutions. Minimizing the first step probability iteratively yields the advantage of safer solutions than \gls{SMPC} and less conservatism compared to \gls{RMPC}. We therefore focus on iteratively minimizing the constraint violation probability for the next step, i.e., the first predicted \gls{MPC} step, however, a solution approach for a multi-step CVPM-\gls{MPC} method is also provided. \\

The remainder of the paper is structured as follows. Section~\ref{sec:problem} introduces the system to be considered, the uncertainty, and the control objective. The proposed method is introduced in Section \ref{sec:method}, first focusing on minimizing the constraint violation probability, then introducing the resulting \gls{MPC} algorithm. Section \ref{sec:properties} analyzes the properties of the suggested method, guarantees on recursive feasibility and convergence, while the CVPM-\gls{MPC} method is discussed in Section \ref{sec:discussion}. An example of the applied method is given in Section \ref{sec:results}, simulating a vehicle collision avoidance scenario. Section \ref{sec:conclusion} provides conclusive remarks.\\

\textit{Notation:} Regular letters indicate scalars, bold lowercase letters denote vectors, and bold uppercase letters are used for matrices, e.g., $a$, $\bm{a}$, $\bm{A}$, respectively. Random variables are represented by bold uppercase letters. The Euclidean norm is denoted by~$\norm{.}_2$. The probability of an event is given by $\mathbb{P}(.)$. A probability distribution is denoted by $p$ and is described by the probability density function $f$ if a probability density function exists. The probability distribution and density function have support $\supp(p)$ and $\supp(f)$, respectively. Step $k$ of a state or parameter is represented by a subscript $k$, e.g., $\bm{x}_k$ for state $\bm{x}$. The integers in the interval between $a$ and $b$, including the boundaries, are denoted by $\mathbb{I}_{a:b}$. 

\section{Problem Setup}
\label{sec:problem}
In this section we define the system class and the general \gls{MPC} algorithm. Additionally, a probabilistic norm constraint is introduced.
\subsection{System Dynamics and Control Objective}

Consider the controlled linear, time-invariant, discrete-time system
\begin{IEEEeqnarray}{rl}
\IEEEyesnumber \label{eq:system}
\bm{x}_{k+1} &= \bm{A}\gls{xk}+\bm{B} \gls{ui}, \IEEEyessubnumber \label{eq:systemAB}\\
\bm{y}_{k} &= \bm{C}\gls{xk} \IEEEyessubnumber \label{eq:systemC}
\end{IEEEeqnarray}
with time step $k$, states $\gls{xk} \in \mathbb{R}^n$, control input $\gls{ui} \in \mathbb{R}^m$, output $\gls{yi} \in \mathbb{R}^q$, and matrices $\bm{A} \in \mathbb{R}^{n\times n}$, $\bm{B} \in \mathbb{R}^{n\times m}$, $\bm{C}~\in~\mathbb{R}^{q\times n}$.

Furthermore, we consider an uncertain system
\begin{IEEEeqnarray}{rcl}
\IEEEyesnumber \label{eq:unc_system}
\gls{yrip1} =& \gls{yri} +\gls{uri} &+ \gls{wi} \IEEEyessubnumber \label{eq:unc_system1} \\
=& \gls{byrip1} &+ \gls{wi} \IEEEyessubnumber \label{eq:unc_system2}
\end{IEEEeqnarray}
depending on the output $\gls{yri}\in \mathbb{R}^q$ at step $k$, a deterministic, known input $\gls{uri}\in \mathbb{R}^q$, and a stochastic part $\gls{wi}\in \mathbb{R}^q$ which is the realization of a random variable $\gls{Wi}$. The nominal prediction of \gls{yrip1} is indicated by $\gls{byrip1}=\gls{yri} +\gls{uri} $, consisting of the previous output \gls{yri} and the deterministic, known input \gls{uri}.

\begin{assumption}[Uncertainty Properties]
\label{ass:uncprop}
The random variables $\gls{Wi}\lr{\gls{wi}} \sim \gls{fPw}$ with the probability distribution $\gls{Pw}$ and density function $\gls{fPw}$ have zero mean and are truncated with the initially known, convex and bounded support $\supp\lr{\gls{fPw}}$.
\end{assumption}

The support of $\gls{fPw}$ is given by 
\begin{IEEEeqnarray}{c}
\supp\lr{\gls{fPw}}= \setdef[\bm{w}_k]{\norm{\bm{w}_k}_2 \leq \gls{wimax} }
\label{eq:supp_pw}
\end{IEEEeqnarray}
where $\gls{wimax} \in \mathbb{R}_{\geq0}$.

The controller for \eqref{eq:system} is designed to (approximately) optimize an infinite horizon objective function, while accounting for input and state constraints. For an initial state $\bm{x}_k$ at time step $k$ the objective function to be minimized is
 \begin{IEEEeqnarray}{c}
V_\infty = \sum_{j=0}^\infty \lr{ \bm{x}_{k+j}^\top \bm{Q} \bm{x}_{k+j} + \bm{u}_{k+j}^\top \bm{R} \bm{u}_{k+j} }\label{eq:infcost}
\end{IEEEeqnarray}
with $\bm{Q} \in \mathbb{R}^{n \times n}$, $\bm{R} \in \mathbb{R}^{m \times m}$ and $\bm{Q} \succeq 0$, $\bm{R} \succ 0$.

In the following the index $k$ represents regular time steps while the index $j$ indicates prediction steps within an \gls{MPC} optimization problem, similar to \eqref{eq:infcost}.

\subsection{Model Predictive Control}
We consider an \gls{MPC} algorithm to solve the control problem~\eqref{eq:infcost} with a finite horizon objective function $V_N$. \gls{MPC} repeatedly solves an optimization problem on a finite horizon. After the optimization only the first optimized control input is applied. Then the horizon is shifted by one step and a new optimization is performed. Without loss of generality it is assumed that an \gls{MPC} iteration starts with $\gls{xi}$. The finite horizon cost is then given by
 \begin{IEEEeqnarray}{c}
V_N\lr{\gls{xi},\gls{uNmpc}}= \sum_{j=0}^{\gls{Nmpc}-1} l\lr{\bm{x}_{j}, \bm{u}_{j}} + \gls{Vf}   \label{eq:fincost}
\end{IEEEeqnarray}
with the \gls{MPC} horizon $\gls{Nmpc}$, input sequence $\gls{uNmpc}~=~\lr{\bm{u}_0, \bm{u}_{1}, ..., \bm{u}_{N-1}}$, continuous stage cost $l\lr{\bm{x}_{j}, \bm{u}_{j}}~=~\bm{x}_{j}^\top \bm{Q} \bm{x}_{j} + \bm{u}_{j}^\top \bm{R} \bm{u}_{j}$ with $l\lr{\bm{0},\bm{0}}=0$, and continuous terminal cost $\gls{Vf}$ with $V_\text{f}\lr{\bm{0}}=0$.

We first formulate the \gls{MPC} optimization problem with input constraints \gls{UUj} and state constraints \gls{XX} that are independent of the uncertain system \eqref{eq:unc_system}, resulting in
\begin{IEEEeqnarray}{rll}
\IEEEyesnumber \label{eq:mpc_gen}
V_N^* = \underset{\gls{uNmpc}}{\min}~ &V_N\lr{\gls{xi},\gls{uNmpc}}, \IEEEyessubnumber  \label{eq:cf_gen}\\
\text{s.t. }& \bm{x}_{j+1} = \bm{A}\bm{x}_j+\bm{B} \bm{u}_j,~~&j \in \mathbb{I}_{0:N-1} \IEEEyessubnumber \label{eq:fcon}\\
&\gls{uj}\in \gls{UUj},~~&j \in \mathbb{I}_{0:N-1} \IEEEyessubnumber  \label{eq:ucon}\\
& \bm{x}_{j+1}\in \gls{XX},~~&j \in \mathbb{I}_{0:N-1} \IEEEyessubnumber \label{eq:xcon}\\
& \bm{x}_N \in \gls{XXf}. \IEEEyessubnumber \label{eq:xconf}
\end{IEEEeqnarray}
The input $\gls{uj}$ is bounded by the non-empty input value space $\gls{UUj} \subseteq \mathbb{R}^m$, i.e., the input constraint \eqref{eq:ucon}. The convex state constraint and terminal constraint set are given by $\gls{XX}$ and $\gls{XXf}$, respectively.  

\begin{assumption}[Control Invariant Terminal Set] \label{ass:XXf}
For all $\gls{xj} \in \gls{XXf}$, there exists an admissible $\gls{uj}$ 
such that $\bm{x}_{j+1} \in \gls{XXf}$. 
\end{assumption}

\begin{assumption}[Lyapunov Functions]
\label{ass:termcost}
The cost $V_N\lr{\gls{xi},\gls{uNmpc}}$ and the terminal cost $V_{\text{f}}\lr{\bm{x}_{k}}$ are Lyapunov functions in \gls{XX} and \gls{XXf}, respectively.
\end{assumption}

We denote with $\gls{UUxcj}$ the set of admissible inputs $\gls{uj}$ such that all constraints of \eqref{eq:mpc_gen} are satisfied for $j$, i.e.,
\begin{IEEEeqnarray}{c}
\gls{UUxcj}=\setdef[\gls{uj}]{\lr{\eqref{eq:ucon}, \eqref{eq:xcon}} \land \lr{\eqref{eq:xconf}~\text{if}~j=N-1}}. \label{eq:UUxc}  \IEEEeqnarraynumspace
\end{IEEEeqnarray}

\begin{remark}
\label{rem:ref}
Instead of steering $\gls{xk}$ to the origin as in \eqref{eq:fincost}, specific references $\bm{x}_{\text{ref},k}$ can also be tracked.
\end{remark}

\subsection{Model Predictive Control with Norm Constraint}

In the following the uncertain system \eqref{eq:unc_system} is considered. 

\begin{assumption}[Initially known Uncertainty]
\label{ass:initunc}
The initial state $\bm{y}_{\text{r},0}$ and deterministic input $\bm{u}_{\text{r},0}$ are known at the beginning of each \gls{MPC} optimization.
\end{assumption}

Here, we consider an additional constraint for the \gls{MPC} problem \eqref{eq:mpc_gen}, which is the norm constraint
\begin{IEEEeqnarray}{c}
\gls{Mk} \geq \gls{Mmink} \label{eq:hc2}
\end{IEEEeqnarray}
representing a constraint on the $2$-norm $\gls{Mk}$, e.g., the distance between two points must not be smaller than a minimal value \gls{Mmink}. While \eqref{eq:hc2} is a hard constraint, we will first transform \eqref{eq:hc2} into a chance constraint and later, in Section~\ref{sec:method}, we will minimize the probability that this norm constraint is violated. 

\begin{remark}
It is also possible to consider a $p$-norm constraint with $\norm{\gls{yi}-\gls{yri}}_p$ instead of the $2$-norm. Similar to the $2$-norm, all $p$-norms are convex. Without loss of generality we will consider the $2$-norm as most applications require a $2$-norm to represent the Euclidean distance. 
\end{remark}

As $\gls{yri}$ is subject to uncertainty, the norm constraint \eqref{eq:hc2} is difficult, potentially impossible, to fulfill, or it might lead to overly conservative control inputs. The hard norm constraint \eqref{eq:hc2} can be relaxed if substituted by the chance constraint
 \begin{IEEEeqnarray}{c}
\prob{\gls{Mk} < \gls{Mmink}} \leq \beta_k \label{eq:cc}
\end{IEEEeqnarray}
with
\begin{IEEEeqnarray}{c}
\gls{Pcv}\lr{\bm{u}_{k-1}} \coloneqq \prob{\gls{Mk} < \gls{Mmink}} \label{eq:cc_def}
\end{IEEEeqnarray}
where $\beta_k$ is a risk parameter and \gls{Pcv} denotes the constraint violation probability for the norm constraint \eqref{eq:hc2}. The constraint violation probability \gls{Pcv} for step $k$ is evaluated at step $k~-~1$, i.e. at the previous step. Therefore, the probability \gls{Pcv} depends on the input $\bm{u}_{k-1}$, yielding \gls{yi} according to \eqref{eq:system}. In the following, the dependence of \gls{Pcv} on $\bm{u}_{k-1}$ is omitted if it reduces notation complexity.

The following example will illustrate the idea of the chance constraint. We consider a controlled object with position $\gls{yi}$ and a dynamic obstacle with position $\bm{y}_{\text{r},k}$ where $\norm{\gls{yi} -\bm{y}_{\text{r},k}}_2$ is the distance between both objects. The objects collide if $\gls{Mk}~<~\gls{Mmink}$. An interpretation for \eqref{eq:cc} is that \gls{Pcv} represents the probability of a collision and this constraint violation probability is bounded by a predefined risk parameter~$\beta_k$. A similar example is analyzed in a simulation in Section~\ref{sec:results}.

The bounded support of \gls{Pcv} is given by
\begin{IEEEeqnarray}{rl}
&\supp\lr{\gls{Pcv}} = \setdef[\bm{u}_{k-1}]{\gls{Pcv} > 0 },\IEEEeqnarraynumspace
\label{eq:supp_pcv}
\end{IEEEeqnarray}
resulting in $\gls{Pcv} = 0$ if the maximal uncertainty value $w_{\text{max},k-1}$ cannot cause $\gls{Mk} < \gls{Mmink}$.

While it is possible to consider the norm constraint \eqref{eq:hc2} over multiple steps, we will only consider the norm constraint for the next predicted step $j=1$ with a horizon $N\geq 1$. Applying \eqref{eq:hc2} over the entire horizon $N$ results in a conservative control law similar to \gls{RMPC}.

We reformulate the \gls{MPC} optimization problem \eqref{eq:mpc_gen} such that it includes the norm constraint \eqref{eq:hc2}, resulting in
\begin{IEEEeqnarray}{rll}
\IEEEyesnumber \label{eq:mpc_nc}
V_N^* = \underset{\gls{uNmpc}}{\min}~ &V_N\lr{\gls{xi},\gls{uNmpc}}, \IEEEyessubnumber  \label{eq:cf_nc}\\
\text{s.t. }& \bm{x}_{j+1} = \bm{A}\bm{x}_j+\bm{B} \bm{u}_j,~~&j \in \mathbb{I}_{0:N-1} \IEEEyessubnumber \label{eq:fcon_nc}\\
& \bm{y}_{j} = \bm{C}\bm{x}_j,~~&j \in \mathbb{I}_{0:N} \IEEEyessubnumber \label{eq:fcon2}\\
& \bm{y}_{\text{r},j+1} = \bm{y}_{\text{r},j} +\bm{u}_{\text{r},j} + \bm{w}_{j},~~&j \in \mathbb{I}_{0:N-1} \IEEEyessubnumber \label{eq:uncsys}\\
& \gls{uj} \in \gls{UUxcj},~~&j \in \mathbb{I}_{0:N-1} \IEEEyessubnumber \label{eq:UUxc_nc} \IEEEeqnarraynumspace\\
&\gls{Mkj} \geq \gls{Mminkj},~~&j=1  \IEEEyessubnumber  \label{eq:hc}
\end{IEEEeqnarray}
where \eqref{eq:UUxc_nc} summarizes the constraints of the initial \gls{MPC} problem \eqref{eq:mpc_gen}, according to the definition of \gls{UUxcj} in \eqref{eq:UUxc}.

Substituting the norm constraint \eqref{eq:hc} by the chance constraint \eqref{eq:cc}, i.e.,
 \begin{IEEEeqnarray}{c}
 \prob{\norm{\bm{y}_1 - \bm{y}_{\text{r},1}}_2 < c_1} \leq \beta_1 \label{eq:ccj1}
\end{IEEEeqnarray}
with
 \begin{IEEEeqnarray}{c}
\gls{Pcv1att} \coloneqq \prob{\norm{\bm{y}_1 - \bm{y}_{\text{r},1}}_2 < c_1}
\end{IEEEeqnarray}
yields an \gls{SMPC} optimization problem. 

Only the general \gls{MPC} problem \eqref{eq:mpc_gen} is addressed in Assumptions \ref{ass:XXf} and \ref{ass:termcost}, the norm constraint \eqref{eq:hc} is not considered, as~\eqref{eq:hc} is specifically addressed in the method presented in Sections \ref{sec:method} and \ref{sec:properties}. 

\begin{remark}
\label{rem:cc}
In \eqref{eq:mpc_nc} the norm constraint \eqref{eq:hc2} is only considered in the first step, i.e., at step $j=1$, as we later minimize the probability of constraint violation for the first step. However, if this norm constraint is required to be considered at future steps $j \in \mathbb{I}_{2:N}$, this can be achieved by treating \eqref{eq:hc2} as a chance constraint, similar to \eqref{eq:ccj1}, resulting in
 \begin{IEEEeqnarray}{c}
\prob{\gls{Mj} < c_{j}} \leq \beta_{j},~~j \in \mathbb{I}_{2:N}. \label{eq:ccj} \IEEEeqnarraynumspace
\end{IEEEeqnarray}
This chance constraint \eqref{eq:ccj} is then added to \eqref{eq:mpc_gen} and subsequently needs to be considered in \eqref{eq:UUxc}. Assumptions \ref{ass:XXf} and \ref{ass:termcost} still need to be fulfilled if chance constraints are included for $j \in \mathbb{I}_{2:N}$ in the optimization problem. 
\end{remark}

\subsection{Problem Statement}

Instead of only bounding the chance constraint \eqref{eq:ccj1} by the risk parameter $\beta_1$, we aim at minimizing the constraint violation probability \gls{Pcv1} within the \gls{MPC} optimization problem. The challenge is to solve the \gls{MPC} problem  \eqref{eq:cf_nc} - \eqref{eq:UUxc_nc}, while it needs to be guaranteed that
 \begin{IEEEeqnarray}{c}
\gls{Pcv1} = \min_{\bm{u}_0 \in \mathbb{U}_{\bm{x},0}}  \prob{\norm{\bm{y}_1 - \bm{y}_{\text{r},1}}_2 < c_1}  \label{eq:ps} \IEEEeqnarraynumspace
\end{IEEEeqnarray}
and that the \gls{MPC} problem remains recursively feasible.

Multiple issues arise when implementing chance constraints. As \eqref{eq:cc} is a probabilistic constraint it cannot directly be handled by an optimization solver. The probabilistic chance constraint needs to be transformed into a tractable substitute of the chance constraint.

If chance constraints are used in \gls{SMPC}, two further problems occur. First, recursive feasibility of the \gls{SMPC} optimization problem needs to be ensured. If the \gls{SMPC} optimization problem is solvable at step $k$, it must also be solvable at step $k+1$ to guarantee recursive feasibility. This is a challenge for various \gls{SMPC} methods as the risk parameter $\beta_k$ allows a certain probability of constraint violation, causing infeasibility of the optimization problem for uncertainty realizations with low probability. Additionally, in both \gls{SMPC} and \gls{RMPC} recursive feasibility is not ensured in case of an unexpectedly increasing uncertainty support. Robustness in \gls{RMPC} or a satisfaction of the chance constraint in \gls{SMPC} are typically only ensured for the initially considered uncertainty support.

Second, in safety-critical systems a further aspect reduces the usability of chance constraints in \gls{SMPC}. A solution is valid as long as the probability of violating the safety constraint satisfies the risk parameter. Assuming there exists a solution with lower, or even zero percent, constraint violation probability, the optimization solution will still be `on the chance constraint' if this results in lower objective costs, i.e., allow constraint violations according to the risk parameter.   

We consider again the example in the introduction of a car overtaking a bicycle. Using a chance constraint with $\beta_k > 0$, the car will pass the bicycle but will choose a trajectory around the bicycle that allows a collision with a low probability due to $\beta_k > 0$. Given a finite bicycle uncertainty support, passing the bicycle with slightly more distance yields zero collision probability with only a small increase of cost. However, in practice, this slightly increased cost is acceptable if thereby safety is guaranteed.

In this paper we propose a novel \gls{MPC} approach, CVPM-\gls{MPC}, that first ensures the minimal constraint violation probability~$\gls{Pcv1}$, but then still optimizes the objective function $J_N(\gls{xi},\gls{uNmpc})$. This approach yields a control input resulting in the lowest possible constraint violation probability, given input and state constraints, while still optimizing further objectives. The CVPM-\gls{MPC} method guarantees recursive feasibility, also for a changing uncertainty support, and ensures convergence of the \gls{MPC} algorithm.

\section{Method}
\label{sec:method}

In this section we derive the CVPM-\gls{MPC} method to minimize the constraint violation probability \gls{Pcvj} for the first predicted step $j=1$ in an \gls{MPC} problem. First, a general approach is presented to find a tightened admissible input set that minimizes the first step constraint violation probability. In the following part it is shown how this approach can be incorporated into \gls{MPC}. A visualization of the method is displayed in Figure \ref{fig:idea}. As determining the tightened input set within the CVPM-\gls{MPC} method is difficult in general, we then provide an alternative, computable approach, assuming an uncertainty with symmetric, unimodal \gls{pdf}. A solution approach for a multi-step CVPM-\gls{MPC} is described in Appendix \ref{sec:appendixb}. 

~~

~~

\begin{figure}[t]
\centerline{\includegraphics[width=1.0\columnwidth]{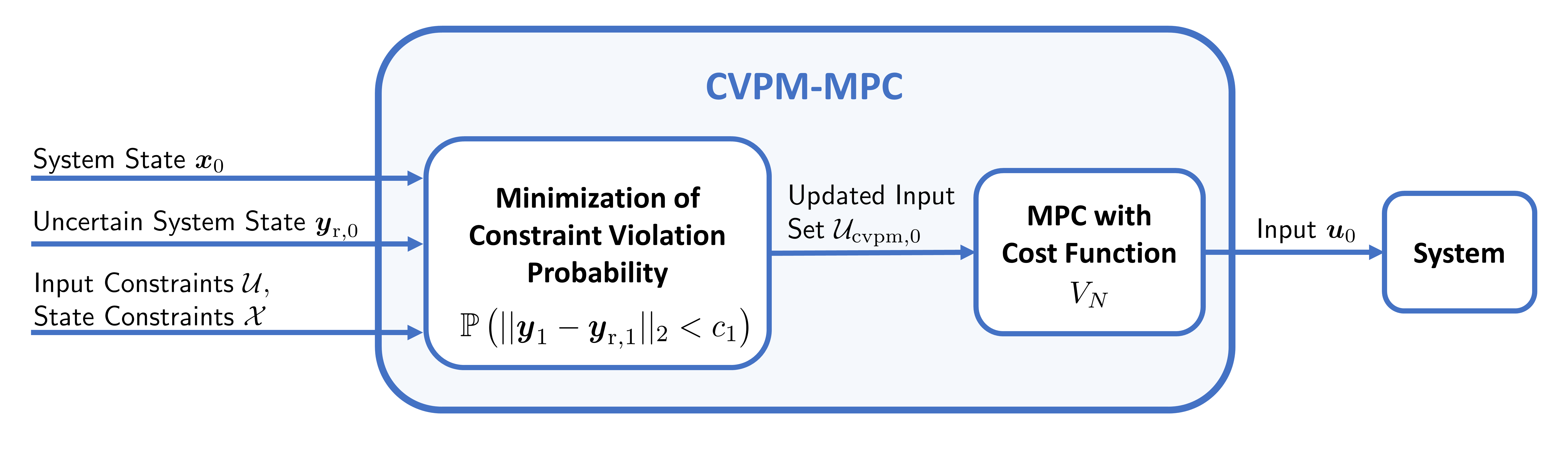}}
\caption{Visualization of CVPM-\gls{MPC} method: Given an input set and state constraints, as well as the current system state and uncertain system state, an updated input set is determined. This updated input set minimizes the norm constraint violation probability for the next step. Then an \gls{MPC} optimization problem is solved. The updated input set ensures constraint violation probability minimization while optimizing for other objectives.}
\label{fig:idea}
\end{figure}

\subsection{General Method to Minimize Constraint Violation Probability for One-Step Problem}
\label{sec:onestepproblem_gen}

When minimizing \gls{Pcv1} over \gls{ui0} within the \gls{MPC} algorithm, three different cases need to be considered. In each case a set $\gls{UUopti0}$ is determined which consists of inputs \gls{ui0} that minimize the constraint violation probability. Ideally, even considering the bounded uncertainty, satisfaction of the constraint can be guaranteed in the next step, for all choices of $\gls{ui0}\in \gls{UUxc0}$, which will be referred to as case 1. However, for stochastic systems we potentially have the situation that case 1 cannot be guaranteed. Here, two cases need to be distinguished. First, given the uncertainty, there is no choice for $\gls{ui0}$ that guarantees constraint satisfaction (case 2). Second, some choices for $\gls{ui0}$ guarantee constraint satisfaction, while other choices do not lead to such a guarantee (case 3).
Depending on the case,~$\gls{UUopti0}$ is determined differently as described in the following. 

\paragraph*{Case 1 (Guaranteed Constraint Satisfaction)} 

The probability of violating the norm constraint is zero independent of the choice for \gls{ui0}, i.e., 
\begin{IEEEeqnarray}{c}
\gls{Pcv1att} = 0 ~~~~ \forall \gls{ui0}\in \gls{UUxc0}. \label{eq:case1g}
\end{IEEEeqnarray}
Therefore, every $\gls{ui0} \in \gls{UUxc0}$ is a valid input, resulting in
\begin{IEEEeqnarray}{c}
\gls{UUopti0} = \gls{UUxc0}. \label{eq:ucase1g}
\end{IEEEeqnarray}

\paragraph*{Case 2 (Impossible Constraint Satisfaction Guarantee)}

There is no choice for $\gls{ui0}$ such that constraint satisfaction can be guaranteed in the presence of uncertainty, i.e.,
\begin{IEEEeqnarray}{c}
\gls{Pcv1att} > 0 ~~~~ \forall \gls{ui0}\in \gls{UUxc0}. \label{eq:case2g}
\end{IEEEeqnarray}
As it is impossible to guarantee $\gls{Pcv1} = 0$, the aim is to minimize $\gls{Pcv1}$. Selecting
\begin{IEEEeqnarray}{c}
\gls{UUopti0} = \setdef[\gls{ui0}]{\gls{ui0} = \argmin{\gls{ui0}\in \gls{UUxc0}} \gls{Pcv1att} }  \label{eq:ucase2g}
\end{IEEEeqnarray}
yields the set \gls{UUopti0} which only consists of inputs \gls{ui0} that minimize \gls{Pcv1}.

\paragraph*{Case 3 (Possible Constraint Satisfaction Guarantee)}

If only some inputs \gls{ui0} guarantee satisfaction of the constraint~\eqref{eq:hc}, i.e.,
\begin{IEEEeqnarray}{c}
\exists~\gls{ui0}\in \gls{UUxc0} \text{~~s.t.~~} \gls{Pcv1att} = 0, \label{eq:case3g}
\end{IEEEeqnarray}
then the set
\begin{IEEEeqnarray}{c}
\gls{UUopti0} = \setdef[\gls{ui0}]{\lr{\gls{Pcv1att} = 0} \land \lr{\gls{ui0}\in \gls{UUxc0}}} \IEEEeqnarraynumspace  \label{eq:ucase3g}
\end{IEEEeqnarray}
consists of these inputs which yield constraint satisfaction.\\

In all three cases \gls{UUopti0} needs to be found, leading to the following strong assumption.
\begin{assumption}
\label{ass:ucvpm}
The set \gls{UUopti0} can be determined for all cases 1-3.
\end{assumption}

While it is possible to approximate \gls{UUopti0} by sampling, finding an analytic solution for \gls{UUopti0} highly depends on the probability distribution. However, if \gls{UUopti0} can be determined, the CVPM-\gls{MPC} method guarantees minimal constraint violation probability for \gls{Pcv1}.

\begin{theorem}
\label{t:CVPM}
If Assumption \ref{ass:ucvpm} holds, minimization of the constraint violation probability of \gls{Pcv1} is guaranteed by selecting \gls{UUopti0} according to cases 1-3.
\end{theorem}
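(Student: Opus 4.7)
The plan is to verify that the three cases form an exhaustive partition of the scenarios for $\gls{ui0} \in \gls{UUxc0}$ and that in each case the defined set $\gls{UUopti0}$ coincides exactly with the set of minimizers of $\gls{Pcv1att}$ over $\gls{UUxc0}$. Together, these two facts immediately yield \eqref{eq:ps}.

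First I would establish exhaustiveness by partitioning on the subset $\mathcal{A} = \setdef[\gls{ui0} \in \gls{UUxc0}]{\gls{Pcv1att} = 0}$. Since $\gls{Pcv1att} \geq 0$ (being a probability), exactly one of three situations holds: $\mathcal{A} = \gls{UUxc0}$ (case 1), $\mathcal{A} = \emptyset$ (case 2), or $\mathcal{A}$ is a proper non-empty subset of $\gls{UUxc0}$ (case 3). Any element of $\mathcal{A}$ is automatically a minimizer of $\gls{Pcv1att}$, so cases 1 and 3 are straightforward: in case 1 every admissible input attains the minimum value zero, and \eqref{eq:ucase1g} indeed returns all of $\gls{UUxc0}$; in case 3 the minimum is still zero, and \eqref{eq:ucase3g} picks out exactly the zero-probability inputs, i.e.\ $\mathcal{A}$ itself.

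Case 2 is the substantive one. Here the infimum of $\gls{Pcv1att}$ over $\gls{UUxc0}$ is strictly positive, so the set of minimizers is precisely the argmin appearing in \eqref{eq:ucase2g}. I would invoke Assumption~\ref{ass:ucvpm} to guarantee that this argmin is non-empty and can be exhibited, so $\gls{UUopti0}$ again equals the set of minimizers. Combining the three cases, any $\gls{ui0} \in \gls{UUopti0}$ attains $\min_{\gls{ui0} \in \gls{UUxc0}} \gls{Pcv1att}$, which is exactly the claim \eqref{eq:ps}.

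The hard part is not the logical structure — once the cases are seen to be exhaustive, the conclusion is essentially tautological — but rather what has been absorbed into Assumption~\ref{ass:ucvpm}: the existence of the argmin in case 2 (which in general requires, e.g., compactness of $\gls{UUxc0}$ together with lower semicontinuity of $\gls{Pcv1att}$ as a function of $\gls{ui0}$) and, more practically, a constructive procedure to actually compute $\gls{UUopti0}$ from the distribution $\gls{fPw}$ and the geometry of the norm constraint. This is the reason the remainder of the paper develops an explicit construction under the symmetric, unimodal assumption, where $\gls{UUopti0}$ can be shown to be a convex, tractably describable set.
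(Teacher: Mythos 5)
Your proposal is correct and follows essentially the same route as the paper's proof: exhaustiveness of the three cases (partitioning on whether the zero-violation set is all of $\mathbb{U}_{\bm{x},0}$, empty, or a proper nonempty subset) plus the observation that in each case $\set{U}_{\text{cvpm},0}$ consists exactly of minimizers of $p_{\text{cv},1}\lr{\bm{u}_{0}}$, with Assumption~\ref{ass:ucvpm} absorbing the existence/computability of the argmin in case 2. Your additional remark on what Assumption~\ref{ass:ucvpm} hides (compactness/lower semicontinuity for the argmin to exist) is a fair observation but does not change the argument, which matches the paper's.
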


\begin{proof}
The proof follows straightforward from the definition of the three cases. All possibilities are covered regarding the guarantee of constraint satisfaction, i.e., guaranteed constraint satisfaction (case 1), impossible constraint satisfaction guarantee (case 2), and the case where constraint satisfaction is only guaranteed for some but not all $\gls{ui0}\in \gls{UUxc0}$ (case 3). If $\gls{Pcv1} = 0$ is possible, i.e., case 1 or 3, \eqref{eq:ucase1g} and \eqref{eq:ucase3g} guarantee that \gls{UUopti0} consists only of inputs $\gls{ui0}\in \gls{UUxc0}$ which yield $\gls{Pcv1} = 0$. If no $\gls{ui0}\in \gls{UUxc0}$ guarantees $\gls{Pcv1} = 0$, minimal constraint violation is guaranteed by only allowing inputs $\gls{ui0}\in \gls{UUxc0}$ which minimize \gls{Pcv1} according to \eqref{eq:ucase2g}.
\end{proof}

In dynamic environments the worst-case uncertainty $w_{\text{max},k}$ can change over time, which influences the probability of constraint violations. If the support changes, the CVPM-\gls{MPC} approach still minimizes this constraint violation probability.

\begin{corollary}
\label{col:uncsupp}
If the uncertainty support $\supp\lr{\gls{fPw}}$ changes from step $k$ and $k+1$, the CVPM-\gls{MPC} problem solved at step $k+1$ guarantees that the constraint violation probability $p_{\text{cv},k+2}$ is minimized.
\end{corollary}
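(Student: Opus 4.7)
The plan is to reduce the claim to a direct application of Theorem~\ref{t:CVPM} at step $k+1$, the key observation being that the CVPM construction of $\mathcal{U}_{\text{cvpm},0}$ depends only on the uncertainty support that is in force when the optimization problem is posed, not on any prior support. Thus a change of support between $k$ and $k+1$ only alters what data enter the step-$(k+1)$ CVPM problem; it does not invalidate the three-case analysis used to prove Theorem~\ref{t:CVPM}.

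Concretely, I would first note that the CVPM-\gls{MPC} problem solved at step $k+1$ takes as inputs the current system state, the known obstacle quantities $\bm{y}_{\text{r},k+1}$ and $\bm{u}_{\text{r},k+1}$ guaranteed by Assumption~\ref{ass:initunc}, and the updated density $f_{\bm{W}_{k+1}}$ with support $\supp(f_{\bm{W}_{k+1}})$. The constraint violation probability to be minimized is $p_{\text{cv},k+2}(\bm{u}_0) = \mathbb{P}(\|\bm{y}_{k+2}-\bm{y}_{\text{r},k+2}\|_2 < c_{k+2})$, where $\bm{y}_{\text{r},k+2} = \bm{y}_{\text{r},k+1}+\bm{u}_{\text{r},k+1}+\bm{w}_{k+1}$ with $\bm{w}_{k+1}\sim f_{\bm{W}_{k+1}}$. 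None of these quantities references the previous support $\supp(f_{\bm{W}_k})$.

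Next, I would observe that, exactly as in the proof of Theorem~\ref{t:CVPM}, the admissible set $\mathbb{U}_{\bm{x},0}$ at step $k+1$ is partitioned by the value of $p_{\text{cv},k+2}$ computed under the new support into cases~1--3: either every admissible $\bm{u}_0$ achieves $p_{\text{cv},k+2}=0$, or none does, or only some do. In each situation, the construction \eqref{eq:ucase1g}, \eqref{eq:ucase2g}, or \eqref{eq:ucase3g} — now evaluated with $f_{\bm{W}_{k+1}}$ — produces a set $\mathcal{U}_{\text{cvpm},0}$ of inputs that minimize $p_{\text{cv},k+2}$ over $\mathbb{U}_{\bm{x},0}$. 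Invoking Theorem~\ref{t:CVPM} at step $k+1$ then yields the claim.

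The only real obstacle is ensuring that Assumption~\ref{ass:ucvpm} continues to hold when the support changes, i.e., that $\mathcal{U}_{\text{cvpm},0}$ remains constructible under the updated density. Since Assumption~\ref{ass:ucvpm} is a standing hypothesis of the CVPM-\gls{MPC} framework that must be satisfied at every step the optimization is solved, this imposes no additional requirement beyond what the method already assumes. Consequently the proof reduces to a one-line invocation of Theorem~\ref{t:CVPM} with the step-$(k+1)$ data.
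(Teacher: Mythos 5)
Your proposal is correct and follows essentially the same route as the paper: the paper's proof likewise argues that the step-$(k+1)$ CVPM problem minimizes $p_{\text{cv},k+2}$ using only the data (and support) current at that step, independently of the minimization of $p_{\text{cv},k+1}$, so a support change merely alters the attainable minimum (possibly making $p_{\text{cv},k+2}>p_{\text{cv},k+1}$) without invalidating the three-case construction of Theorem~\ref{t:CVPM}. Your explicit remark that Assumption~\ref{ass:ucvpm} must hold under the updated density is a reasonable refinement of the same argument, not a different approach.
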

\begin{proof}
The proof follows directly from the problem definition. First, the CVPM-\gls{MPC} approach ensures that the constraint violation probability is minimized for each step, which allows $p_{\text{cv},k+2} > p_{\text{cv},k+1}$ if the uncertainty support increases. Second, minimizing $p_{\text{cv},k+2}$ is independent of minimizing  $p_{\text{cv},k+1}$.
\end{proof}

The \gls{MPC} problem \eqref{eq:mpc_nc} is now adapted given the set \gls{UUopti0} to guarantee minimal constraint violation probability of \eqref{eq:hc} while still optimizing for further objectives.

\subsection{Model Predictive Control with Minimal First Step Constraint Violation Probability}

Applying the previously determined $\gls{UUopti0}$ yields the CVPM-\gls{MPC} problem
\begin{IEEEeqnarray}{rll}
\IEEEyesnumber \label{eq:mpc_new}
V_N^* = \underset{\gls{uNmpc}}{\min}~ &V_N\lr{\gls{xi},\gls{uNmpc}}, \IEEEyessubnumber  \label{eq:cf}\\
\text{s.t. }& \bm{x}_{j+1} = \bm{A}\bm{x}_j+\bm{B} \bm{u}_j,~~&j \in \mathbb{I}_{0:N-1} \ \IEEEyessubnumber \label{eq:fcon_new}\\
& \bm{y}_{j} = \bm{C}\bm{x}_j,~~&j \in \mathbb{I}_{0:N} \IEEEyessubnumber \label{eq:fcon2_new}\\
& \bm{y}_{\text{r},j+1} = \bm{y}_{\text{r},j} +\bm{u}_{\text{r},j} + \bm{w}_{j},~~&j \in \mathbb{I}_{0:N-1} \IEEEyessubnumber \label{eq:uncsys_new} \IEEEeqnarraynumspace\\
& \gls{uNmpc} \in \gls{UU_total0}.\IEEEyessubnumber  \label{eq:cvm_mpc2} 
\end{IEEEeqnarray}
The set $\gls{UU_total0}$ defines the admissible inputs which yield minimal constraint violation probability combined with keeping the inputs and states within the input and state constraint sets. The set $\gls{UU_total0}$ is given by
\begin{IEEEeqnarray}{c}
\gls{UU_total0} = \setdef[\gls{uNmpc}]{\lr{\gls{ui0} \in \gls{UUopti0}}  \land \lr{\gls{uj} \in \mathbb{U}_{\bm{x},j}, j \in \mathbb{I}_{1:N-1}}} \IEEEeqnarraynumspace \label{eq:Ustar}
\end{IEEEeqnarray}
where $\mathbb{U}_{\bm{x},j}$ is defined in \eqref{eq:UUxc} and $\gls{UUopti0}$ is obtained according to Section \ref{sec:onestepproblem_gen}.

The complete CVPM-\gls{MPC} problem \eqref{eq:mpc_new} allows to optimize a cost function and satisfy state and input constraints, while minimization of the constraint violation probability \gls{Pcv1} is ensured.

\subsection{Minimal Constraint Violation Probability for One-Step Problem with Symmetric Unimodal PDF}
\label{sec:onestepproblem}

The proposed CVMP-\gls{MPC} method in Section \ref{sec:onestepproblem_gen} only guarantees minimal constraint violation probability if Assumption~\ref{ass:ucvpm} is fulfilled. Therefore, it must be possible to always determine \gls{UUopti0}, which is a strong assumption. In the following we provide an adapted approach of the CVMP-\gls{MPC} method which guarantees minimal constraint violation probability if the \gls{pdf} of the uncertainty is symmetric and unimodal.

In the following we first give a definition for symmetric, unimodal \glspl{pdf}. Further, we introduce a substitute for the constraint violation probability \gls{Pcv}. Then, the three cases from Section \ref{sec:onestepproblem_gen} are adapted in order to minimize \gls{Pcv1} for the \gls{pdf} addressed in the following. For each case a convex set of admissible inputs \gls{UUopti0} is determined.

\subsubsection{Symmetric Unimodal \gls{pdf}}

We first define the class of symmetric, unimodal probability distributions.

\begin{definition}[Symmetric Unimodal Distribution]
\label{def:rad}
A probability distribution is symmetric and unimodal if its \gls{pdf} has a single mode which coincides with its mean $\bm{\mu}$ and 
 \begin{IEEEeqnarray}{c}
f\lr{\bm{\mu} + \bm{\delta}_1} = f\lr{\bm{\mu} + \bm{\delta}_2} ~\forall~\norm{\bm{\delta}_1}_2 = \norm{\bm{\delta}_2}_2.
\end{IEEEeqnarray}
\end{definition}

With Definition \ref{def:rad} it is ensured that the \gls{pdf} has its peak at mean $\bm{\mu}$. As the probability distribution is symmetric, all realizations with similar distance to $\bm{\mu}$ have the same relative likelihood. Since there is only one global maximum of the \gls{pdf} at $\bm{\mu}$, realizations with increasing distance to $\bm{\mu}$ have a lower relative likelihood. 

The constraint violation probability \gls{Pcv} is a probabilistic expression and cannot directly be used in the optimization problem. The following assumption will allow to find a deterministic substitute for \gls{Pcv}.

\begin{assumption}[Uncertainty with Symmetric Unimodal \gls{pdf}]
\label{ass:Pcv} 
The \gls{pdf} $\gls{fPw}$ for \gls{Wi} in \eqref{eq:unc_system} is symmetric and unimodal with mean $\bm{\mu} = \bm{0}$. 
\end{assumption}

An example for an admissible probability distribution \gls{Pw} with symmetric, unimodal \gls{pdf} is a truncated isotropic bivariate normal distribution $\mathcal{N}\lr{\bm{0},\Sigma}$ with covariance matrix
\begin{IEEEeqnarray}{c}
\Sigma = \text{diag}\lr{\sigma_1^2, \sigma_2^2} = \sigma^2 \bm{I},~\sigma = \sigma_1 = \sigma_2
\end{IEEEeqnarray}
with variance $\sigma^2$ and identity matrix $\bm{I}$. The support in each direction is required to be equal, which can be achieved by over-approximating. Distributions with $\sigma_1 \neq \sigma_2$ can be over-approximated by choosing 
\begin{IEEEeqnarray}{c}
\Sigma = \sigma_\text{max} \bm{I},~\sigma_\text{max}= \max\lr{\sigma_1,\sigma_2}.
\end{IEEEeqnarray}

We now address the relation between \gls{Pcv} and $\gls{fPw}$ considering Assumption \ref{ass:Pcv}. The following lemma shows that the constraint violation probability \gls{Pcv} can be decreased by choosing $\bm{u}_{k-1}$ such that the distance is increased between the next system output \gls{yi} and the next known, nominal random system output $\overline{\bm{y}}_{\text{r},k}$.

\begin{lemma}
\label{lem:Pcv}
If Assumption \ref{ass:Pcv} holds, the probability \gls{Pcv} is decreasing for an increasing norm \gls{Mkm1b}.
\end{lemma}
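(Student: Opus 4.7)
The plan is to rewrite $\gls{Pcv}$ as an integral of the uncertainty density over a fixed-radius ball whose center is the deterministic displacement $\bm{d} := \bm{y}_k - \overline{\bm{y}}_{\text{r},k}$, and then show that this integral is non-increasing in $\norm{\bm{d}}_2$. First, I would substitute $\bm{y}_{\text{r},k} = \overline{\bm{y}}_{\text{r},k} + \bm{w}_{k-1}$ from \eqref{eq:unc_system} into the definition \eqref{eq:cc_def} of $\gls{Pcv}$ and absorb the translation into the integration domain, obtaining
\begin{equation*}
\gls{Pcv} \;=\; \mathbb{P}\!\left(\norm{\bm{d} - \bm{\rv{W}}_{k-1}}_2 < c_k\right) \;=\; \int_{\norm{\bm{w}-\bm{d}}_2 < c_k} f_{\bm{W}_{k-1}}(\bm{w})\, d\bm{w}.
\end{equation*}
This reduces the claim to a purely analytic statement: the mass of the symmetric unimodal density $f_{\bm{W}_{k-1}}$ contained in a ball of fixed radius $c_k$ centered at $\bm{d}$ decreases as $\norm{\bm{d}}_2$ grows.

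Next, by the rotational symmetry of $f_{\bm{W}_{k-1}}$ guaranteed by Definition \ref{def:rad} and Assumption \ref{ass:Pcv}, the integral depends on $\bm{d}$ only through $\norm{\bm{d}}_2$, so it suffices to compare two displacements $\bm{d}_1, \bm{d}_2$ with $0 \leq \norm{\bm{d}_1}_2 < \norm{\bm{d}_2}_2$ placed on a common ray from the origin. Denoting the corresponding balls by $B_i := \{\bm{w} : \norm{\bm{w}-\bm{d}_i}_2 < c_k\}$ and decomposing each integral as the sum over $B_1 \cap B_2$ plus a crescent $B_i \setminus B_j$, the overlap contributions cancel, and the claim reduces to
\begin{equation*}
\int_{B_1 \setminus B_2} f_{\bm{W}_{k-1}}\, d\bm{w} \;\geq\; \int_{B_2 \setminus B_1} f_{\bm{W}_{k-1}}\, d\bm{w}.
\end{equation*}

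To establish the last inequality I would use the reflection $\sigma$ across the hyperplane perpendicular to $\bm{d}_2 - \bm{d}_1$ passing through the midpoint $\tfrac{1}{2}(\bm{d}_1+\bm{d}_2)$. This map is a measure-preserving bijection $B_1 \setminus B_2 \to B_2 \setminus B_1$, so after changing variables the required inequality follows provided $f_{\bm{W}_{k-1}}(\sigma(\bm{w})) \leq f_{\bm{W}_{k-1}}(\bm{w})$ for every $\bm{w} \in B_1 \setminus B_2$. By the symmetric unimodality in Assumption \ref{ass:Pcv}, this in turn reduces to $\norm{\sigma(\bm{w})}_2 \geq \norm{\bm{w}}_2$, which I would verify by a short coordinate computation. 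Choosing coordinates with $\bm{d}_i = d_i \bm{e}_1$ and $0 \leq d_1 < d_2$, membership $\bm{w} \in B_1 \setminus B_2$ forces $w_1 < (d_1+d_2)/2$; then $\norm{\sigma(\bm{w})}_2^2 - \norm{\bm{w}}_2^2 = (d_1+d_2)(d_1+d_2-2w_1)$, and both factors are positive under the standing assumptions.

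The main obstacle I anticipate is exactly this final coordinate step when $B_1$ straddles the origin, i.e., when the first coordinate $w_1$ can be negative. A naive argument comparing $|w_1|$ with $|\sigma(\bm{w})_1|$ directly must handle the sign carefully; however, the factorization above sidesteps the case split, and the crucial ingredient is $d_1 + d_2 > 0$, which holds precisely because both displacements are placed on the same outward ray. Putting the pieces together yields $\gls{Pcv}(\bm{d}_2) \leq \gls{Pcv}(\bm{d}_1)$ whenever $\norm{\bm{d}_2}_2 > \norm{\bm{d}_1}_2$, which is the assertion of the lemma.
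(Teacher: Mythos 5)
Your proof is correct, and it takes a genuinely different, and in fact more rigorous, route than the paper. The paper argues pointwise: by symmetric unimodality, realizations $\bm{w}$ with larger $\norm{\bm{w}}_2$ have smaller density, and from this likelihood ordering it passes directly to the claim that a larger $\norm{\bm{y}_{k}-\overline{\bm{y}}_{\text{r},k}}_2$ gives a smaller $p_{\text{cv},k}$; the step from a pointwise density comparison to a comparison of probabilities is essentially asserted rather than derived. You instead write $p_{\text{cv},k}$ as the mass of $f_{\bm{W}_{k-1}}$ in a ball of radius $c_k$ centered at the deterministic displacement $\bm{d}=\bm{y}_k-\overline{\bm{y}}_{\text{r},k}$, use rotational invariance of $f_{\bm{W}_{k-1}}$ to reduce to two collinear displacements, cancel the common lens $B_1\cap B_2$, and compare the two crescents via the reflection $\sigma$ across the perpendicular bisector of the segment joining the centers; radial monotonicity of the density then gives $f_{\bm{W}_{k-1}}\lr{\sigma(\bm{w})}\le f_{\bm{W}_{k-1}}\lr{\bm{w}}$ on $B_1\setminus B_2$, and your coordinate identity $\norm{\sigma(\bm{w})}_2^2-\norm{\bm{w}}_2^2=(d_1+d_2)(d_1+d_2-2w_1)$ together with the fact that membership in $B_1\setminus B_2$ forces $w_1<(d_1+d_2)/2$ is verified correctly. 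This is the classical symmetrization argument (a ball-specific instance of Anderson's lemma), and it supplies exactly the measure-theoretic comparison that the paper's proof glosses over. Two harmless remarks: like the paper, you read ``unimodal'' as ``density non-increasing in the distance to the mean'' (the paper uses the same reading in its inequality $f_{\bm{W}_k}\lr{\tilde{\bm{w}}_k}<f_{\bm{W}_k}\lr{\bm{w}_k}$ for $\norm{\tilde{\bm{w}}_k}_2>\norm{\bm{w}_k}_2$), and your conclusion is the non-strict monotonicity $p_{\text{cv},k}(\bm{d}_2)\le p_{\text{cv},k}(\bm{d}_1)$, which is also all that the paper's own proof establishes and all that is used later.
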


\begin{proof}
According to Assumption \ref{ass:Pcv}, $\gls{fPw}$ is symmetric and unimodal, and therefore $\gls{fPw}$ is decreasing with increasing $\norm{\gls{wi}}_2$, i.e., the larger the value $\norm{\gls{wi}}_2$, the lower its probability. The realization with the highest relative likelihood is the mode of \gls{fPcv} with $\gls{wi} = \bm{0}$, yielding the most likely random output $\gls{yrip1} = \overline{\bm{y}}_{\text{r},k+1}$. It follows that
 \begin{IEEEeqnarray}{c}
 \label{eq:lemPcv}
\gls{fPw}\lr{\tilde{\bm{w}}_k} < \gls{fPw}\lr{\gls{wi}}~~\text{for}~~\norm{\tilde{\bm{w}}_k}_2 > \norm{\gls{wi}}_2 
\end{IEEEeqnarray}
where $\tilde{\bm{y}}_{\text{r},k+1} = \overline{\bm{y}}_{\text{r},k+1} + \tilde{\bm{w}}_k$ is less likely than $\gls{yrip1} = \overline{\bm{y}}_{\text{r},k+1} + \gls{wi}$ and
 \begin{IEEEeqnarray}{c}
\norm{\overline{\bm{y}}_{\text{r},k+1} - \tilde{\bm{y}}_{\text{r},k+1}}_2 > \norm{\overline{\bm{y}}_{\text{r},k+1} - \gls{yrip1}}_2
\end{IEEEeqnarray}
due to $\norm{\tilde{\bm{w}}_k}_2 > \norm{\gls{wi}}_2$. 

It follows that the larger the value $\norm{\gls{yip1} - \overline{\bm{y}}_{\text{r},k+1}}_2$, the higher the probability of a large value \gls{Mkp1} due to \eqref{eq:lemPcv}. Therefore, the larger $\norm{\gls{yip1} - \overline{\bm{y}}_{\text{r},k+1}}_2$, the higher the probability of $\gls{Mkp1} \geq \gls{Mmink}$. This results in
\begin{IEEEeqnarray}{c}
\norm{\tilde{\bm{y}}_{k+1}-\overline{\bm{y}}_{\text{r},k+1}}_2 > \norm{\bm{y}_{k+1}-\overline{\bm{y}}_{\text{r},k+1}}_2 	 ~~\Leftrightarrow~~ p_{\text{cv},k+1}\lr{\tilde{\bm{u}}_k,\bm{y}_{\text{r},k}} \leq p_{\text{cv},k+1}\lr{\bm{u}_k,\bm{y}_{\text{r},k}}\IEEEeqnarraynumspace
\end{IEEEeqnarray}
with $\tilde{\bm{y}}_{k+1} \neq \bm{y}_{k+1}$ and $\tilde{\bm{y}}_{k+1} = \bm{C}\lr{\bm{A}\bm{x}_k + \bm{B} \tilde{\bm{u}}_k}$ according to \eqref{eq:system}.

The same holds for \gls{Mkm1b} instead of  $\norm{\gls{yip1} - \overline{\bm{y}}_{\text{r},k+1}}_2$, showing that \gls{Pcv} is decreasing with an increasing \gls{Mkm1b}.
\end{proof}

The lemma shows that the probability of violating the norm constraint \eqref{eq:hc} decreases if the difference between \gls{yi} and $\overline{\bm{y}}_{\text{r},k}$ increases. Lemma \ref{lem:Pcv} now allows to find a substitute function for \gls{Pcv}.

\subsubsection{Substitute Probability Function}

In order to provide a tractable substitution for \gls{Pcvj} in the CVPM-\gls{MPC} problem, we introduce the scalar, twice differentiable, strictly monotonically increasing function 
\begin{IEEEeqnarray}{c}
\gls{gi}: \mathbb{R}_{\geq0} \rightarrow \mathbb{R}
\end{IEEEeqnarray}
where,
\begin{IEEEeqnarray}{c}
\gls{gi} \lr{\gls{Mjb}} \IEEEeqnarraynumspace
\label{eq:h}
\end{IEEEeqnarray}
is used as a substitution for the constraint violation probability $\gls{Pcvj}$, as $\gls{Pcvj}$ is decreasing with the norm \gls{Mjb} according to Lemma \ref{lem:Pcv}, while $\gls{gi} \lr{\gls{Mjb}} $ is increasing with $\gls{Mjb}$. This substitution has the benefit that an increasing \gls{Mjb} is equivalent to a decreasing constraint violation probability.

Considering the constraint violation probability for the first predicted step $j=1$, this probability $\gls{Pcv1}$ is minimized for a maximal $\gls{hiyy1b}$. However, since $\gls{fPw}$ is truncated and $\supp\lr{\gls{Pcv}}$ is bounded, there potentially are multiple admissible inputs which result in an equal constraint violation probability. The aim is now to find the convex set \gls{UUopti0} including all inputs $\gls{umaxi0} \in \gls{UUopti0}$ which result in a minimal \gls{Pcv1}. As $\bm{u}_{\text{r},0}$ is deterministic and known according to Assumption \ref{ass:initunc}, \gls{hiyy1b} is a deterministic expression that can be evaluated.

The set $\gls{UUopti0}$ can then be found by comparing the worst-case uncertainty $w_{\text{max},0}$ with the minimum and maximum possible values of $\gls{hiyy1b}$, i.e., $\gls{hmini1}$ and $\gls{hmaxi1}$, respectively. The maximal value \gls{hmaxi1} is determined by 
\begin{IEEEeqnarray}{c}
\gls{hmaxi1} \coloneqq \max_{\gls{ui0}\in \gls{UUxc0}} \gls{gi} \lr{\gls{Mm1b}} =    \gls{gi} \lr{ \max_{\gls{ui0}\in \gls{UUxc0}} \lr{\gls{Mm1b}}}
\label{eq:hmax}
\end{IEEEeqnarray}
corresponding to the largest distance between \gls{yi1} and $\overline{\bm{y}}_{\text{r},1}$. Analogously \gls{hmini1} can be found by
\begin{IEEEeqnarray}{c}
\gls{hmini1} \coloneqq \min_{\gls{ui0}\in \gls{UUxc0}} \gls{gi} \lr{\gls{Mm1b}} =   \gls{gi} \lr{ \min_{\gls{ui0}\in \gls{UUxc0}} \lr{\gls{Mm1b}}}.
\label{eq:hmin}
\end{IEEEeqnarray}

The result for $\gls{hmini1}$ can be obtained by determining the minimum value of $\gls{Mm1b}$, as the objective function $\gls{gi} \lr{\gls{Mm1b}}$ and $\gls{UUxc0}$ are convex. The following lemma provides a strategy to find $\gls{hmaxi1}$.

\begin{lemma} \label{lem:globmax}
Let the non-empty convex polytope $\set{V}\subset \mathbb{R}^g$, $g \in \mathbb{N}$, be bounded by a finite set of hyperplanes, such that $\set{V}$ has a finite number of edge vertices with a convex function $\gls{convfunc}:\set{V} \rightarrow \mathbb{R}$. Then a global maximum
\begin{IEEEeqnarray}{c}
\var{\gls{convfunc}}{max} = \max_{\bm{v}\in \set{V}}\f{\gls{convfunc}}{\bm{v}} 
\end{IEEEeqnarray}
is obtained by searching for the maximum value of $\gls{convfunc}$ on the boundary $\partial \set{V}$ of its domain $\set{V}$.
\end{lemma}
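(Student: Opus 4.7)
The plan is a standard argument that a convex function on a bounded convex set attains its maximum on the boundary. First I would observe that $\set{V}$ is compact (being closed, as a finite intersection of closed halfspaces, and bounded by hypothesis) and that $z$, being convex on $\set{V}$, is continuous on the interior; compactness together with this continuity (extended to the boundary via the polytope's finite vertex/face structure) ensures that the global maximum $z_{\max}$ is attained somewhere in $\set{V}$. The task is then to show that this maximum can always be located on $\boundary{V}$.

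Next I would argue by contradiction: suppose the maximum is attained only at some point $\bm{v}^*$ in the relative interior of $\set{V}$. Picking any nonzero direction $\bm{d}$, the affine line $\bm{v}^* + t\bm{d}$, $t\in\mathbb{R}$, meets the bounded convex set $\set{V}$ in a closed segment whose two endpoints $\bm{v}_1,\bm{v}_2$ lie in $\boundary{V}$ (boundedness caps $t$, closedness places the extremes on the boundary). Then $\bm{v}^* = \lambda \bm{v}_1 + (1-\lambda)\bm{v}_2$ for some $\lambda\in(0,1)$, and convexity of $z$ yields
\begin{equation}
z(\bm{v}^*) \leq \lambda\, z(\bm{v}_1) + (1-\lambda)\, z(\bm{v}_2) \leq \max\{z(\bm{v}_1),\, z(\bm{v}_2)\},
\end{equation}
so at least one of the boundary points $\bm{v}_1,\bm{v}_2$ attains a value $\geq z(\bm{v}^*) = z_{\max}$. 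Hence the maximum is also attained on $\boundary{V}$, which contradicts the assumption that it was attained only in the interior, and the claim follows.

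The main obstacle is the (minor) justification that a line through an interior point of a bounded, closed, convex set meets the set in a segment whose endpoints lie on $\boundary{V}$; once this geometric fact is in place, the convexity inequality does the rest. Note that the finite-vertex structure of the polytope is not actually needed for the boundary statement asserted in the lemma, although the same argument, iterated on each face, would further localize the maximum to an extreme vertex of $\set{V}$ if a sharper result were desired.
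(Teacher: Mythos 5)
Your proposal is correct and follows essentially the same route as the paper's proof: both rest on writing an interior point as a convex combination of two boundary points and applying the inequality $z(\lambda\bm{v}_1+(1-\lambda)\bm{v}_2)\leq\max\{z(\bm{v}_1),z(\bm{v}_2)\}$, i.e., the standard Bauer-maximum-principle argument, with your version merely phrased as a contradiction and with the (harmless) extra remark on attainment via compactness. No changes needed.
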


\begin{proof}
This proof is based on Bauer's maximum principle \cite{Bauer1958}. We consider any two points $\bm{v}_1,\bm{v}_2 \in \partial\set{V}$ on the boundary of $\set{V}$. Any point on the line between $\bm{v}_1,\bm{v}_2$ can be described by $\bm{b} = \lambda \bm{v}_1 + (1-\lambda)\bm{v}_2$, using the definition of convexity. Due to the convexity of $\gls{convfunc}$ it holds that $\f{\gls{convfunc}}{\bm{b}}~\leq~\max \left\{\f{\gls{convfunc}}{\bm{v}_1},\f{\gls{convfunc}}{\bm{v}_2}\right\}$. Any point on the line between $\bm{v}_1,\bm{v}_2$ can be reached by a convex combination. Since $\bm{v}_1,\bm{v}_2$ can be chosen arbitrarily, every point $\bm{b}$ in the interior of $\set{V}$ can be reached. Therefore, a global maximum $\var{\gls{convfunc}}{max}$ is found on the boundary $\partial\set{V}$.
\end{proof}

\subsubsection{Determination of the Updated Admissible Input Set}

Similar to Section \ref{sec:onestepproblem_gen} we regard three cases. The resulting set \gls{UUopti0}, depending on the three cases, is then used in the CVPM-\gls{MPC} problem \eqref{eq:mpc_new} to guarantee minimal constraint violation probability of \eqref{eq:hc}. In order to distinguish between the cases, we will consider the relation
\begin{IEEEeqnarray}{c}
\label{eq:ineq_trafo}
\norm{\bm{y}_1 - \overline{\bm{y}}_{\text{r},1}}_2   \geq c_1 + w_{\text{max},0} \Rightarrow \norm{\bm{y}_1 - \bm{y}_{\text{r},1}}_2 \geq c_1, \IEEEeqnarraynumspace
\end{IEEEeqnarray}
where a detailed derivation of \eqref{eq:ineq_trafo} is shown in Appendix \ref{sec:appendixc}. Here, \gls{Mtot1} represents the necessary distance between $\bm{y}_{1}$ and $\overline{\bm{y}}_{\text{r},1}$, consisting of the required minimal distance \gls{Mmin} at step $j=1$ and the maximal random system step \gls{wimax0} at $j=0$, such that $\norm{\bm{y}_1 - \bm{y}_{\text{r},1}}_2 \geq c_1$ for all $\norm{\bm{w}_0}_2 \leq w_{\text{max},0}$. 
\paragraph*{Case 1 (Guaranteed Constraint Satisfaction)} 
For any $\gls{ui0}~\in~\gls{UUxc0}$ constraint satisfaction is guaranteed, i.e., $\gls{Pcv1}=0$ for
\begin{IEEEeqnarray}{c}
\gls{hmini1} \geq \gls{gi}\lr{\gls{Mtot1}}. \label{eq:case1}
\end{IEEEeqnarray}
The initial state configuration of the controlled and the stochastic system is such that the minimum value possible for $\gls{hi}\lr{\gls{Mm1b}}$, \gls{hmini1}, still yields a larger value than inserting \gls{Mmin} combined with the worst-case \gls{wimax0} into \gls{hi}, which moves $\bm{y}_{\text{r},1}$ closest to $\bm{y}_{1}$. This results in a guaranteed constraint satisfaction $\gls{Pcv1}=0$. 
Therefore, every $\gls{ui0} \in \gls{UUxc0}$ is an admissible input, i.e.,
\begin{IEEEeqnarray}{c}
\gls{UUopti0} = \gls{UUxc0}. \label{eq:ucase1}
\end{IEEEeqnarray}

\paragraph*{Case 2 (Impossible Constraint Satisfaction Guarantee)}
There is no input $\gls{ui0}\in \gls{UUxc0}$ which can guarantee $\gls{Pcv1} = 0$, i.e.,
\begin{IEEEeqnarray}{c}
\gls{hmaxi1} < \gls{gi}\lr{\gls{Mmin} +\gls{wimax0}}. \label{eq:case2}
\end{IEEEeqnarray}
The largest value for \gls{hiyy1b} that can be achieved with $\gls{ui0} \in \gls{UUxc0}$ is $\gls{hmaxi1}$, corresponding to the lowest possible $\gls{Pcv1}$. However, to guarantee constraint satisfaction of \eqref{eq:hc}, \gls{hmaxi1} is required to be larger or at least equal to $\gls{gi}(\gls{Mtot1})$, with the worst-case absolute value \gls{wimax0} for the realization of \gls{wi0}. Constraint satisfaction cannot be guaranteed here.

The solution corresponding to $\gls{hmaxi1}$ is denoted by $\gls{umaxi0}$. Minimal \gls{Pcv1} is achieved with
\begin{IEEEeqnarray}{c}
\gls{umaxi0} = \argmax{\gls{ui0}\in \gls{UUxc0}} \gls{hiyy1b}
\label{eq:u_max}
\end{IEEEeqnarray}
as $\gls{hiyy1b}$ increases and \gls{Pcv1} decreases with an increasing norm.

 Therefore, 
\begin{IEEEeqnarray}{c}
\gls{UUopti0} = \{\gls{umaxi0}\} \label{eq:ucase2}
\end{IEEEeqnarray}
is selected since the input choice \gls{umaxi0} guarantees the lowest constraint violation probability when $\gls{Pcv1} > 0$.

\begin{remark}
If \eqref{eq:u_max} yields more than one solution, \gls{UUopti0} in \eqref{eq:ucase2} can also consist of more than one element, i.e., all solutions of \eqref{eq:u_max}. However, there can be restrictions if convexity of \gls{UUopti0} is required. 
\end{remark}

\paragraph*{Case 3 (Possible Constraint Satisfaction Guarantee)}
The final case yields $\gls{Pcv1}=0$ for some \gls{ui0} and applies if
\begin{IEEEeqnarray}{c}
\gls{hmaxi1} \geq \gls{gi}\lr{\gls{Mtot1}} > \gls{hmini1} . \label{eq:case3} 
\end{IEEEeqnarray}
While some $\gls{ui0} \in \gls{UUxc0}$ cannot guarantee zero constraint violation probability, it is possible to find $\gls{ui0}$ such that 
\begin{IEEEeqnarray}{c}
\gls{hiyy1b} \geq \gls{gi}\lr{\gls{Mtot1}}.
\end{IEEEeqnarray}
Therefore, for some $\gls{ui0}$ constraint satisfaction can be guaranteed in the presence of uncertainty. Hence, the task is to find a set 
\begin{IEEEeqnarray}{c}
\gls{UUopti0} =  \setdef[\gls{ui0}]{\left(\gls{hiyy1b}\geq \gls{gi}\lr{\gls{Mtot1}}\right)   \land \lr{\gls{ui0} \in \gls{UUxc0}}},
\label{eq:Uopt_c3} \IEEEeqnarraynumspace
\end{IEEEeqnarray}
which consists of all inputs $\gls{ui0} \in \gls{UUxc0}$ that yield constraint satisfaction and therefore $\gls{Pcv1} = 0$.

The first part of the set in \eqref{eq:Uopt_c3},
\begin{IEEEeqnarray}{c}
\gls{UUm30} = \setdef[\gls{ui0}]{\gls{hiyy1b} \geq \gls{gi}\lr{\gls{Mtot1}}},
\label{eq:UUm3} \IEEEeqnarraynumspace
\end{IEEEeqnarray}
describes a super-level set, including only inputs \gls{ui0} which lead to $\gls{Pcv1} =0$. This super-level set is generally non-convex. In order to receive a convex set $\gls{UUopti0}$ for the optimization problem, an approximation is performed, based on the boundary 
\begin{IEEEeqnarray}{c}
\gls{hic0} = \setdef[\gls{ui0}]{\gls{hiyy1b} = \f{\gls{gi}}{\gls{Mtot1}}}. \IEEEeqnarraynumspace
\end{IEEEeqnarray}

\begin{proposition}
\label{prop:apUUopt}
An approximated, convex solution of \eqref{eq:Uopt_c3} in case 3 is obtained by 
\begin{IEEEeqnarray}{c}
\gls{UUopti0}  = \gls{apUUopti0} = \setdef[\gls{ui0}]{\gls{UUti0}\cap \gls{UUxc0}} 
\label{eq:Uopt_th}
\end{IEEEeqnarray}
with 
\begin{IEEEeqnarray}{c}
\gls{UUti0} =  \setdef[\gls{ui0}]{\lr{\nabla_{\gls{pp}} \gls{hiyy1us}}^\top \left(\gls{ui0}-\gls{pp}\right)\geq 0}, \IEEEeqnarraynumspace
\label{eq:th1.2}
\end{IEEEeqnarray}
the gradient operator $\nabla_{\gls{pp}}$, and a point $\gls{pp} \in \gls{hic0} \cap \gls{UUxc0}$ which is an admissible input.
\end{proposition}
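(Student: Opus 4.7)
The plan is to replace the generally non-convex super-level set $\mathbb{U}_{\text{mode3},0}$ by a convex inner approximation obtained from a supporting hyperplane at the boundary point $\bm{p}$, and then intersect with $\mathbb{U}_{\bm{x},0}$ to enforce the state and input constraints. Since $h$ (equivalently $g$) is strictly monotonically increasing, the defining condition
$$h\bigl(\|\bm{y}_1(\bm{u}_0) - \overline{\bm{y}}_{\text{r},1}\|_2\bigr) \geq g(c_1 + w_{\text{max},0})$$
of $\mathbb{U}_{\text{mode3},0}$ is equivalent to $\|\bm{y}_1(\bm{u}_0) - \overline{\bm{y}}_{\text{r},1}\|_2 \geq c_1 + w_{\text{max},0}$, so it suffices to inner-approximate the super-level set of the scalar function $\psi(\bm{u}_0) := \|\bm{y}_1(\bm{u}_0) - \overline{\bm{y}}_{\text{r},1}\|_2$ at the level $c_1 + w_{\text{max},0}$.

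The second step exploits convexity of $\psi$. Because $\bm{y}_1$ depends affinely on $\bm{u}_0$ through \eqref{eq:fcon_new}--\eqref{eq:fcon2_new}, the map $\bm{u}_0 \mapsto \bm{y}_1(\bm{u}_0) - \overline{\bm{y}}_{\text{r},1}$ is affine, and composition with the convex Euclidean norm yields a convex $\psi$. The first-order convexity inequality then gives, for any differentiable reference point $\bm{p}$,
$$\psi(\bm{u}_0) \;\geq\; \psi(\bm{p}) + \bigl(\nabla_{\bm{p}}\psi(\bm{p})\bigr)^{\!\top}\bigl(\bm{u}_0 - \bm{p}\bigr).$$
By hypothesis $\bm{p} \in \partial\mathbb{U}_{\text{mode3},0}$, so $\psi(\bm{p}) = c_1 + w_{\text{max},0}$; consequently the half-space $\{\bm{u}_0 : \nabla_{\bm{p}}\psi(\bm{p})^{\!\top}(\bm{u}_0 - \bm{p}) \geq 0\}$ is contained in $\mathbb{U}_{\text{mode3},0}$. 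Because $h$ is strictly monotonically increasing and differentiable (Section~\ref{sec:onestepproblem}), the chain rule yields $\nabla_{\bm{p}} h(\psi(\bm{p})) = h'(\psi(\bm{p}))\,\nabla_{\bm{p}}\psi(\bm{p})$ with the strictly positive scalar factor $h'(\psi(\bm{p}))$, so replacing $\psi$ by $h\circ\psi$ defines exactly the same half-space $\hat{\mathcal{U}}_0(\bm{p})$ as written in~\eqref{eq:th1.2}.

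The third step assembles the result. The set $\hat{\mathcal{U}}_0(\bm{p})$ is a closed half-space and $\mathbb{U}_{\bm{x},0}$ is convex by construction, so their intersection $\hat{\mathcal{U}}_{\text{cvpm},0}$ is convex. Non-emptiness follows because $\bm{p}$ itself lies in both sets by hypothesis. Finally, the chain of inclusions $\hat{\mathcal{U}}_{\text{cvpm},0} \subseteq \mathbb{U}_{\text{mode3},0} \cap \mathbb{U}_{\bm{x},0} = \mathcal{U}_{\text{cvpm},0}$ shows that every input in the approximation satisfies the original condition of \eqref{eq:Uopt_c3}, hence yields $p_{\text{cv},1} = 0$.

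The main obstacle is the conservatism of the inner approximation: the tangent hyperplane at $\bm{p}$ discards portions of the curved super-level set that lie on its opposite side, so the choice of $\bm{p}$ strongly affects how much of $\mathcal{U}_{\text{cvpm},0}$ is retained. A secondary subtlety is ensuring $\nabla_{\bm{p}}\psi(\bm{p}) \neq \bm{0}$ so the hyperplane is well-defined; this holds whenever $\bm{y}_1(\bm{p}) \neq \overline{\bm{y}}_{\text{r},1}$, which is guaranteed by $\psi(\bm{p}) = c_1 + w_{\text{max},0} > 0$ in the cases of interest.
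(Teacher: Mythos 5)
Your proposal is correct and follows essentially the same route as the paper's proof: a supporting/tangent half-space to the non-convex super-level set $\mathbb{U}_{\text{mode3},0}$ at the boundary point $\bm{p}$, intersected with the convex set $\mathbb{U}_{\bm{x},0}$, with convexity of the intersection concluding the argument. Your use of the first-order convexity inequality for the affine-composed norm (and the positive rescaling under the monotone map $h$) simply makes explicit the supporting-hyperplane and gradient-orientation reasoning that the paper states more geometrically.
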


\begin{remark}
While it was previously not explicitly stated that \gls{yi1} depends on \gls{ui0}, in Proposition \ref{prop:apUUopt} the dependence of \gls{yi1} on \gls{pp} is stated for clarity.
\end{remark}

\begin{proof}
The set $\gls{UUm30}
$ is non-empty and non-convex with the boundary point $\gls{pp} \in \partial\gls{UUm30}$ of $\gls{UUm30}$. There exists a supporting hyperplane to $\gls{UUm30}$ at $\gls{pp}$ \cite{boyd2004convex}. This supporting hyperplane is used to approximate the non-convex set \gls{UUm30}. The gradient $\nabla_{\gls{pp}} \lr{\gls{hiyy1us}}$ is a vector orthogonal to the hyperplane on the boundary $\gls{hic0}$ at $\gls{pp}$, pointing away from the convex set $\gls{UUm30}$. The scalar product of $\nabla_{\gls{pp}} \lr{\gls{hiyy1us}}$ and any point $\gls{ui0}$ on this hyperplane is zero, while the scalar product of $\nabla_{\gls{pp}} \lr{\gls{hiyy1us}}$ and any point in the half plane not containing $\gls{UUm30}$ is positive. Therefore, \eqref{eq:th1.2} approximates \gls{UUm30}. As the intersection of two convex sets yields a convex set \cite{boyd2004convex}, the resulting approximated set $\gls{apUUopti0}$ is convex as well. 
\end{proof}

An approach to finding $\gls{pp}$ is solving the system
\begin{IEEEeqnarray}{c}
\gls{hiyy1us} = \gls{gi}\lr{\gls{Mtot1}}
\end{IEEEeqnarray}
with $\gls{pp} \in \gls{UUxc0}$. The choice of $\gls{pp}$ is not unique. It is possible that $\gls{apUUopti0}$ is empty due to approximating even though case~3 applies. 

\begin{remark}
\label{rem:emptyapp}
If $\gls{apUUopti0} = \emptyset$ in case 3, then $\gls{ui0}$ can be determined by following the procedure of case 2.
\end{remark}

Following the approach in Remark \ref{rem:emptyapp} still provides a solution that minimizes \gls{Pcv1}. However, in case 2 only a single option $\gls{UUopti0} = \gls{umaxi0}$ is given, while case 3 has the advantage of providing a set \gls{UUopti0} with multiple possible inputs \gls{ui0}. Case~3 therefore offers the possibility to then optimize to account for further objectives, given the set of admissible inputs \gls{UUopti0}.

\section{Properties}
\label{sec:properties}

In the following two important properties are analyzed. First, recursive feasibility of the proposed method is shown. This is followed by a proof of convergence. 

\subsection{Recursive Feasibility}

Recursive feasibility guarantees that if the \gls{MPC} optimization problem is solvable at step $k$, it is also solvable at step $k+1$. This needs to hold as \gls{MPC} requires the solution of an optimal control problem at every time step.

\begin{definition}(Recursive Feasibility)
Recursive feasibility of an \gls{MPC} algorithm is guaranteed if
\begin{IEEEeqnarray}{c}
\mathbb{U}^N_k \neq \emptyset \Rightarrow \mathbb{U}^N_{k+1} \neq \emptyset
\end{IEEEeqnarray}
where $\mathbb{U}^N_k$ is the set of admissible inputs $\gls{uNmpck}$ to fulfill the constraint \eqref{eq:cvm_mpc2} from step $k$ to step $k+N$. 
\end{definition}

In the following recursive feasibility will be established for the proposed method. Without loss of generality the \gls{MPC} optimization problem starts at \gls{xi} with $k=0$.

\begin{theorem}
\label{t:recfeas}
The CVPM-\gls{MPC} algorithm in \eqref{eq:mpc_new} is recursively feasible with the general CVPM approach of Section \ref{sec:onestepproblem_gen}.
\end{theorem}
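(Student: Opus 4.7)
The plan is to combine the classical MPC shifted-sequence argument with a case-by-case verification that the CVPM set is always non-empty. Assume $\mathbb{U}^N_k \neq \emptyset$ and fix any feasible predicted input sequence at step $k$ with inputs $\bm{u}_0,\bm{u}_1,\ldots,\bm{u}_{N-1}$ and state trajectory $\bm{x}_0,\bm{x}_1,\ldots,\bm{x}_N$, where $\bm{x}_j \in \mathcal{X}$ for $j=1,\ldots,N-1$ and $\bm{x}_N \in \mathcal{X}_\text{f}$. After applying $\bm{u}_0$, Assumption~\ref{ass:XXf} guarantees an admissible $\bm{u}_N$ with $\bm{x}_{N+1} \in \mathcal{X}_\text{f}$, so the shifted candidate $(\bm{u}_1,\ldots,\bm{u}_{N-1},\bm{u}_N)$ for step $k+1$ satisfies $\tilde{\bm{u}}_j \in \mathbb{U}_{\bm{x},j}$ for $j=1,\ldots,N-1$: its state trajectory coincides with the tail of the original one and its final state lies in the control-invariant terminal set.

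Next I would check the CVPM constraint on the first entry at step $k+1$. By Assumption~\ref{ass:ucvpm} the set $\mathcal{U}_{\text{cvpm},0}$ can always be determined, and the three cases of Section~\ref{sec:onestepproblem_gen} are mutually exclusive and exhaustive. Since $\bm{u}_1$ from the feasible sequence at step $k$ already belongs to the analogue of $\mathbb{U}_{\bm{x},0}$ at step $k+1$, this set is non-empty, which implies $\mathcal{U}_{\text{cvpm},0} \neq \emptyset$ in every case: in Case~1 it equals $\mathbb{U}_{\bm{x},0}$; in Case~3 the zero-violation subset~\eqref{eq:ucase3g} is non-empty by the defining condition~\eqref{eq:case3g}; and in Case~2 the minimizer set~\eqref{eq:ucase2g} is non-empty by Assumption~\ref{ass:ucvpm}. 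In Case~1 the shifted first input $\bm{u}_1$ is itself CVPM-admissible, so the shifted sequence directly witnesses $\mathbb{U}^N_{k+1} \neq \emptyset$.

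The main obstacle lies in Cases~2 and~3, where $\bm{u}_1$ may fall outside the possibly strict subset $\mathcal{U}_{\text{cvpm},0}$ and thus the pure shift is not enough. To handle this I would replace the first entry of the candidate by an arbitrary $\bar{\bm{u}}_0 \in \mathcal{U}_{\text{cvpm},0}$, which is non-empty by the previous paragraph. Since $\bar{\bm{u}}_0 \in \mathbb{U}_{\bm{x},0}$, the immediate state constraint is satisfied, and I would then invoke control invariance of $\mathcal{X}_\text{f}$ from Assumption~\ref{ass:XXf}, together with a reachability-based continuation on the modified tail, to exhibit inputs $\tilde{\bm{u}}_1,\ldots,\tilde{\bm{u}}_{N-1}$ that keep the modified state trajectory in $\mathcal{X}$ and steer it into $\mathcal{X}_\text{f}$. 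Combining the non-empty $\mathcal{U}_{\text{cvpm},0}$ with this feasible continuation establishes $\mathbb{U}^N_{k+1} \neq \emptyset$, which is recursive feasibility; the delicate step is justifying this continuation rigorously, and this is where any implicit reachability assumption on $\mathcal{X}$ or the structure of the terminal controller would be needed.
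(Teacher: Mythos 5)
Your write-up correctly reproduces the first half of the argument (exhaustiveness of the three cases and non-emptiness of $\mathcal{U}_{\text{cvpm},0}$ in each of them), but the second half has a genuine gap that you yourself flag and never close. In Cases 2 and 3 you replace the shifted first input by some $\bar{\bm{u}}_0 \in \mathcal{U}_{\text{cvpm},0}$; this changes the successor state to $\bm{A}\bm{x}_0+\bm{B}\bar{\bm{u}}_0$, so the tail $(\bm{u}_1,\ldots,\bm{u}_{N-1},\bm{u}_N)$ inherited from step $k$ is no longer known to keep the trajectory in $\mathcal{X}$ or to reach $\mathcal{X}_\text{f}$. Assumption~\ref{ass:XXf} only gives invariance once the state is already inside $\mathcal{X}_\text{f}$; it provides no reconnection from the perturbed state, and no reachability hypothesis of the kind you invoke ("a reachability-based continuation on the modified tail") is stated anywhere in the paper. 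So as written, $\mathbb{U}^N_{k+1}\neq\emptyset$ is not established in exactly the cases where the CVPM constraint is active, i.e., the cases that make the theorem non-trivial.

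It is worth knowing that the paper does not use the shift-and-append construction at all. Its proof is entry-wise: (i) by the exhaustive case analysis behind Theorem~\ref{t:CVPM} (and Assumption~\ref{ass:ucvpm}), $\mathcal{U}_{\text{cvpm},0}\neq\emptyset$ at every step; (ii) $\mathbb{U}_{\bm{x},j}\neq\emptyset$ for $j\in\mathbb{I}_{1:N-1}$ is taken from Assumption~\ref{ass:XXf}; and (iii) the two conditions defining $\mathbb{U}_0^*$ in \eqref{eq:Ustar} are declared independent, so their conjunction is satisfiable and $\mathbb{U}_0^*\neq\emptyset$, which is the paper's notion of recursive feasibility. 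In other words, the "delicate step" you isolate is precisely the point the paper dispatches by asserting this decoupling rather than by constructing a feasible continuation from the CVPM-modified first input. To match the paper you should drop the shifted-sequence machinery and argue non-emptiness of each factor of \eqref{eq:Ustar} directly from Theorem~\ref{t:CVPM} and Assumption~\ref{ass:XXf}; to make your own, more classical route airtight you would need the additional reachability/terminal-controller assumption you mention, which the paper neither states nor uses.
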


The proof is divided into two parts. First it is shown that $\gls{UUopti0} \neq \emptyset$ at any step, then recursive feasibility of the optimization problem \eqref{eq:mpc_new} is shown.

\begin{proof}
As shown in the proof of Theorem \ref{t:CVPM} the three cases \eqref{eq:case1g}, \eqref{eq:case2g}, and \eqref{eq:case3g} cover all possibilities with individual, nonempty sets \gls{UUopti0}. This yields that there always exists a $\gls{ui0} \in \gls{UUopti0}$.

As $\mathbb{U}_{\bm{x},j}$ is a non-empty set due to Assumption \ref{ass:XXf}, there exist solutions $\bm{u}_j \in \mathbb{U}_{\bm{x},j}$ for $j \in \mathbb{I}_{1:N-1}$. The first condition in~\eqref{eq:Ustar} considers the first input \gls{ui0}, while the second condition covers the following inputs \gls{uj} with $j \in \mathbb{I}_{1:N-1}$. Therefore, the two conditions are independent and $\gls{UU_total0} \neq \emptyset$ for any \gls{MPC} optimization. The \gls{MPC} algorithm \eqref{eq:mpc_new} is guaranteed recursively feasible.
\end{proof}

The proof for the general CVPM-\gls{MPC} method can be extended for the CVPM-\gls{MPC} approach for uncertainties with symmetric, unimodal \glspl{pdf} in Section \ref{sec:onestepproblem}.

\begin{corollary}
\label{col:recfeas}
If Assumption \ref{ass:Pcv} holds, the CVPM-\gls{MPC} algorithm in \eqref{eq:mpc_new} is recursively feasible with the CVPM approach of Section \ref{sec:onestepproblem}. 
\end{corollary}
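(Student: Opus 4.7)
The plan is to closely mirror the structure of the proof of Theorem \ref{t:recfeas}, but verify non-emptiness of $\gls{UUopti0}$ using the specific construction of Section \ref{sec:onestepproblem} rather than the abstract cases \eqref{eq:case1g}--\eqref{eq:case3g}. The strategy is to show that the three cases \eqref{eq:case1}, \eqref{eq:case2}, and \eqref{eq:case3} exhaust all possibilities, and that under Assumption \ref{ass:Pcv}, each produces a non-empty $\gls{UUopti0}$. Once this is established, the argument for feasibility of the remaining inputs $\bm{u}_j$, $j \in \mathbb{I}_{1:N-1}$, via the control-invariance Assumption \ref{ass:XXf} carries over verbatim from the proof of Theorem \ref{t:recfeas}, so that $\gls{UU_total0} \neq \emptyset$.

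First, I would argue exhaustiveness: the three conditions \eqref{eq:case1}, \eqref{eq:case2}, \eqref{eq:case3} partition the possible orderings of $\gls{hmini1}$ and $\gls{hmaxi1}$ relative to the threshold $\gls{gi}(\gls{Mtot1})$, since $\gls{gi}$ is strictly monotonic and $\gls{hmini1} \leq \gls{hmaxi1}$ by construction. Hence exactly one of the three cases applies at every time step.

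Next, I would verify non-emptiness of $\gls{UUopti0}$ case by case. In case~1, by \eqref{eq:ucase1} $\gls{UUopti0} = \gls{UUxc0}$, which is non-empty since the CVPM-\gls{MPC} problem is assumed feasible at the current step (i.e.\ $\gls{UU_total0} \neq \emptyset$ requires $\gls{UUxc0} \neq \emptyset$). In case~2, $\gls{UUopti0}~=~\{\gls{umaxi0}\}$ is non-empty: the maximizer in \eqref{eq:u_max} exists because $\gls{hiyy1b}$ is continuous in $\gls{ui0}$ and, invoking Lemma \ref{lem:globmax}, attains its maximum on the boundary of the non-empty compact set $\gls{UUxc0}$. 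In case~3, Proposition \ref{prop:apUUopt} produces the approximated set $\gls{apUUopti0}$; if it is non-empty it serves as $\gls{UUopti0}$, and otherwise Remark \ref{rem:emptyapp} directs us to fall back to the case~2 construction, yielding $\gls{UUopti0} = \{\gls{umaxi0}\} \neq \emptyset$ by the same argument.

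The main obstacle I anticipate is case~3, because the hyperplane-based approximation $\gls{apUUopti0}$ from Proposition \ref{prop:apUUopt} can in principle be empty even though the exact super-level set $\gls{UUm30}$ is not; without the fallback provided by Remark \ref{rem:emptyapp}, the specialized approach of Section \ref{sec:onestepproblem} would not inherit recursive feasibility directly from Theorem \ref{t:recfeas}. Once the fallback is invoked, the proof is essentially a bookkeeping exercise: having established $\gls{UUopti0} \neq \emptyset$ in every case, the two conditions in the definition \eqref{eq:Ustar} of $\gls{UU_total0}$ are independent (the first constrains only $\gls{ui0}$, the second only $\bm{u}_j$ for $j \geq 1$), and the latter is non-empty by Assumption \ref{ass:XXf}. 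Therefore $\gls{UU_total0} \neq \emptyset$ at every step, which is precisely recursive feasibility.
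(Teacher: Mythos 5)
Your proposal is correct and follows essentially the same route as the paper's own proof: exhaustiveness of the three cases \eqref{eq:case1}--\eqref{eq:case3}, existence of $\gls{hmini1}$ and $\gls{hmaxi1}$ via Lemma \ref{lem:globmax}, the fallback to the case-2 construction through Remark \ref{rem:emptyapp} when $\gls{apUUopti0}=\emptyset$, and then inheriting $\gls{UU_total0}\neq\emptyset$ from the argument of Theorem \ref{t:recfeas}. Your explicit remark that case~3 only works because of the Remark \ref{rem:emptyapp} fallback is exactly the point the paper's proof also relies on, so there is no gap.
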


\begin{proof}
The proof follows straightforward from Theorem~\ref{t:recfeas}, showing that $\gls{UUopti} \neq \emptyset$ for all three cases \eqref{eq:case1}, \eqref{eq:case2}, and \eqref{eq:case3}. According to Lemma \ref{lem:globmax}, $\gls{hmini1}$ and $\gls{hmaxi1}$ can always be found. Given any value for $\gls{gi}\lr{\gls{Mtot1}}$, exactly one of the three cases is applicable, yielding $\gls{UUopti0} \neq \emptyset$. For case 1 and 2 no approximation is necessary. If $\gls{apUUopti0} = \emptyset$ for case 3, the approach of case 2 is used according to Remark \ref{rem:emptyapp}, i.e., $\gls{UUopti0} = \{\gls{umaxi0}\}$. Therefore, $\gls{UUopti} \neq \emptyset$ for all three cases.
\end{proof}

Theorem \ref{t:recfeas} and Corollary \ref{col:recfeas} show that if the \gls{MPC} problem~\eqref{eq:mpc_gen} is designed to be recursively feasible, the CVPM-\gls{MPC} algorithm~\eqref{eq:mpc_new}, based on \eqref{eq:mpc_gen}, remains recursively feasible. According to Corollary \ref{col:uncsupp}, minimizing $p_{\text{cv},k}$ is independent of uncertainty support, therefore, recursive feasibility is guaranteed if the uncertainty support changes.

\subsection{Convergence}

In the following convergence of the proposed method is shown. In this section the \gls{MPC} optimization starts at \gls{xk}. Considering Remark \ref{rem:ref} it is possible to track a reference varying from the origin, however, without loss of generality we will only consider the regulation of the origin here. 

The uncertain output $\gls{yri}$ can potentially lie close to the origin or even directly in the origin. In order to minimize \gls{Pcv}, an area around $\gls{yri}$ is then inadmissible for the system output $\gls{yi}$. This can lead to the case where the origin is inadmissible for the controlled system, i.e., $\bm{0} \in \gls{XXnorm}$, where 

\begin{IEEEeqnarray}{c}
\gls{XXnorm} =  \setdef[\gls{xk}]{\gls{Pcv}\lr{\bm{u}_{k-1}} > 0,~\gls{xk} = \bm{A} \bm{x}_{k-1} + \bm{B} \bm{u}_{k-1} } \IEEEeqnarraynumspace \label{eq:XXnorm}
\end{IEEEeqnarray}
denotes the bounded and open set of states $\gls{xk}$ with $\gls{Pcv} > 0$, i.e., constraint violation is possible for all $\gls{xk} \in \gls{XXnorm}$. An inadmissible origin, of course, is an issue when investigating the stability of the proposed algorithm. However, we will provide a convergence guarantee under the following two Assumptions concerning the stochastic nature of $\gls{yri}$.

\begin{assumption}[Admissible Origin]
\label{ass:yr_kinfty2}
(a) There exists a $k_{0}~<~\infty$ such that for all $k \geq k_{0}$ it holds that
\begin{IEEEeqnarray}{c}
\label{eq:adm_origin_a}
\bm{0} \notin \gls{XXnorm}~~\forall~k \geq k_{0}.
\end{IEEEeqnarray}
(b) There exists a $k_{y0} < \infty$ and a finite sequence of inputs $\gls{ui}$ such that $\bm{y}_{k}~=~\bm{0}$  for all $k \geq k_{y0} \geq k_{0} $. \\(c) There exists a $k_{\text{case}1,3} < \infty$ and for all $k \geq k_{\text{case}1,3} \geq k_{0}$
\begin{IEEEeqnarray}{c}
\label{eq:adm_origin_c}
\exists~\bm{u}_{k-1}~~\text{s.t.}~~p_{\text{cv},k}\lr{\bm{u}_{k-1}} = 0
\end{IEEEeqnarray}
and $\gls{UUopti} \neq \emptyset$.
\end{assumption}

Assumption \ref{ass:yr_kinfty2} (a) ensures that even if $\gls{yri}$ is occupying the space around the origin for some time, eventually \gls{yri} will be distanced enough that the origin becomes admissible for the controlled system, as the boundedness of the stochastic system state yields a closed admissible space for the controlled system. Assumption \ref{ass:yr_kinfty2} (b) ensures that there is a possibility for the controlled system to reach the origin. 

With Assumption \ref{ass:yr_kinfty2} (c) it is guaranteed that either case 1 or case 3 is applicable if Assumption \ref{ass:yr_kinfty2} (a) holds. This ensures that $\gls{Pcv}=0$ at some time after the origin becomes admissible for the controlled system. 

\begin{lemma}
\label{lem:tildeXX}
If Assumptions \ref{ass:yr_kinfty2} holds, there exists a closed, control invariant set $\tilde{\gls{XX}}_k = \gls{XX} \setminus \gls{XXnorm}$ for $k \geq k_{\text{case}1,3}$, which contains the origin.
\end{lemma}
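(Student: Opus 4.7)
The plan is to establish the three required properties—closedness, membership of the origin, and control invariance—one at a time, exploiting the structural definitions and the parts of Assumption~\ref{ass:yr_kinfty2}.

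First I would handle closedness. The state constraint set $\gls{XX}$ is convex and, by standard MPC practice, closed. By the definition in~\eqref{eq:XXnorm}, $\gls{XXnorm}$ is explicitly stated to be open, so its complement $\mathbb{R}^n \setminus \gls{XXnorm}$ is closed. Writing $\tilde{\gls{XX}}_k = \gls{XX} \cap (\mathbb{R}^n \setminus \gls{XXnorm})$ makes it a finite intersection of closed sets, hence closed. Membership of the origin for $k \geq k_{\text{case}1,3} \geq k_{0}$ is immediate from Assumption~\ref{ass:yr_kinfty2}(a): the condition $\bm{0} \notin \gls{XXnorm}$ together with $\bm{0} \in \gls{XX}$ (assumed for the regulation problem) gives $\bm{0} \in \tilde{\gls{XX}}_k$.

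The substantive step is control invariance. Given any $\bm{x}_k \in \tilde{\gls{XX}}_k$ with $k \geq k_{\text{case}1,3}$, I need to exhibit an admissible $\bm{u}_k$ such that $\bm{x}_{k+1} = \bm{A}\bm{x}_k + \bm{B}\bm{u}_k \in \tilde{\gls{XX}}_{k+1}$. By Assumption~\ref{ass:yr_kinfty2}(c), at time $k+1$ either Case~1 or Case~3 of Section~\ref{sec:onestepproblem_gen} is applicable and $\gls{UUopti} \neq \emptyset$, so there exists at least one input $\bm{u}_k$ with $p_{\text{cv},k+1}(\bm{u}_k) = 0$. By the definition~\eqref{eq:XXnorm} of $\mathcal{X}_{\text{cv},k+1}$, this means $\bm{x}_{k+1} \notin \mathcal{X}_{\text{cv},k+1}$. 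It remains to ensure $\bm{x}_{k+1} \in \gls{XX}$, which follows because $\gls{UUopti} \subseteq \mathbb{U}_{\bm{x},0}$ by construction in Section~\ref{sec:onestepproblem_gen}, so the state constraint \eqref{eq:xcon} is enforced; thus $\bm{x}_{k+1} \in \tilde{\gls{XX}}_{k+1}$.

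The main obstacle I anticipate is being careful with the time dependence: $\tilde{\gls{XX}}_k$ varies with $k$ because $\gls{XXnorm}$ does, so control invariance is not of the classical "one set, same input set" type but rather a forward invariance along the sequence $\{\tilde{\gls{XX}}_k\}_{k \geq k_{\text{case}1,3}}$. I would emphasize that Assumption~\ref{ass:yr_kinfty2}(a) applies uniformly for all $k \geq k_{0}$ (and hence for all $k \geq k_{\text{case}1,3}$), so both $\bm{0}$ and the "zero-probability" input from part~(c) exist at every relevant step, yielding the invariance along the whole sequence. A short remark would clarify that closedness and origin-membership carry over to each $\tilde{\gls{XX}}_k$ in the sequence by the identical argument given above.
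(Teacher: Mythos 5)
Your proposal is correct and follows essentially the same route as the paper's proof: closedness from $\gls{XX}$ closed and $\gls{XXnorm}$ open, the origin via Assumption~\ref{ass:yr_kinfty2}(a), and control invariance from Assumption~\ref{ass:yr_kinfty2}(c) guaranteeing a zero-violation input that also respects the state constraints (the paper invokes Theorem~\ref{t:recfeas} for the latter, you use $\gls{UUopti0} \subseteq \gls{UUxc0}$ directly, which is the same fact). Your explicit handling of the time dependence, concluding $\bm{x}_{k+1} \in \tilde{\gls{XX}}_{k+1}$ rather than the paper's looser $\bm{x}_{k+1} \in \tilde{\gls{XX}}_k$, is a minor but welcome tightening.
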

\begin{proof}
As cases 1 or 3 are applied, the space blocked by $\gls{XXnorm}$ around \gls{yri} with non-zero constraint violation probability can be regarded as a hard constraint. This yields $\gls{xk} \notin \gls{XXnorm}$ for all $k \geq k_{\text{case}1,3}$. As \gls{XX} is closed and $\gls{XXnorm}$ is open, the resulting set $\tilde{\gls{XX}}_k$ is closed. As $\gls{xk} \in \tilde{\gls{XX}}_k \subseteq \gls{XX}$, there exists a $\gls{ui}$ such that $\bm{x}_{k+1} \in \gls{XX}$ according to Theorem \ref{t:recfeas}. Assumption~\ref{ass:yr_kinfty2}~(c) ensures that \gls{UUopti} is not empty, therefore $\bm{x}_{k+1} \in \tilde{\gls{XX}}_k$ and $\tilde{\gls{XX}}_k$ is control invariant. 
\end{proof}

The set $\tilde{\gls{XX}}$ consists of the states which ensure constraint satisfaction of \gls{XX} and yield $\gls{Pcv} = 0$ for $k \geq k_{\text{case}1,3}$.

\begin{assumption}[Terminal Constraint Set]
\label{ass:XfinXXt}
The terminal constraint set \gls{XXf} is a subset of $\tilde{\gls{XX}}_k$, i.e., $\gls{XXf} \subset \tilde{\gls{XX}}_k$.
\end{assumption}

In the following convergence of the proposed method is addressed.

\begin{theorem}
\label{t:convergence}
If Assumptions \ref{ass:termcost} and \ref{ass:yr_kinfty2} hold, the proposed CVPM-\gls{MPC} method in Section \ref{sec:onestepproblem_gen} satisfies that $\gls{xk}$ converges to~$\bm{0}$ for $k \rightarrow \infty$.
\end{theorem}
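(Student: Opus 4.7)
The plan is to split the argument into two regimes according to Assumption~\ref{ass:yr_kinfty2}. For $k < k_{\text{case}1,3}$ (finitely many steps), recursive feasibility from Theorem~\ref{t:recfeas} ensures the CVPM-\gls{MPC} problem~\eqref{eq:mpc_new} admits a solution at every step, and since the input set is bounded, $\gls{xk}$ stays bounded throughout this finite transient. The optimal cost need not decrease monotonically here, but the finite length of the phase makes this immaterial for asymptotic behaviour.

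For $k \geq k_{\text{case}1,3}$, Assumption~\ref{ass:yr_kinfty2}(c) guarantees that case~1 or case~3 of Section~\ref{sec:onestepproblem_gen} applies, so every admissible $\bm{u}_0 \in \gls{UUopti}$ yields $p_{\text{cv},k+1} = 0$. By Lemma~\ref{lem:tildeXX}, the set $\tilde{\gls{XX}}_k = \gls{XX} \setminus \gls{XXnorm}$ is closed and control invariant, contains the origin, and by Assumption~\ref{ass:XfinXXt} contains $\gls{XXf}$. Consequently the CVPM-\gls{MPC} problem restricted to this regime reduces to a \emph{standard} constrained \gls{MPC} problem with hard state constraint $\gls{xip1} \in \tilde{\gls{XX}}_k$, stage cost $l(\bm{x}_j, \bm{u}_j)$, and terminal ingredients satisfying Assumption~\ref{ass:termcost}.

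I would then invoke the classical Lyapunov descent argument. Taking as a feasible candidate at step $k+1$ the shifted tail of the optimal sequence at step $k$, appended with a local terminal control law keeping the state inside $\gls{XXf}$, Assumption~\ref{ass:termcost} yields the standard inequality
\[
V_N^*(\gls{xip1}) \leq V_N^*(\gls{xk}) - l(\gls{xk}, \bm{u}_k^*)
\]
for every $k \geq k_{\text{case}1,3}$. Since $V_N^*$ is non-negative and $V_N(\bm{0},\bm{0}) = 0$, summing this descent inequality together with the Lyapunov property of $V_\text{f}$ on $\gls{XXf}$ drives $\gls{xk} \to \bm{0}$ as $k \to \infty$.

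The main obstacle will be rigorously verifying that the shifted candidate input sequence remains admissible under the CVPM constraint $\bm{u}_0 \in \gls{UUopti}$ at step $k+1$: once the state lies in $\tilde{\gls{XX}}_k$, the shifted terminal control law must continue to produce first inputs satisfying case~1 or case~3, so that $\gls{UUopti}$ does not exclude the candidate. Assumption~\ref{ass:XfinXXt}, which ties $\gls{XXf}$ to the constraint-violation-free region $\tilde{\gls{XX}}_k$, is precisely what guarantees this closure property, and combining it with the control invariance of $\gls{XXf}$ from Assumption~\ref{ass:XXf} completes the construction and hence the convergence argument.
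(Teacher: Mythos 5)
Your proposal is correct and follows essentially the same route as the paper: a reduction, for $k \geq k_{\text{case}1,3}$, to a standard constrained MPC problem on the closed, control-invariant set $\tilde{\gls{XX}}_k$ (Lemma~\ref{lem:tildeXX}, Assumption~\ref{ass:XfinXXt}), whose asymptotic stability then follows from the classical Lyapunov/tail-sequence argument under Assumption~\ref{ass:termcost}, with recursive feasibility (Theorem~\ref{t:recfeas}) covering the finite transient. Your explicit descent inequality and the check that the shifted candidate's first input stays in \gls{UUopti} (via $\gls{XXf}\subset\tilde{\gls{XX}}_k$, so case~1 or~3 admits it) are points the paper leaves implicit by citing the standard proof of Rawlings et al., but they are the same argument, not a different one.
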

\begin{proof}
First, the \gls{MPC} algorithm in \eqref{eq:mpc_gen} will be considered without the norm constraint \eqref{eq:hc}. As $V_N\lr{\gls{xi},\gls{uNmpc}}$ is a Lyapunov function in \gls{XX}, given Assumption \ref{ass:termcost}, the \gls{MPC} algorithm of \eqref{eq:mpc_new} without \eqref{eq:hc} is asymptotically stable, following the \gls{MPC} stability proof of Rawlings et al.~\cite[Chap.~2.4]{RawlingsMayneDiehl2017}.

Now the CVPM-\gls{MPC} method is considered. According to Theorem \ref{t:recfeas}, for all $k$, $\gls{xk} \in \gls{XX}$ there exists a feasible \gls{uNmpck} such that $\bm{x}_{k+1}$ remains in \gls{XX}. Lemma \ref{lem:tildeXX} ensures that $\bm{x}_{k^*}  \in \tilde{\gls{XX}}_k$ for $k^* \geq k_{\text{case}1,3}$, where $\tilde{\gls{XX}}_k$ replaces \gls{XX} to ensure constraint satisfaction of the norm constraint. The set $\tilde{\gls{XX}}_k$ is closed, control invariant, contains the origin according to Assumption \ref{ass:yr_kinfty2}, and $\gls{XXf} \subseteq \tilde{\gls{XX}}_k$, given Assumption \ref{ass:XfinXXt}. Therefore, the system \eqref{eq:system}, controlled by the CVPM-\gls{MPC} algorithm in \eqref{eq:mpc_new}, is asymptotically stable and converges to $\bm{0}$ for $k > k^*$ and $k \rightarrow \infty$, similar to the \gls{MPC} algorithm in \eqref{eq:mpc_gen}.
\end{proof}

In Theorem \ref{t:convergence} it is only shown that the system converges to the origin once the random system fulfills Assumption \ref{ass:yr_kinfty2}. However, every time the stochastic output allows the system to reach the origin, the system will move towards the origin. The system state~$\gls{xk}$ remains at $\bm{0}$ until $\gls{yri}$ moves in such a way that the origin has non-zero constraint violation probability. As the main goal is to ensure minimum constraint violation probability of \eqref{eq:cc}, \gls{yi} will move away from the origin to minimize \gls{Pcv} if \gls{yri} behaves in such a way that it causes $\gls{Pcv} > 0$ in the origin.

\begin{corollary}
\label{col:convergence}
If Assumptions \ref{ass:yr_kinfty2} holds, the proposed CVPM-\gls{MPC} method in Section \ref{sec:onestepproblem} for uncertainties with symmetric, unimodal \glspl{pdf} satisfies that $\gls{xk} \in \gls{XX}$ for all $k$ and that $\gls{xk}$ converges to $\bm{0}$ for $k \rightarrow \infty$.
\end{corollary}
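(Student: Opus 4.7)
The plan is to mirror the argument of Theorem \ref{t:convergence} while substituting the recursive feasibility guarantee and the case analysis of the symmetric, unimodal setting of Section \ref{sec:onestepproblem}. First, I would invoke Corollary \ref{col:recfeas}, which ensures that the CVPM-\gls{MPC} optimization problem \eqref{eq:mpc_new} with the set \gls{UUopti0} constructed from cases \eqref{eq:case1}, \eqref{eq:case2}, and \eqref{eq:case3} is recursively feasible. Feasibility at each step, together with constraint \eqref{eq:UUxc_nc} being absorbed into $\gls{UU_total0}$ via \eqref{eq:Ustar}, immediately yields $\gls{xk} \in \gls{XX}$ for all $k$ whenever the initial state is feasible.

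Next, I would establish convergence by showing that, after a finite transient, the CVPM-\gls{MPC} problem reduces to a standard \gls{MPC} problem on a closed, control invariant set containing the origin. Assumption \ref{ass:yr_kinfty2}(a)--(b) guarantees the existence of times $k_0$ and $k_{y0}$ after which the origin becomes admissible and reachable for the controlled system. Assumption \ref{ass:yr_kinfty2}(c) then ensures that for all $k \geq k_{\text{case}1,3}$ either case 1 or case 3 applies with $\gls{UUopti} \neq \emptyset$. Here I would translate this condition to the symmetric unimodal setting by observing that, by Lemma \ref{lem:Pcv}, $\gls{Pcv} = 0$ is equivalent to $\gls{hiyy1b} \geq \gls{gi}\lr{\gls{Mtot1}}$, so the partition \eqref{eq:case1}--\eqref{eq:case3} coincides with the case distinction used in Section \ref{sec:onestepproblem_gen}. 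By Lemma \ref{lem:tildeXX}, there is then a closed, control invariant set $\tilde{\gls{XX}}_k = \gls{XX}\setminus \gls{XXnorm}$ containing the origin, and by Assumption \ref{ass:XfinXXt}, $\gls{XXf}\subset\tilde{\gls{XX}}_k$.

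I would then appeal to Assumption \ref{ass:termcost} (Lyapunov properties of $V_N$ and $V_\text{f}$) and invoke the standard \gls{MPC} convergence argument (e.g.\ Rawlings, Mayne, and Diehl, Chapter~2.4) on $\tilde{\gls{XX}}_k$ to conclude that $\gls{xk}$ converges to $\bm{0}$ as $k \rightarrow \infty$, in complete analogy with the last paragraph of the proof of Theorem \ref{t:convergence}.

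The principal obstacle is handling the convex approximation step of Proposition \ref{prop:apUUopt}: in case 3 the admissible set $\gls{UUopti0}$ is replaced by a strictly smaller half-space intersection $\gls{apUUopti0}$, and the fallback of Remark \ref{rem:emptyapp} kicks in when $\gls{apUUopti0} = \emptyset$, yielding only the singleton $\{\gls{umaxi0}\}$ with potentially $\gls{Pcv1} > 0$. To close this gap I would argue that once Assumption \ref{ass:yr_kinfty2}(c) takes effect, any feasible $\bm{u}_{k-1}$ with $p_{\text{cv},k} = 0$ lies in the super-level set $\gls{UUm30}$; choosing the reference point $\gls{pp}$ to coincide with such an input (or with the closest boundary point on $\gls{hic0}\cap\gls{UUxc0}$) ensures $\gls{apUUopti0}\neq\emptyset$, so the case~2 fallback is not triggered for $k \geq k_{\text{case}1,3}$ and the invariance of $\tilde{\gls{XX}}_k$ is preserved. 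With this observation the remainder of the argument is identical to that of Theorem \ref{t:convergence}.
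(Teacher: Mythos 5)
Your proposal is correct and follows essentially the same route as the paper: it mirrors the proof of Theorem \ref{t:convergence}, translating Assumption \ref{ass:yr_kinfty2} into the deterministic conditions $\gls{hykb} \geq \f{\gls{gi}}{c_k + w_{\text{max},k-1}}$ via the substitute function, and then reuses Lemma \ref{lem:tildeXX}, Assumption \ref{ass:XfinXXt}, and the standard \gls{MPC} stability argument. Your additional paragraph on choosing $\gls{pp} \in \gls{hic0}\cap\gls{UUxc0}$ so that $\gls{apUUopti0} \neq \emptyset$ in case 3 is a sound refinement of a point the paper's proof passes over without comment.
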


\begin{proof}
The proof is similar to the proof of Theorem \ref{t:convergence}. The set \gls{XXnorm} in \eqref{eq:XXnorm} can be expressed as
\begin{IEEEeqnarray}{rl}
\gls{XXnorm} = \setdef[\gls{xk}]{\gls{hykb} < \f{\gls{gi}}{c_k + w_{\text{max},k-1}},~\bm{y}_k = \bm{C} \bm{x}_k}. \IEEEeqnarraynumspace
\end{IEEEeqnarray}
Equation \eqref{eq:adm_origin_a} is satisfied by
\begin{IEEEeqnarray}{c}
{\gls{h0ykb}} \geq \f{\gls{gi}}{c_k + w_{\text{max},k-1}}~~\forall~k \geq k_{0}
\end{IEEEeqnarray}
while \eqref{eq:adm_origin_c} transforms into
\begin{IEEEeqnarray}{c}
\exists~\bm{u}_{k-1}~~\text{s.t.}~~\gls{hykb} \geq \f{\gls{gi}}{c_k + w_{\text{max},k-1}}\IEEEeqnarraynumspace
\end{IEEEeqnarray}
for the CVPM-\gls{MPC} method in Section \ref{sec:onestepproblem}.

Similar to Lemma \ref{lem:tildeXX}, given the open and constant set \gls{XXnorm}, $\tilde{\gls{XX}}_k$ is closed, constant, control invariant, and contains the origin given Assumption \ref{ass:yr_kinfty2}. With the \gls{MPC} algorithm \eqref{eq:mpc_gen} and $k > k^*$, $k \rightarrow \infty$ the system \eqref{eq:system} is asymptotically stable and therefore converges to $\bm{0}$.
\end{proof}

Therefore, if the origin is admissible, the controlled system will converge. However, satisfying the norm constraint has priority over converging to the origin.

\section{Discussion of the Proposed CVPM-\gls{MPC} Method}
\label{sec:discussion}

One could argue now that the proposed algorithm is a combination of \gls{RMPC} in the first step and, potentially, \gls{SMPC} in the following steps. While there are some similarities to this combination, we solve a different problem. The most important difference is that the constraint violation probability is minimized in the first predicted step and the initial uncertainty probability is not required to be 0. \gls{RMPC} approaches require constraint satisfaction initially and ensure that constraints are satisfied throughout the prediction horizon.

Our proposed CVPM-\gls{MPC} method is more closely related to \gls{SMPC} than \gls{RMPC}, as constraint violations are possible. Nevertheless, the suggested method can be interpreted as lying between \gls{SMPC} and \gls{RMPC}. The results are more conservative than \gls{SMPC}, as a zero percent constraint violation probability is found if possible, i.e., $\gls{Pcv} = 0$ in \eqref{eq:cc}, but less conservative than \gls{RMPC}. An advantage over both, \gls{SMPC} and \gls{RMPC}, is the ability to minimize the constraint violation probability and to successfully cope with sudden uncertainty support changes, as recursive feasibility can still be guaranteed. The uncertainty support can change due to unexpected events or modeling inaccuracies.

In \gls{SMPC} with chance constraints recursive feasibility is a major issue. For example, an unexpected realization of the uncertainty at step $k$, whose probability lies below the chance constraint risk parameter at step $k$, leads to a state at step $k+1$ with no solution to the optimization problem if the required risk parameter of the chance constraint cannot be met. An option to regain feasibility is to solve an alternative optimization problem or apply an input that was previously defined. However, these alternatives do not necessarily lead to a solution that yields the lowest constraint violation probability. Furthermore, it is possible to soften chance constraints by using slack variables in the cost function. However, this approach is not acceptable in applications where the chance constraint represents a safety constraint. If a slack variable is introduced, it competes with other objectives within the cost function and does not ensure constraint satisfaction. The proposed CVPM-\gls{MPC} method always finds the optimal input that results in the lowest constraint violation probability while remaining recursively feasible. 

\gls{RMPC} guarantees recursive feasibility but at the cost of reduced efficiency, as worst-case scenarios need to be taken into account. Additionally, if the support of the uncertainty can suddenly change over time, e.g., the future motion of an object becomes more uncertain due to a changing environment, \gls{RMPC} can become too conservative to be applicable. A robust solution can only be obtained by always considering the largest possible uncertainty support. The proposed method deals with this by adjusting to changing uncertainty supports at every step, as will be illustrated in Section \ref{sec:results}. A suddenly or unexpectedly increasing uncertainty support, e.g., due to an inaccurate prediction model, may lead to increased constraint violation probability for a limited time after the support changes. Before the support changes, the optimized inputs of the proposed algorithm lead to a less conservative result than \gls{RMPC}, while ensuring that the constraint violation probability is kept at a minimal level immediately after the change.

In the proposed method we only consider minimizing the constraint violation for the first predicted step. It is possible to consider multiple steps by increasing the uncertainty support for each considered step as described in Appendix \ref{sec:appendixb}, however, this leads to a more conservative solution. For every extra predicted step in which the constraint violation probability is minimized, the maximal possible uncertainty value must be considered. This yields a highly restrictive set of admissible inputs which minimize the constraint violation probability over multiple predicted steps. As it is assumed that the support of the uncertainty \gls{pdf} can change over time, considering multiple steps with the initially known support does not guarantee lower constraint violation probability for multiple steps. If the support increases the previously obtained multi-step CVPM-\gls{MPC} solution becomes invalid. Therefore, given an updated uncertainty support at each step, it is a reasonable approach to only minimize the constraint violation probability for the first predicted step, resulting in the safest solution at the current step. It is possible to consider the norm constraint for collision avoidance in multiple predicted steps by either formulating a chance constraint, as mentioned in Remark \ref{rem:cc}, or a robust constraint. However, this can result in infeasibility of the optimization problem, particularly if the uncertainty support varies over time. Despite only considering the norm constraint for the next predicted step, it is still beneficial to use an \gls{MPC} horizon $N > 1$. Other objectives are optimized over the entire horizon, given that the first input is included in the set \gls{UUopti0}, which potentially consists of multiple admissible inputs that all minimize the constraint violation for the next step. 

Applying the CVPM-\gls{MPC} approach possibly results in oscillating behavior. As long as case 1 is valid, the proposed method does not affect the optimization, as $\gls{UUopti0} = \gls{UUxc0}$. Once case 2 is active, a solution is found which minimizes the probability of constraint violation, ignoring the reference and potentially moving from the reference, as only one input is admissible. When case 1 is valid again, the optimized the reference is tracked again until, possibly, case 2 becomes active again. This can be improved by considering the norm constraint as a chance constraint for multiple predicted steps, however, recursive feasibility is not guaranteed, as mentioned before.

The main focus of the suggested method is to minimize the constraint violation probability. It is clear that stability cannot always be guaranteed, as the origin can be excluded from the admissible state set. We consider a narrow road where a bicycle is between the controlled vehicle and the vehicle reference point. If the road is too narrow for the vehicle to pass, it will remain behind the bicycle and never reach the reference point, i.e., Assumption \ref{ass:yr_kinfty2} (b) is violated. However, Assumption~\ref{ass:yr_kinfty2} implies that the origin is not inadmissible at all times, and once the origin is admissible, the controlled system converges. 

It is also important to note that minimizing the constraint violation probability has priority over other optimization objectives. Especially in safety-critical applications this can be of major interest, e.g., an autonomous car must ensure that the collision probability is always minimal, prior to reducing energy or increasing passenger comfort. If \gls{SMPC} were to be applied in such scenarios, the question would arise of how to choose the \gls{SMPC} risk parameter $\beta_k$. A large $\beta_k$ yields efficient behavior but might be unacceptable due to an insufficient safety level. Finding a reduced value for $\beta_k$ in \gls{SMPC} is challenging, as even very small risk parameters allow for constraint violations, while $\beta_k = 0$ does not yield a chance constraint and the advantages of \gls{SMPC} are lost. In the proposed CVPM-\gls{MPC} the task of appropriately choosing the risk parameter is not required.

For the approach in Section \ref{sec:onestepproblem} the \gls{pdf} \gls{fPw} does not need to be known exactly as long as it fulfills Assumption \ref{ass:Pcv}. If \gls{fPw} is symmetric and unimodal, it is ensured that increasing \gls{Mkb} results in a lower constraint violation probability \gls{Pcv}.

The proposed method is especially useful in collision avoidance applications, which are either in $2$- or $3$-dimensional space. While applying the proposed method in $2$D is straightforward, $3$-dimensional applications can be more challenging to solve, especially finding \gls{pp} in \eqref{eq:Uopt_th}.

~

\section{Simulation Results}
\label{sec:results}

In the following a simulation is presented and discussed to further explain the general idea and its application. This collision avoidance scenario with two vehicles illustrates an application where the proposed method is beneficial. The simulations were run in MATLAB on a standard desktop computer using MPT3 \cite{HercegEtalMorari2013} and YALMIP \cite{Lofberg2004}. Solving a single optimization of the \gls{MPC} algorithm takes $\SI{54}{\milli\second}$ on average. All quantities are given in SI units.

\subsection{Collision Avoidance Simulation}

A collision occurs if the distance between two objects becomes too small. This distance can be represented by a norm constraint. The priority is then enforcing the norm constraint, or if not feasible, minimizing the probability of violating the norm constraint.

We consider the example mentioned in Sections \ref{sec:introduction} and \ref{sec:problem} where a controlled vehicle avoids collision with a bicycle, referred to as obstacle in the following. The controlled vehicle is approximated by the radius $\gls{rcv} = 2.0$ and the obstacle is approximated by the radius $\gls{rr} = 0.8$ and is subject to stochastic motion in a bounded area, e.g., a road. The circles are chosen to fully cover the individual shapes of the controlled vehicle and obstacle. The scenario setup is shown in Figure~\ref{fig:scenario}.
\begin{figure}
\centerline{\includegraphics[trim=0 0 0 0,width=0.5\columnwidth]{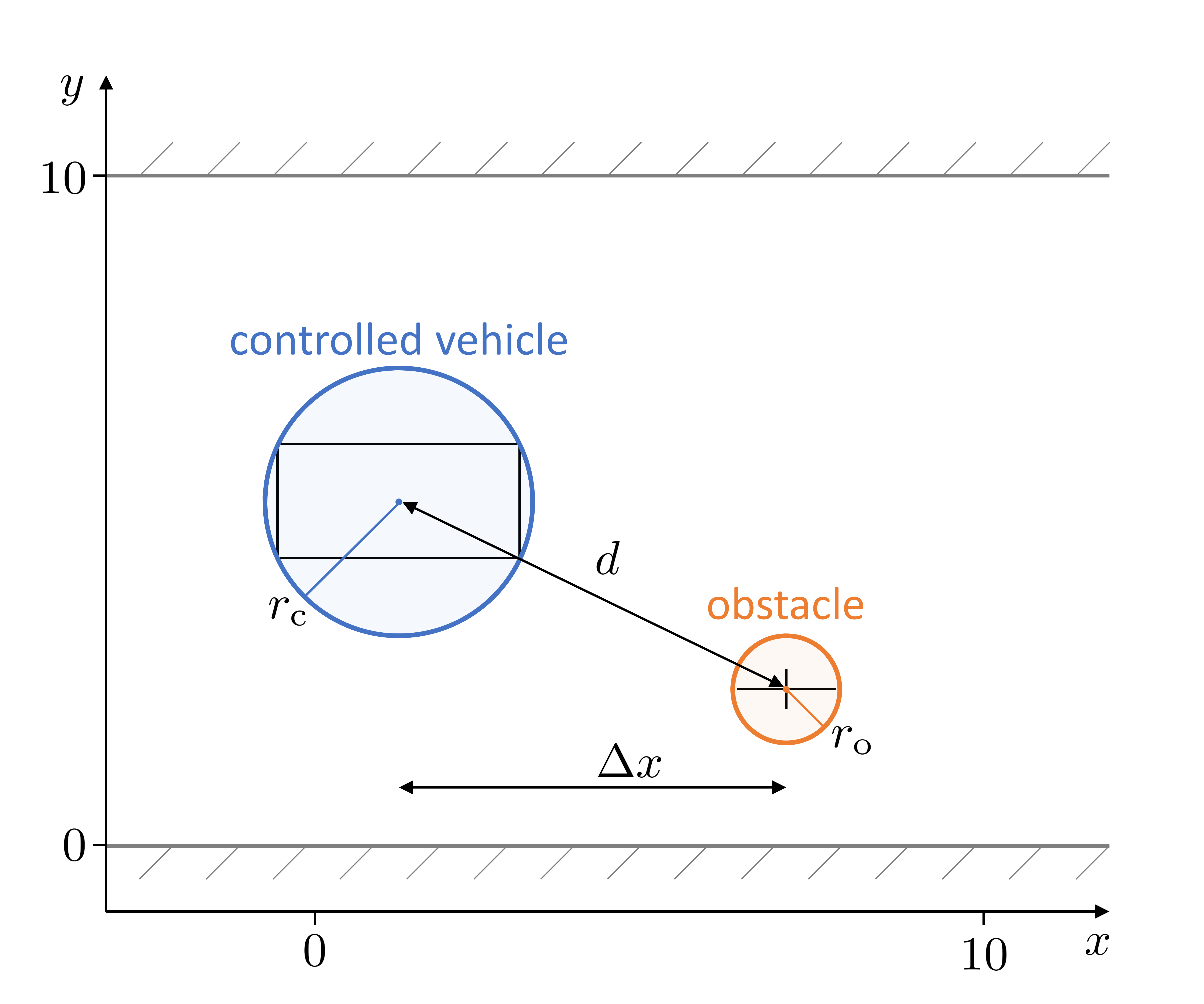}}
\caption{Vehicle avoidance scenario. Approximated shapes of the controlled vehicle (car) and obstacle (bicycle) are indicated by black lines within the objects.}
\label{fig:scenario}
\end{figure}
The continuous system dynamics of the controlled vehicle in $\gls{simux}$ and $\gls{simuy}$ direction are given by 
\begin{IEEEeqnarray}{c}
\dot{\bm{x}} = \begin{bmatrix} 
\dot{x}\\ \dot{y}
\end{bmatrix}
= \begin{bmatrix}
v_{\gls{simux}}\\v_{\gls{simuy}}
\end{bmatrix}
=\begin{bmatrix}
u_{1} \\ u_{2}
\end{bmatrix} \IEEEeqnarraynumspace
\end{IEEEeqnarray}
where $\bm{x} = [\gls{simux}, \gls{simuy}]^\top$ and $[v_{\gls{simux}}, v_{\gls{simuy}}]^\top$ are the position and velocity in a 2D environment, respectively. The inputs are given by $[u_{1}, u_{2}]^\top$. Using zero-order hold with sampling time $\Delta t = 0.1$ yields the discretized system given by \eqref{eq:system} with
\begin{IEEEeqnarray}{c}
\bm{A}=\begin{bmatrix}
1 & 0 \\ 0 & 1 
\end{bmatrix},~~
\bm{B}=\begin{bmatrix}
e^{\Delta t}-1 & 0 \\ 0 & e^{\Delta t}-1 \
\end{bmatrix},~~
\bm{C}=\begin{bmatrix}
1 & 0 \\ 0 & 1
\end{bmatrix}. \IEEEeqnarraynumspace
\end{IEEEeqnarray}
We will consider the input constraints
\begin{IEEEeqnarray}{c}
\label{eq:simu2u}
\gls{UUi} = \setdef[\bm{u} = \begin{bmatrix}
v_{\gls{simux}}\\ v_{\gls{simuy}}
\end{bmatrix}]{1 \leq v_{\gls{simux}} \leq 9,~|v_{\gls{simuy}}| \leq 3.5}. \IEEEeqnarraynumspace
\end{IEEEeqnarray}
In $\gls{simux}$-direction there exists a minimum velocity $v_{\gls{simux},\text{min}} = 1$ to ensure that the controlled vehicle is always moving forward, which also limits the potential oscillating behavior due to the CVPM-\gls{MPC} approach. We also consider the state constraint
\begin{IEEEeqnarray}{c}
\label{eq:simu2x}
\gls{XX} = \setdef[\bm{x} = \begin{bmatrix}
\gls{simux}\\\gls{simuy}
\end{bmatrix}]{y_{\text{lb}}\leq \gls{simuy} \leq y_{\text{ub}}} \IEEEeqnarraynumspace
\end{IEEEeqnarray}
where $y_{\text{lb}}= 2.0$ and $y_{\text{ub}} = 8.0$ are the boundaries of the road minus the radius \gls{rcv}.

The behavior of the obstacle with random behavior is given by 
\begin{IEEEeqnarray}{c}
\gls{yri}  = \bm{y}_{\text{r},0} + \sum_{i=0}^{k-1} \lr{\bm{u}_{\text{r},i} + \bm{w}_i}
\end{IEEEeqnarray}
depending on the initial output $\bm{y}_{\text{r},0}$, the input $\bm{u}_{\text{r},k}$, and the realization $\gls{wi}$ of the random variable $\gls{Wi} \sim \gls{fPw}$  and $\bm{y}_\text{r} = [\gls{simuxr},\gls{simuyr} ]^\top$. We assume $\gls{fPw}$ to be symmetric, unimodal, and truncated, resulting in the support of $\gls{fPw}$
 \begin{IEEEeqnarray}{c}
 \supp\lr{\gls{fPw}} = \setdef[\bm{w}_k]{\norm{\bm{w}_k}_2 \leq \gls{wimax}}
\end{IEEEeqnarray}
where \gls{wimax} is the radius of the support boundary of \gls{Wi}. The physical interpretation of \gls{wimax} is that it is the maximum uncertain distance the obstacle can move in one step, additionally to the deterministic distance \gls{uri}. At step $k$ the controlled vehicle knows the obstacle position \gls{yri} and deterministic input $\bm{u}_{\text{r},k}$, but \gls{wi} is unknown. 

As the main aim of this simulation is to minimize the collision probability, an expression for this probability is necessary in order to analyze the simulation results. The collision probability at step $k$ between the two vehicles will be denoted by \gls{Pcoli} and it has finite support as \gls{fPw} is truncated. In this example a norm constraint is used to avoid a collision, i.e., the norm constraint violation probability is minimized. Therefore, the probability of a collision \gls{Pcoli} is defined analogous to \gls{Pcv} in Section \ref{sec:problem}. The derivation and expression for the collision probability \gls{Pcoli} is omitted here due to readability. Details can be found in Appendix~\ref{sec:appendixa}.

The collision probability \gls{Pcoli} depends on the Euclidean distance
 \begin{IEEEeqnarray}{c}
\gls{d} = \norm{\gls{yi}-\overline{\bm{y}}_{\text{r},k}}_2
\end{IEEEeqnarray}
between the controlled vehicle and obstacle. Similar to \eqref{eq:hc2} a norm-constraint can be formulated where $c_k=\gls{rti}$ can be interpreted as the minimal distance between the controlled vehicle and the obstacle such that a collision is avoided. The support of \gls{Pcoli} results from adding the radius of the controlled vehicle and the obstacle to $\supp\lr{\gls{fPw}}$, i.e.,
\begin{IEEEeqnarray}{c}
\supp\lr{\gls{Pcoli}} = \setdef[\gls{yi}]{\gls{d} \leq \gls{rti}} 
\end{IEEEeqnarray}
where $\gls{rti} = w_{\text{max},k-1}+\gls{rr}+\gls{rcv}$ is the safety distance required to avoid a collision between the controlled vehicle and the obstacle, taking into account the radius of both vehicles, $\gls{rr}$ and $\gls{rcv}$, and the maximal obstacle step $w_{\text{max},k-1}$. Similar to Lemma~\ref{lem:Pcv} for \gls{Pcv}, \gls{Pcoli} is decreasing for increasing \gls{d}.

We choose $h(\xi) = \xi^2$, which is strictly monotonically increasing with $\xi$. This yields 
\begin{IEEEeqnarray}{c}
\gls{hykb}= \gls{Mkm1b}^2\IEEEeqnarraynumspace
\end{IEEEeqnarray}
which can be considered a substitution of the probability function \gls{Pcoli}. 

The controlled vehicle uses the CVPM-\gls{MPC} algorithm \eqref{eq:mpc_new} with $N=10$ and
\begin{IEEEeqnarray}{c}
\bm{Q}=\begin{bmatrix}
1 & 0  \\ 0 & 1
\end{bmatrix},~~
\bm{R}=\begin{bmatrix}
0.1 & 0 \\ 0 & 0.1
\end{bmatrix}.
\end{IEEEeqnarray}
The $x$-position references for the controlled vehicle are obtained by $x_{\text{ref},k}=x_0 + v_{x,\text{ref}} k \Delta t$, where $v_{x,\text{ref}}$ is the reference velocity in $x$-direction.

In the following two scenarios will be analyzed. In the first scenario the controlled vehicle is located close to its state boundary, i.e., the road boundary, showing that the norm constraint can be minimized in the presence of state constraints. In the second scenario the obstacle uncertainty support will suddenly increase. The orientation $\phi$ of the controlled vehicle in Figures \ref{fig:simu1run} and \ref{fig:simu2run} is approximated by
\begin{IEEEeqnarray}{c}
\phi = \arctan \frac{u_2}{u_1}.
\end{IEEEeqnarray}

\subsubsection{Active State Constraint}
In the first simulation it is shown that the proposed method is applicable if state constraints are active. The reference velocity and $y$-position for the controlled vehicle are set to $v_{\gls{simux},\text{ref}} = 5.0$ and $y_{\text{ref}}=8.0$, respectively, with initial position $\bm{y}_0 = [0,4]^\top$. The obstacle motion consists of a deterministic part $\gls{uri} = [0.5,~0]^\top$ combined with random behavior subject to $\gls{wimax} = 0.15$ and mean $\gls{simuy}$-position $y_{\text{r}} = 4.0$. Therefore, the $x$-position reference of the controlled vehicle is the same as the $x$-position of the obstacle in every step. 

The vehicle configurations at different time steps are shown in Figure \ref{fig:simu1run} and the results of the simulation are displayed in Figure \ref{fig:simu1}.  
\begin{figure}
\centering
 \begin{minipage}[b]{0.45\textwidth}
 \includegraphics[trim=20 50 20 10,width=\linewidth]{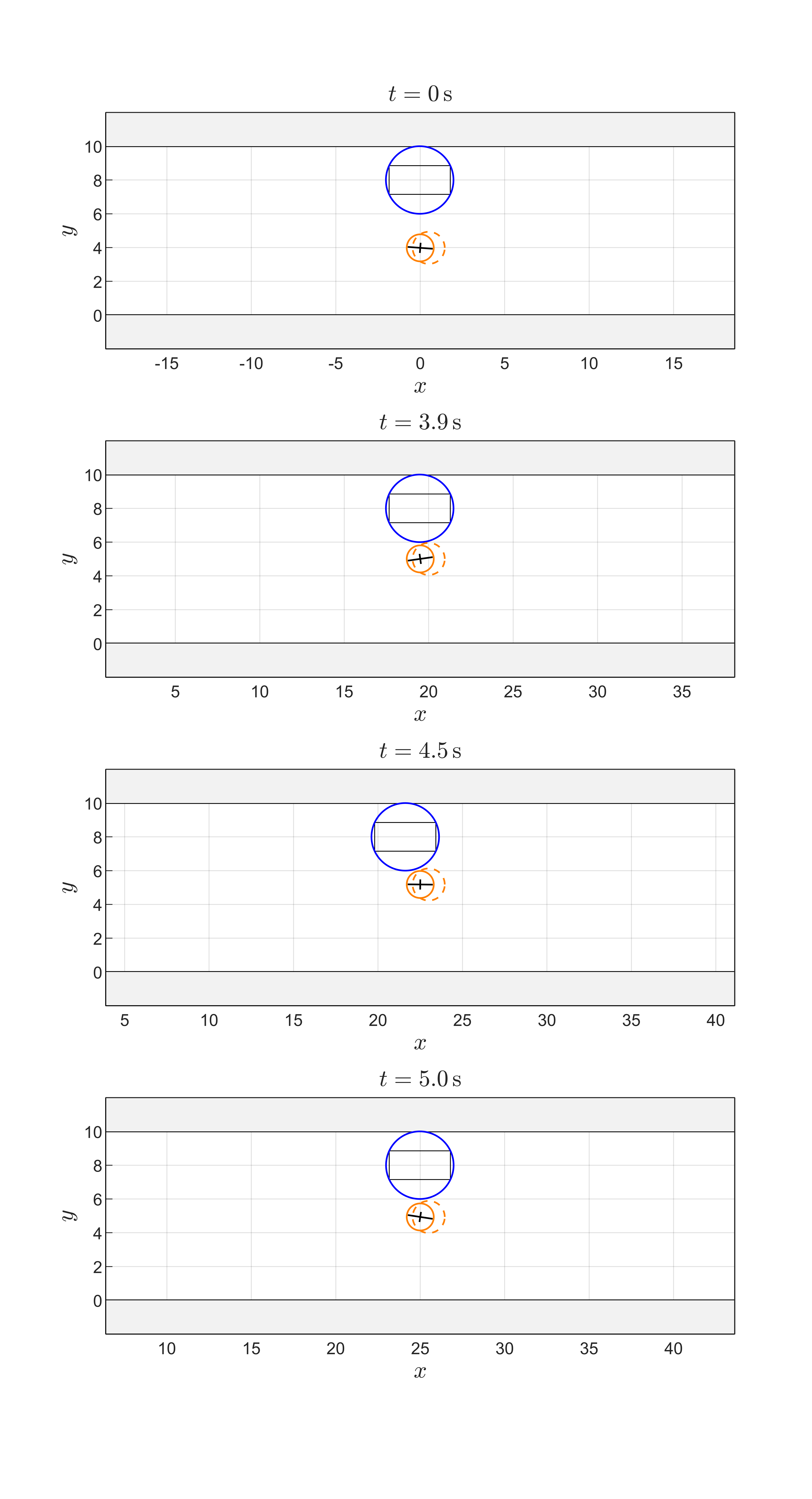}
\caption{Vehicle configurations for the simulation with active state constraints. The controlled vehicle boundary is shown as a solid blue line and the obstacle boundary is a solid orange line. The dashed orange circle represents the possible obstacle location at the next time step.}
\label{fig:simu1run}
\end{minipage}
\hfill
\begin{minipage}[b]{0.45\textwidth}
\includegraphics[trim=20 50 20 10,width=\linewidth]{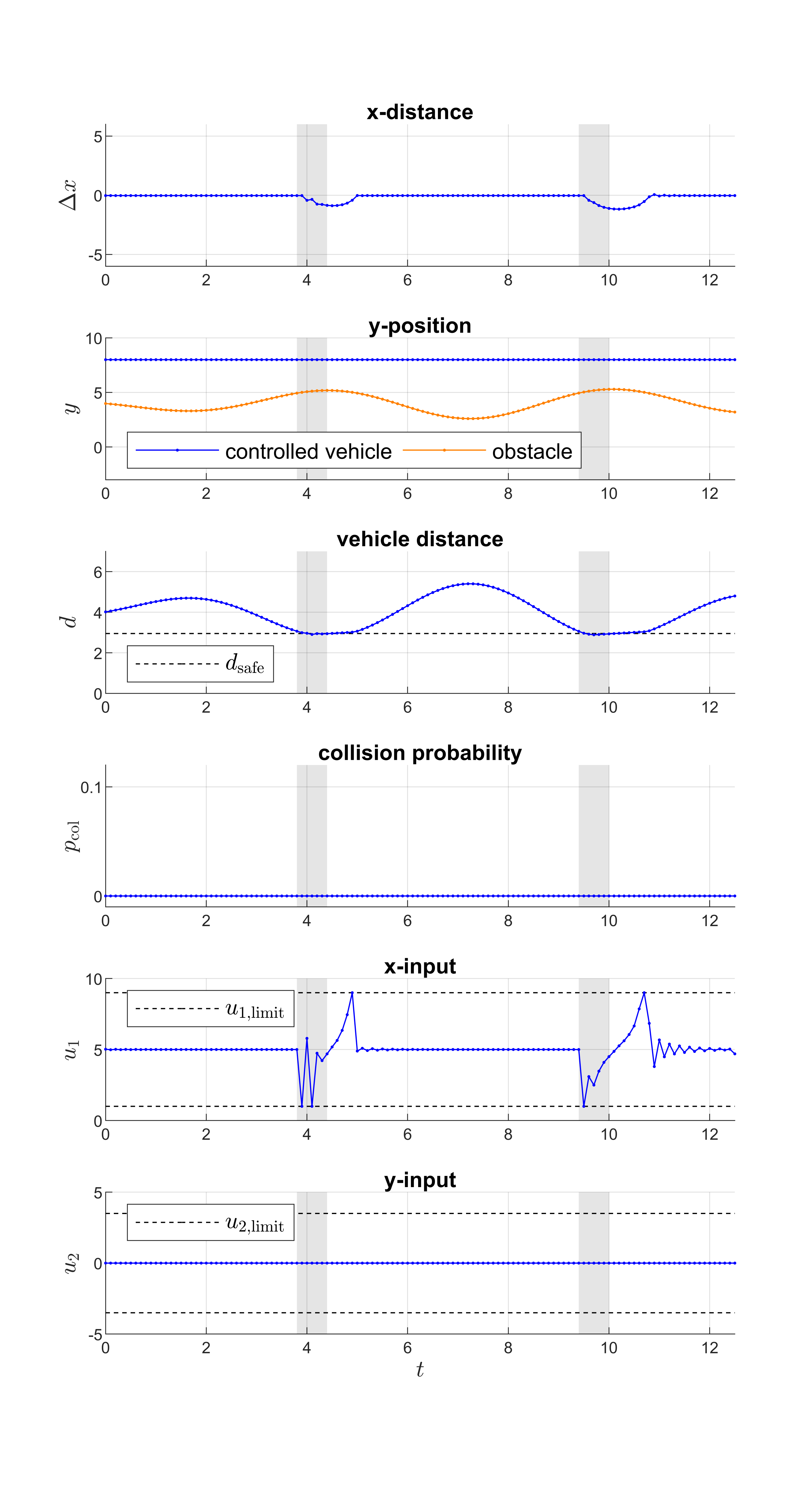}
\caption{Simulation results for the simulation with active state constraint. The controlled vehicle is close to the state constraint. The gray area denotes actions by the controlled vehicle to avoid collision. The collision probability remains 0.}
\label{fig:simu1}
\end{minipage}
\end{figure}
Initially, the controlled vehicle and obstacle have the same $\gls{simux}$-position. Starting at $t=\SI{3.9}{\second}$ the controlled vehicle needs to slow down to maintain a safe distance to the obstacle. As the maximal obstacle uncertainty is known by the controlled vehicle, the collision probability is kept at zero. After $t=\SI{4.5}{\second}$ the obstacle moves away from the controlled vehicle, resulting in increased input $u_1$ in order to get closer to the $x$-position reference. At $t=\SI{5.0}{\second}$ the controlled vehicle catches up with its $x$-position reference, which is then followed by constant inputs. Between $t=\SI{9.0}{\second}$ and $t = \SI{11.0}{\second}$ similar behavior can be observed. It can be seen that the CVPM-\gls{MPC} ensures $\gls{Pcv}=0$ with active state constraints. As mentioned in Section \ref{sec:properties}, the motion of the obstacle can result in an inadmissible origin, i.e., Assumption \ref{ass:yr_kinfty2} (c) is violated and the controlled vehicle cannot keep its reference velocity. However, as shown in Theorem \ref{t:convergence}, once the obstacle moves away the velocity of the controlled vehicle again reaches the reference velocity.

\subsubsection{Change of Uncertainty Support}
In the second simulation we show that the proposed method is capable of dealing with varying uncertainty support of the obstacle. The controlled vehicle aims to obtain the reference velocity $v_{\gls{simux},\text{ref}} = 4.0$ while maintaining $y_{\text{ref}} = 4.0$ with the initial position $\bm{y}_0 = [0,4]^\top$. The obstacle moves with a constant input $\gls{uri} = [0.25,~0]^\top$ at $\gls{simuyr} = 3.0$. We consider here that the obstacle uncertainty support suddenly changes, for example due to a changing environment. At first the expected uncertainty support is $\gls{wimax} = 0.15$ and at $t = \SI{3.0}{\second}$ it changes to $\gls{wimax} = 0.9$, while returning to $\gls{wimax} = 0.15$ at $t = \SI{5.0}{\second}$. In the simulation the obstacle does not move randomly, which helps to better understand the action of the controlled vehicle once the uncertainty support changes. At each time step the controlled vehicle knows the current uncertainty support of the obstacle.

The vehicle configurations at different time steps are shown in Figure \ref{fig:simu2run} and the results of the simulation are displayed in Figure~\ref{fig:simu2}. 
\begin{figure}
\centering
 \begin{minipage}[b]{0.45\textwidth}
 \includegraphics[trim=20 50 20 10,width=\linewidth]{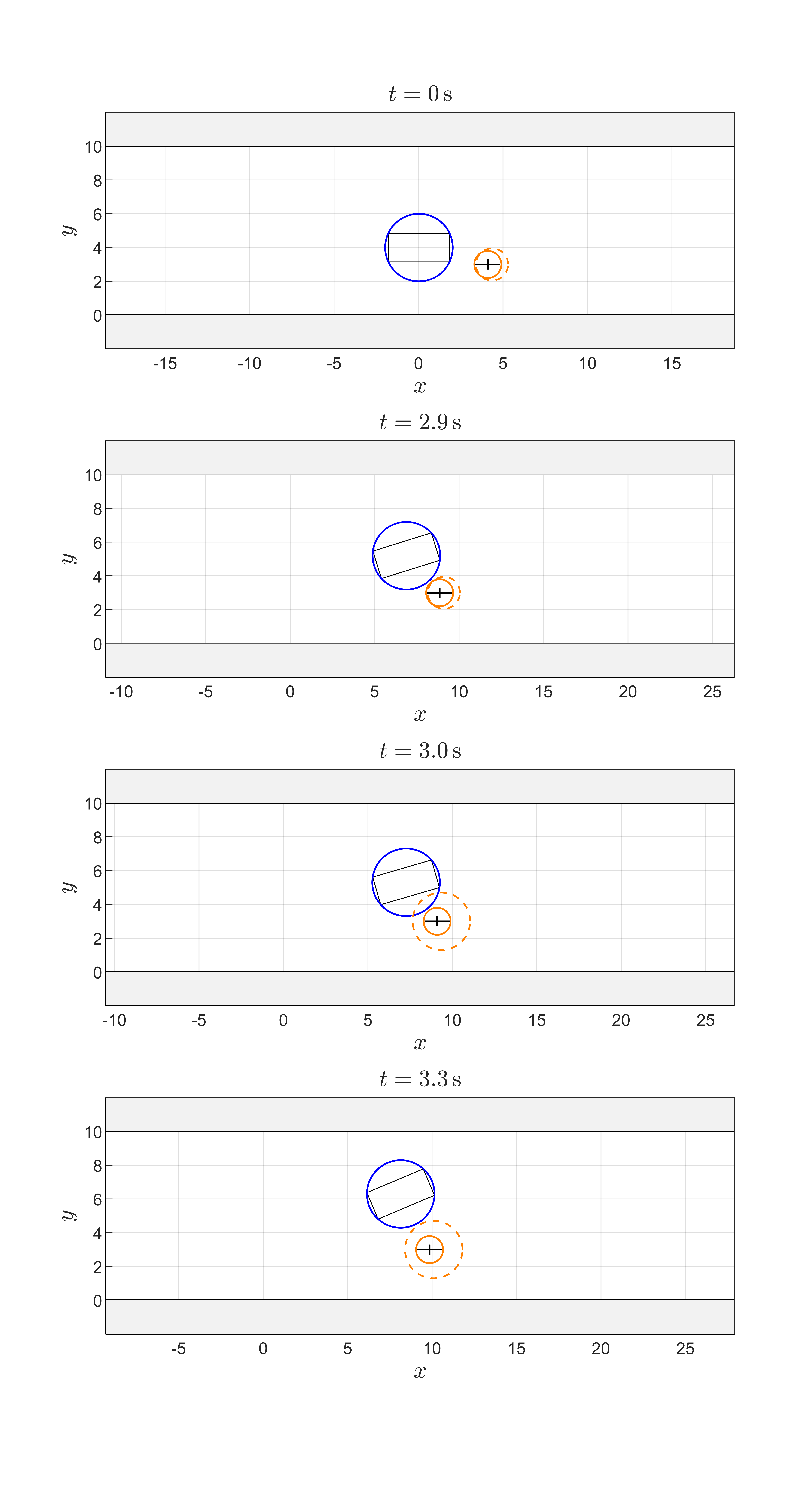}
\caption{Vehicle configurations for the simulation with changing uncertainty support. The controlled vehicle and obstacle boundaries are shown as solid blue and orange lines, respectively. The dashed orange circle represents the possible obstacle location at the next time step.}
\label{fig:simu2run}
\end{minipage}
\hfill
\begin{minipage}[b]{0.45\textwidth}
\includegraphics[trim=20 50 20 10,width=\linewidth]{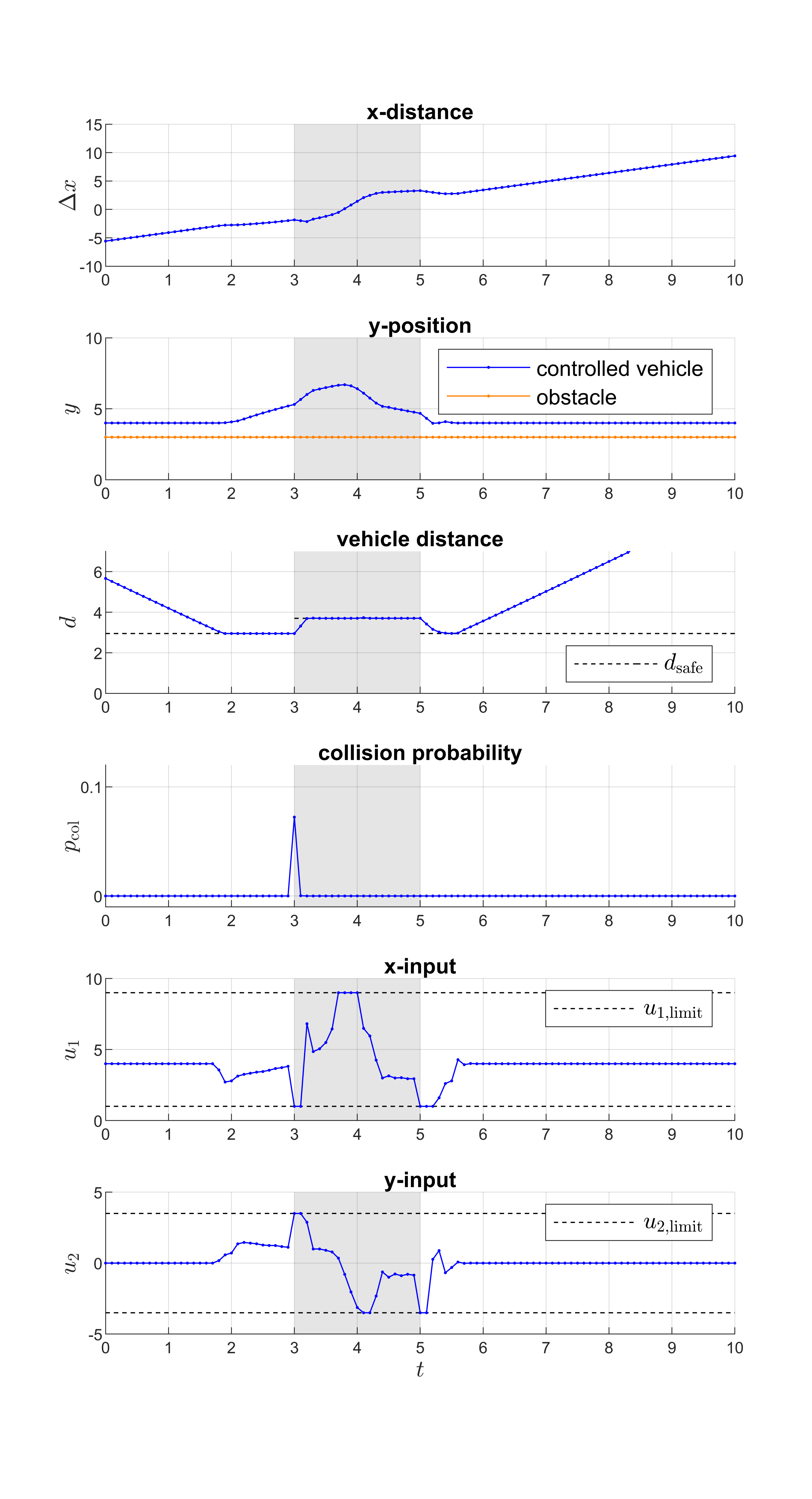}
\caption{Simulation results for the simulation with changing uncertainty support. The gray area represents a higher uncertainty support. Once the uncertainty support changes, the collision probability temporarily increases to the minimal level possible.}
\label{fig:simu2}
\end{minipage}
\end{figure}
As the controlled vehicle has a higher velocity it will eventually pass the obstacle, therefore, the distance $\Delta \gls{simux}~=~\gls{simux} - \gls{simuxr}$ turns positive. At $t=\SI{1.8}{\second}$ the controlled vehicle gets close enough to the obstacle that the controlled vehicle moves away from $y_{\text{ref}}$ to maintain $v_{\gls{simux},\text{ref}}$ and ensures that the distance $\gls{d} = \norm{\gls{yi}-\gls{yri}}_2 \geq \gls{rti}$. As $\gls{wimax}$ increases at $t=3.0$, so does the required distance between the controlled vehicle and obstacle, causing the controlled vehicle to move further away from $y_{\text{ref}}$. Due to input limitations the controlled vehicle cannot move fast enough. This results in $\gls{d} < \gls{rti}$, i.e., $\gls{Pcoli} > 0$ at $t=\SI{3.0}{\second}$, i.e., there is a probability of collision for the next time step. However, $\gls{d}$ is increased to a maximal level, given $\gls{ui} \in \gls{UUxc}$, resulting in a minimal constraint violation probability \gls{Pcoli}. Once the distance satisfies $\gls{d} \geq \gls{rti}$ at $t=\SI{3.2}{\second}$, \gls{Pcoli} becomes zero, and the controlled vehicle moves along the obstacle boundary for the next step, as seen for $t=\SI{3.3}{\second}$. At $t=\SI{5.0}{\second}$, $\gls{wimax}$ decreases, and the controlled vehicle converges to $y_{\text{ref}}$ at $t=\SI{5.8}{\second}$. 

In order to validate the probability of constraint violation, the simulation was run 2000 times with an arbitrary random obstacle step at $t = \SI{3.0}{\second}$, which is the first step with the increased uncertainty bound $\gls{wimax} = 0.9$. The vehicles collided in 144 simulations, yielding a collision probability of $0.072$ compared to the calculated collision probability $0.0723$, as described in Appendix \ref{sec:appendixc}.

\subsubsection{Comparison to \gls{RMPC} and \gls{SMPC}}
If \gls{RMPC} and \gls{SMPC} were applied in the simulations, certain problems would arise, mainly due to infeasibility of the optimization problem. This could be solved by providing rigorous alternative optimization problems, predefined alternative inputs, or highly conservative worst-case considerations. However, there is no ideal \gls{RMPC} or \gls{SMPC} approach to deal with the scenario in the simulation. Therefore, we will compare the simulation results of the proposed 
method to \gls{RMPC} and \gls{SMPC} only qualitatively. 

We will first consider the behavior with \gls{RMPC} applied to the controlled vehicle. In the first simulation \gls{RMPC} would deliver safe results similar to the CVPM-\gls{MPC} method, while remaining behind the obstacle in order to account for the worst-case obstacle behavior. In the second simulation two cases can be distinguished. If the initially considered uncertainty support is $\gls{wimax} = 0.15$, the behavior would be similar to the proposed method until the support changes. As it is impossible to find a state with zero collision probability after the uncertainty support is altered, the \gls{RMPC} optimization problem becomes infeasible. If the considered uncertainty support is initially chosen such that the larger support after $t = \SI{3.0}{\second}$ is covered, \gls{RMPC} yields a safe solution, however, it is passing the obstacle at a larger distance than initially required, yielding a higher cost compared to the proposed CVPM-\gls{MPC} method. In many applications it is also difficult to choose the worst-case uncertainty support a priori, as higher supports might occur later, resulting in even more conservative \gls{RMPC} solutions.

It is now assumed that the controlled vehicle is controlled using \gls{SMPC} with a chance constraint with risk parameter $\beta_k > 0$ for collision avoidance. In the second simulation, before the uncertainty support changes, the controlled vehicle would pass the obstacle a little closer than with the proposed CVPM-\gls{MPC} method, as the chance constraint allows for small constraint violations. The larger $\beta_k$ is chosen, the smaller the distance. However, while the proposed CVPM-\gls{MPC} method ensures safety while only passing the vehicle with little more distance, the \gls{SMPC} approach would pass the obstacle `on the chance constraint', i.e., as close as $\beta_k$ allows, sacrificing guaranteed safety for small cost improvements. In other words, leaving slightly more space between the controlled vehicle and the obstacle would result in $\gls{Pcv} = 0$ with only little higher cost. 

When the uncertainty support changes, the \gls{SMPC} solution is as close to the obstacle as $\beta_k$ allowed in the previous step. The chance constraint cannot be met anymore because the uncertainty support increased, resulting in a constraint violation probability larger than allowed by $\beta_k$. The \gls{SMPC} optimization problem then becomes infeasible, requiring an alternative optimization problem to be defined beforehand. In the first simulation a similar situation would occur. If the chance constraint allows the controlled vehicle to be in a position which will yield $\gls{Pcoli}  > \beta_k$ due to the unconsidered worst-case obstacle motion, this leads to infeasibility of the optimization problem. 

Considering the qualitative comparison, we can see that the proposed method offers certain advantages over \gls{RMPC} and \gls{SMPC}, especially guaranteeing recursive feasibility of the optimization problem in the presence of a changing uncertainty support.

\section{Conclusion}
\label{sec:conclusion}

The proposed CVPM-\gls{MPC} algorithm yields a minimal violation probability for a norm constraint for the next step, while also optimizing further objectives and satisfying state and input constraints. Recursive feasibility and, under certain assumptions, convergence to the origin is guaranteed. While the suggested method is inspired by \gls{RMPC} and \gls{SMPC}, it provides feasible and efficient solutions in scenarios where \gls{RMPC} and \gls{SMPC} encounter difficulties or are not applicable.

As norm constraints are especially useful in collision avoidance applications, the advantages of the presented CVPM-\gls{MPC} method can be exploited in applications such as autonomous vehicles or robots that work in shared environments with humans. A brief example is introduced where a controlled vehicle is overtaking a bicycle while minimizing the collision probability. Here, we focus on minimizing the constraint violation for a norm constraint. However, depending on the application, a multi-step CVPM-\gls{MPC} could be beneficial. Especially for collision avoidance it is also of interest not only to focus on the collision probability but to consider the severity of collision if a collision is inevitable.

\appendices

\section{Minimal Constraint Violation Probability for the Multi-Step Problem}
\label{sec:appendixb}

The method presented in Section \ref{sec:method} minimizes the constraint violation probability for the next step. In the following a possible extension of the one-step CVPM-\gls{MPC} method is shown. Considering multiple steps $\gls{n} > 1$ yields a method closer related to \gls{RMPC}, as it provides advantages with respect to robustness but conservatism is increased.

Considering the stochastic process $\lr{\gls{Wi}}_{k \in \mathbb{I}_{0:j-1}}$, its realization, a sequence $\lr{\gls{wi}}_{k \in \mathbb{I}_{0:j-1}}$ with $j \in \mathbb{N}_{\geq0}$, and the initially known output $\bm{y}_{\text{r},0}$ yields
\begin{IEEEeqnarray}{c}
\label{eq:yrsum}
\bm{y}_{\text{r},k} = \bm{y}_{\text{r},0} +\sum_{i=0}^{k-1} \bm{w}_i.
\end{IEEEeqnarray}

While in the one-step method \gls{Pcvj} only needs to be minimized for the next step $j=1$, for the \gls{n}-step approach \gls{Pcvj} needs to minimized for $1 \leq j \leq \gls{n}$. 

Similar to Section \ref{sec:method} we first address the general method and then provide a solution for \gls{fPw} satisfying Assumption \ref{ass:Pcv}.

\subsection{General Method to Minimize Constraint Violation Probability for Multi-Step Problem}
\label{sec:multistepproblem_gen}

It is necessary to find the set \gls{UUcvpml}, which represents the set of admissible input sequences $\gls{Ucvpml}~=~[\bm{u}_0, ..., \bm{u}_{\gls{n}-1}]^\top$ that minimize \gls{Pcvj} for $1 \leq j \leq \gls{n}$. In the following three cases are again considered. The constraint violation \gls{Pcvjp1} for step $j+1$ depends on the previous output $\bm{y}_{j}$, the input $\bm{u}_{j}$, and the uncertain output $\bm{y}_{\text{r},j}$.

\paragraph*{Case 1 (Guaranteed Constraint Satisfaction)} 

Constraint satisfaction is guaranteed for all steps $j \in \mathbb{I}_{0:l-1}$, i.e.,
\begin{IEEEeqnarray}{c}
\gls{Pcvjp1att} = 0 ~~~~ \forall \gls{uj}\in \gls{UUxcj},~j \in \mathbb{I}_{0:l-1}, \label{eq:l_case1g}
\end{IEEEeqnarray}
resulting in
\begin{IEEEeqnarray}{c}
\gls{UUcvpml} = \setdef[\gls{uj}]{\gls{uj}\in \gls{UUxcj},~j \in \mathbb{I}_{0:l-1}}.   \label{eq:l_ucase1g}
\end{IEEEeqnarray}

\paragraph*{Case 2 (Impossible Constraint Satisfaction Guarantee)}

For a $j$ with $0 \leq j \leq \gls{n}-1$, potentially at multiple steps $j$, constraint satisfaction cannot be guaranteed by any input $\gls{uj}\in \gls{UUxcj}$, i.e., $\gls{Pcvjp1}=0$. This can be expressed by 
\begin{IEEEeqnarray}{c}
\exists~j\in \mathbb{I}_{0:l-1}\text{~~s.t.~~} \gls{Pcvjp1att} > 0 ~~\forall~ \gls{uj}\in \gls{UUxcj}.  \label{eq:l_case2g} \IEEEeqnarraynumspace
\end{IEEEeqnarray}
The set of admissible inputs which minimize the constraint violation probability is then given by
\begin{IEEEeqnarray}{c}
\gls{UUcvpml} =\setdef[\gls{uj}]{\gls{uj} = \argmin{\gls{uj}\in \gls{UUxcj}} \gls{Pcvjp1att},~j \in \mathbb{I}_{0:l-1}}.  \label{eq:l_ucase2g}\IEEEeqnarraynumspace
\end{IEEEeqnarray}

\paragraph*{Case 3 (Possible Constraint Satisfaction Guarantee)}

At each step $0 \leq j \leq \gls{n}-1$ it is possible, but not guaranteed, that the norm constraint is satisfied for $j+1$, i.e.,
\begin{IEEEeqnarray}{c}
\exists~\gls{uj}\in \gls{UUxcj} \text{~~s.t.~~} \gls{Pcvjp1att} = 0~~\forall~j\in \mathbb{I}_{0:l-1}. \label{eq:l_case3g}
\end{IEEEeqnarray}
This yields
\begin{IEEEeqnarray}{c}
\gls{UUcvpml} =\nonumber\setdef[\gls{uj}]{\lr{\gls{Pcvjp1att} = 0} \land \lr{\gls{uj}\in \gls{UUxcj}},~j\in \mathbb{I}_{0:l-1}}.   \label{eq:u_case3g}\IEEEeqnarraynumspace
\end{IEEEeqnarray}

\subsection{Minimal Constraint Violation Probability for Multi-Step Problem with Radially Decreasing PDF}

After defining the general case, we now address the multi-step CVPM-\gls{MPC} method for a symmetric, unimodal \gls{pdf}. We make the following assumptions.

\begin{assumption}[Constant Minimal Norm Value]
The minimal norm value $\gls{Mminkj} = \gls{Mcon}$ is constant.
\end{assumption}

\begin{assumption}[Known Deterministic Input]
The deterministic input $\bm{u}_{\text{r},j}$ is known for $j \in \mathbb{I}_{0:\gls{n}-1}$.
\end{assumption}

A simple approach to find \gls{UUcvpml} is to maximize \gls{Mlm1b} with
\begin{IEEEeqnarray}{c}
\overline{\bm{y}}_{\text{r},\gls{n}} =\bm{y}_{\text{r},0} + \sum_{i=0}^{\gls{n}-1} \bm{u}_{\text{r},i},
\end{IEEEeqnarray}
as this automatically results in a maximization of $\norm{\bm{y}_{j} - \overline{\bm{y}}_{\text{r},j}}_2$ for $j \in \mathbb{I}_{0:\gls{n}-1}$ because \gls{Pcvj} is decreasing with increasing $\norm{\bm{y}_{j} - \overline{\bm{y}}_{\text{r},j}}_2$. Therefore, if $p_{\text{cv},l}$ is minimized, \gls{Pcvj} is also minimized for \gls{jl}. 

Similar to \eqref{eq:hmax} and \eqref{eq:hmin}, we define
\begin{IEEEeqnarray}{rl}
\gls{hmaxil} &=    \gls{gi} \lr{ \max_{\bm{u}_i \in \gls{UUxci},~i \in \mathbb{I}_{0:\gls{n}-1}} \lr{\gls{Mlm1b}}  }, \\
\gls{hminil} &=    \gls{gi} \lr{ \min_{\bm{u}_i \in \gls{UUxci},~i \in \mathbb{I}_{0:\gls{n}-1}} \lr{\gls{Mlm1b}}  }.
\end{IEEEeqnarray}

We now regard the three possible cases and determine \gls{UUcvpml}.

\paragraph*{Case 1 (Constraint Satisfaction Guarantee)} 
For any \gls{Ucvpml} it follows that $\gls{Pcvjp1}=0$ for \gls{jl}, i.e.,
\begin{IEEEeqnarray}{c}
\gls{hminil} \geq \gls{gi}\lr{\gls{Mtotli}}. \label{eq:case1l}
\end{IEEEeqnarray}
In comparison to \eqref{eq:case1}, \gls{wii} needs to be considered for $i \in \mathbb{I}_{0:\gls{n}-1}$. This yields
\begin{IEEEeqnarray}{c}
\gls{UUcvpml} = \setdef[\gls{Ucvpml}]{\gls{uj} \in \gls{UUxcj},~\gls{il}}. \label{eq:ucase1l}
\end{IEEEeqnarray}

\paragraph*{Case 2 (Impossible Constraint Satisfaction Guarantee)}
It is not possible to guarantee $\gls{Pcvjp1}=0$ for \gls{jl} as
\begin{IEEEeqnarray}{c}
\gls{hmaxil} < \gls{gi}\lr{\gls{Mtotli}}. \label{eq:case2l}
\end{IEEEeqnarray}
Therefore, \gls{UUcvpml} consists of the input sequence \gls{Ucvpml} which minimizes \gls{Pcvl}, resulting in
\begin{IEEEeqnarray}{c}
\gls{UUcvpml} =  \setdef[\gls{Ucvpml}]{\gls{Ucvpml} = \argmax{\gls{uii} \in \gls{UUxci},~\gls{il}} \gls{hi} \lr{\gls{Mlm1b}}}. \label{eq:ucase2l}\IEEEeqnarraynumspace
\end{IEEEeqnarray}

\paragraph*{Case 3 (Possible Constraint Satisfaction Guarantee)}
There are possible input sequences \gls{Ucvpml} such that $\gls{Pcvl}=0$. Similar to case 3 for the one-step CVPM-\gls{MPC}, we again need to find a set \gls{UUcvpml} which only allows input sequences \gls{Ucvpml} that result in constraint satisfaction of \eqref{eq:hc2} for $j \in \mathbb{I}_{1:\gls{n}}$. This is achieved by choosing
\begin{IEEEeqnarray}{c}
\gls{UUcvpml} = \Bigg\{ \gls{Ucvpml}~ \Bigg|~ \gls{hi}\lr{\gls{Mlm1b}}\geq \gls{gi}\lr{\gls{Mtotli}} \cap \gls{uj} \in \gls{UUxcj},~\gls{jl} \Bigg\}
\label{eq:Uopt_c3l} \IEEEeqnarraynumspace
\end{IEEEeqnarray}
where the approximation \gls{apUUcvpml} can be found analogously to Proposition \ref{prop:apUUopt}.\\

The \gls{n}-step CVPM-\gls{MPC} algorithm can then be formulated as in \eqref{eq:mpc_new} with
\begin{IEEEeqnarray}{c}
\gls{UU_total0} = \setdef[\gls{uNmpc}]{\gls{Ucvpml} \in \gls{UUcvpml}  \wedge \gls{uj} \in \mathbb{U}_{\bm{x},j},~j \in \mathbb{I}_{\gls{n}:N-1}}. \IEEEeqnarraynumspace \label{eq:Ustarl}
\end{IEEEeqnarray}

\section{Inequality Derivation}
\label{sec:appendixa}

In the following it is shown that $\norm{\gls{yj} - \gls{yrj}}_2 \geq \gls{Mminj}$ holds if $\norm{\gls{yj} - \overline{\bm{y}}_{\text{r},j}}_2   \geq \gls{Mminj} + w_{\text{max},j-1}$.
From \eqref{eq:unc_system} it follows that
\begin{IEEEeqnarray}{c}
\label{eq:ineq_trafo_a1}
\norm{\gls{yj} - \gls{yrj}}_2 = \norm{\gls{yj} - \lr{\overline{\bm{y}}_{\text{r},j} + \bm{w}_{j-1}}}_2 \geq \gls{Mminj}.  \IEEEeqnarraynumspace
\end{IEEEeqnarray}
Using the reverse triangle inequality yields
\begin{IEEEeqnarray}{c}
\norm{\gls{yj} - \overline{\bm{y}}_{\text{r},j} - \bm{w}_{j-1}}_2 \geq \vertlr{\norm{\gls{yj} - \overline{\bm{y}}_{\text{r},j}}_2 -\norm{ \bm{w}_{j-1}}_2}  \geq \gls{Mminj} \IEEEeqnarraynumspace
\end{IEEEeqnarray}
with
\begin{IEEEeqnarray}{c}
\vertlr{\norm{\gls{yj} - \overline{\bm{y}}_{\text{r},j}}_2 -\norm{ \bm{w}_{j-1}}_2}  \geq \norm{\gls{yj} - \overline{\bm{y}}_{\text{r},j}}_2 -\norm{ \bm{w}_{j-1}}_2. \IEEEeqnarraynumspace
\end{IEEEeqnarray}
Given \eqref{eq:supp_pw} it follows that
\begin{IEEEeqnarray}{c}
\norm{\gls{yj} - \overline{\bm{y}}_{\text{r},j}}_2 -\norm{ \bm{w}_{j-1}}_2  \geq \norm{\gls{yj} - \overline{\bm{y}}_{\text{r},j}}_2 - w_{\text{max},j-1}  \IEEEeqnarraynumspace
\end{IEEEeqnarray}
for all $\norm{\bm{w}_{j-1}}_2 \leq w_{\text{max},j-1}$. Therefore, if
\begin{IEEEeqnarray}{c}
\norm{\gls{yj} - \overline{\bm{y}}_{\text{r},j}}_2 - w_{\text{max},j-1}   \geq \gls{Mminj}, \IEEEeqnarraynumspace
\end{IEEEeqnarray}
is fulfilled, \eqref{eq:ineq_trafo_a1} holds, i.e.,
\begin{IEEEeqnarray}{c}
\norm{\gls{yj} - \overline{\bm{y}}_{\text{r},j}}_2   \geq \gls{Mminj} + w_{\text{max},j-1} \Rightarrow \norm{\gls{yj} - \gls{yrj}}_2 \geq \gls{Mminj}. \IEEEeqnarraynumspace
\end{IEEEeqnarray}
Equation \eqref{eq:ineq_trafo} in Section \ref{sec:onestepproblem} is obtained for $j=1$.

\section{Collision Probability Function}
\label{sec:appendixc}

Here the collision probability \gls{Pcoli} is described in detail, which is only needed for the evaluation of the simulation but not the proposed method. The \gls{pdf} \gls{fPw} is chosen to be
\begin{IEEEeqnarray}{c}
\gls{fPw}\lr{r_k}=
\begin{cases}
\frac{1}{\sigma z \sqrt{2\pi}} e^{-\frac{r_k^2}{2\sigma^2}} &\text{if } 0 \leq r_k \leq \gls{wimax},\\
0 &\text{otherwise}
\end{cases}\IEEEeqnarraynumspace
\end{IEEEeqnarray}
where $r_k$ is used instead of \gls{wi} and 
\begin{IEEEeqnarray}{c}
\supp\lr{\gls{fPw}}= \setdef[r_k]{0 \leq r_k \leq \gls{wimax}} 
\end{IEEEeqnarray}
with variance $\sigma = 1$ and
\begin{IEEEeqnarray}{rl}
z&=\Phi\lr{\gls{wimax}}-\Phi\lr{0},\\ 
\Phi\lr{r} &= 0.5\lr{1+\erf\lr{\frac{r}{\sqrt{2}}}},
\end{IEEEeqnarray}
such that
\begin{IEEEeqnarray}{c}
 \int\limits_{\supp\lr{\gls{fPw}}} \f{\gls{fPw}}{r_k} \mathrm{d} r_k =1.
\end{IEEEeqnarray}

As the main aim of this simulation is to minimize the constraint violation probability, i.e., the collision probability, an expression for this probability is necessary in order to analyze the simulation results. The controlled vehicle and the obstacle collide if their bodies overlap, i.e., $r_{\text{comb}} > \norm{\bm{y}_{k}-\bm{y}_{\text{r},k}}_2$ with the combined radius $r_{\text{comb}} = \gls{rcv}+\gls{rr}$. A collision at step $k$ is inevitable, if $\norm{\gls{yi}-\overline{\bm{y}}_{\text{r},k}}_2 + w_{\text{max},k-1}< r_{\text{comb}}$, i.e., even for the best-case $w_{\text{max},k-1}$ the objects will collide at step $k$. For $\norm{\gls{yi}-\overline{\bm{y}}_{\text{r},k}}_2- w_{\text{max},k-1} \geq r_{\text{comb}}$ it follows that $\gls{Pcoli} = 0$.

The collision probability is calculated according to the following procedure. We consider a circle where the radius is the required distance $r_{\text{comb}}$ and a circle with radius $r_k$. The intersection of both circles can be interpreted as the collision probability, by integrating the intersection area of both circles, weighted with $\gls{fPw}\lr{r_k}$. This is illustrated in Figure~\ref{fig:collcalc}. 
\begin{figure}
\centerline{\includegraphics[trim=0 0 0 0,width=0.5\columnwidth]{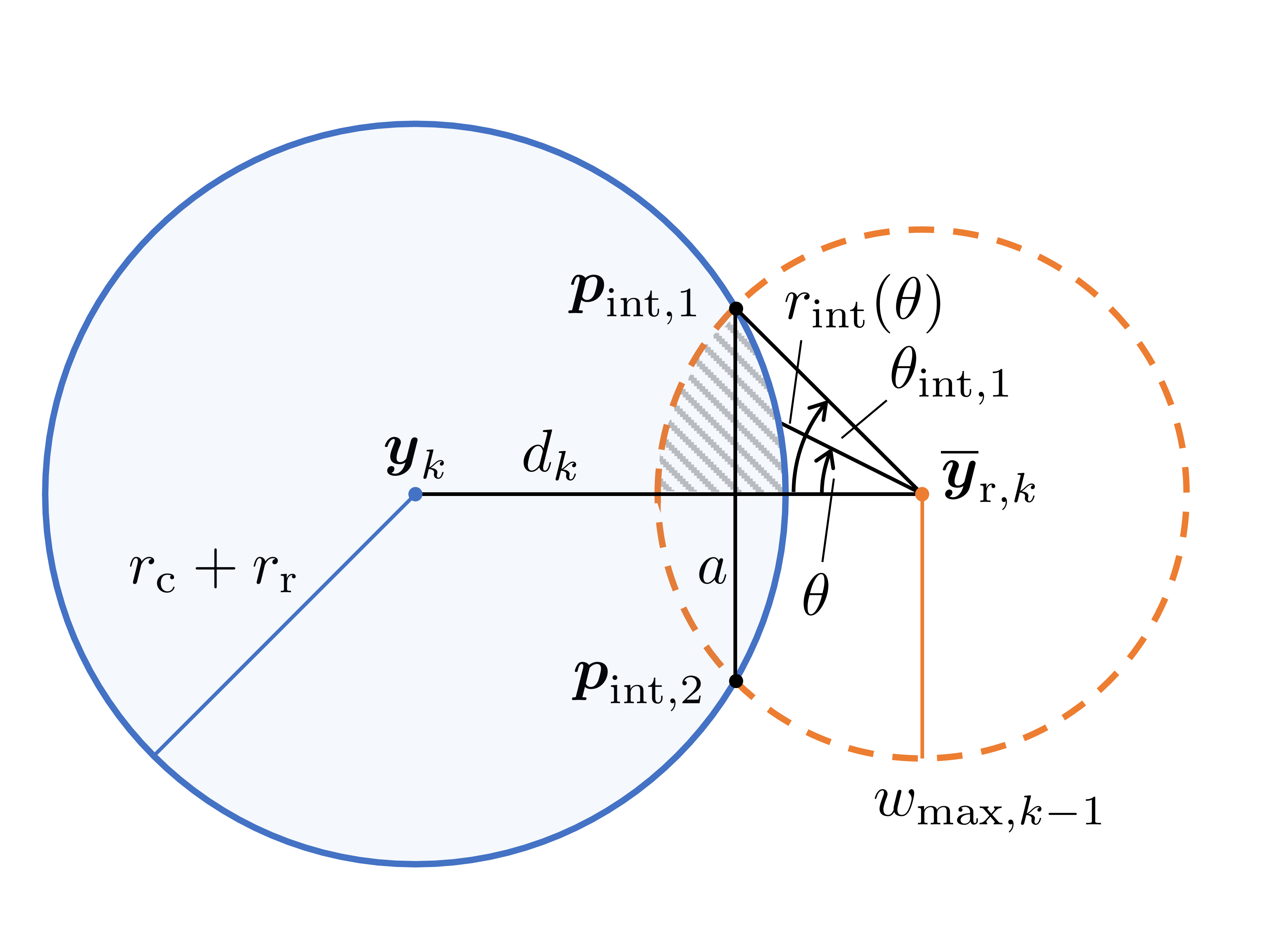}}
\caption{Collision probability calculation. The blue circle combines the radius of the controlled vehicle and the obstacle, the dashed orange circle represents the area potentially covered by the uncertainty. The striped area represents one half of the intersection between the two circles.}
\label{fig:collcalc}
\end{figure}
In case that there is no intersection area, then $\gls{Pcoli} = 0$. If an intersection exists, there are two intersection points. The intersection area is therefore bounded on one side by the arc with radius $r_{\text{comb}}$ and on the other side by the arc of the boundary of the uncertainty. As the intersection area is symmetric, it is sufficient to derive the calculation for one half, i.e., the area between the line connecting $\overline{\bm{y}}_{\text{r},k}$ and \gls{yi} and the intersection point $\bm{p}_{\text{int},1}$ as depicted by the striped area in Figure \ref{fig:collcalc}. This yields an angle $\theta_{\text{int},1} \in \left[0;0.5\pi\right]$ between the two lines connecting $\overline{\bm{y}}_{\text{r},k}$ and \gls{yi} as well as $\overline{\bm{y}}_{\text{r},k}$ and $\bm{p}_{\text{int},1}$. The distance $\f{r_{\text{int}}}{\theta}$ between $\overline{\bm{y}}_{\text{r},k}$ and the controlled vehicle boundary between the two intersection points follows from the law of cosines
\begin{IEEEeqnarray}{c}
\gls{rcomb}^2 = \f{r_{\text{int}}}{\theta}^2 + d_{k}^2 - 2 d_{k} \f{r_{\text{int}}}{\theta} \cos(\theta)
\end{IEEEeqnarray}
where $d_{k} = \norm{\gls{yi}-\overline{\bm{y}}_{\text{r},k}}_2$ and $\theta \in \left[0;\theta_{\text{int},1}\right]$ with
\begin{IEEEeqnarray}{rl}
\theta_{\text{int},1} &= \f{\sin^{-1}}{\frac{a}{2 \gls{wimaxm1}}}, \\
a &=\frac{\sqrt{4d_{k}^2 {\gls{wimaxm1}}^2-\left(d_{k}^2-\gls{rcomb}^2+{\gls{wimaxm1}^2}\right)^2}}{d_{k}}. \IEEEeqnarraynumspace
\end{IEEEeqnarray}
This yields
\begin{IEEEeqnarray}{c}
\f{r_{\text{int}}}{\theta} =  0.5 \lr{ 2 d_{k} \cos(\theta) - \sqrt{ \lr{2 d_{k} \cos(\theta)}^2 - 4 \lr{d_{k}^2 - \gls{rcomb}^2}} }. \IEEEeqnarraynumspace
\end{IEEEeqnarray}

The intersection area on both sides of the line between \gls{yi} and $\overline{\bm{y}}_{\text{r},k}$, weighted with the \gls{pdf} \gls{fPw2}, yields the collision probability
\begin{IEEEeqnarray}{c}
\label{eq:pcol}
\gls{Pcoli} = 2\int\displaylimits_{0}^{\theta_{\text{int},1}} \frac{1}{2\pi}  \int\displaylimits_{\f{r_{\text{int}}}{\theta}}^{\gls{wimaxm1}}  \f{\gls{fPw}}{r} \mathrm{d} r \mathrm{d} \theta
\end{IEEEeqnarray}
for $d_{k}+w_{\text{max},k-1} \geq r_{\text{comb}}$, depending on the angle $\theta_{\text{int},1}$.

This yields the overall collision probability
\begin{IEEEeqnarray}{c}
\gls{Pcoli} =
\begin{cases}
1 &~\text{if } d_{k}+w_{\text{max},k-1} < r_{\text{comb}},\\
0 & ~\text{if } d_{k}- w_{\text{max},k-1} \geq r_{\text{comb}},\\
\eqref{eq:pcol} &~\text{otherwise}.
\end{cases}\IEEEeqnarraynumspace
\end{IEEEeqnarray}

For reasons of readability, the dependence on $k$ for $\bm{p}_\text{int}$, $\theta_{\text{int},1}$, $r_\text{int}$, and $a$ is omitted.

\bibliography{references/Dissertation_bib}

\begin{thebibliography}{10}

\bibitem{MayneEtalScokaert2000}
D.Q. Mayne, J.B. Rawlings, C.V. Rao, and P.O.M. Scokaert.
\newblock Constrained model predictive control: Stability and optimality.
\newblock {\em Automatica}, 36(6):789 -- 814, 2000.

\bibitem{PrimbsNevistic2000}
J.A. Primbs and V.~Nevistić.
\newblock Feasibility and stability of constrained finite receding horizon
  control.
\newblock {\em Automatica}, 36(7):965 -- 971, 2000.

\bibitem{Gruene2012}
L.~Grüne.
\newblock Nmpc without terminal constraints.
\newblock {\em IFAC Proceedings Volumes}, 45(17):1 -- 13, 2012.

\bibitem{MayneSeronRakovic2005}
D.Q. Mayne, M.M. Seron, and S.V. Raković.
\newblock Robust model predictive control of constrained linear systems with
  bounded disturbances.
\newblock {\em Automatica}, 41(2):219 -- 224, 2005.

\bibitem{MagniEtalAllgoewer2003}
L.~Magni, G.~De~Nicolao, R.~Scattolini, and F.~Allgöwer.
\newblock Robust model predictive control for nonlinear discrete‐time
  systems.
\newblock {\em International Journal of Robust and Nonlinear Control},
  13(3‐4):229--246, 2003.

\bibitem{Mayne2014}
D.Q. Mayne.
\newblock Model predictive control: Recent developments and future promise.
\newblock {\em Automatica}, 50(12):2967 -- 2986, 2014.

\bibitem{LangsonEtalMayne2004}
W.~Langson, I.~Chryssochoos, S.V. Raković, and D.Q. Mayne.
\newblock Robust model predictive control using tubes.
\newblock {\em Automatica}, 40(1):125 -- 133, 2004.

\bibitem{MayneEtalFalugi2011}
D.Q. Mayne, E.C. Kerrigan, E.J. van Wyk, and P.~Falugi.
\newblock Tube-based robust nonlinear model predictive control.
\newblock {\em International Journal of Robust and Nonlinear Control},
  21(11):1341--1353, 2011.

\bibitem{NikouDimarogonas2019}
A.~Nikou and D.V. Dimarogonas.
\newblock Decentralized tube-based model predictive control of uncertain
  nonlinear multiagent systems.
\newblock {\em International Journal of Robust and Nonlinear Control},
  29(10):2799--2818, 2019.

\bibitem{Mesbah2016}
A.~Mesbah.
\newblock Stochastic model predictive control: An overview and perspectives for
  future research.
\newblock {\em IEEE Control Systems}, 36(6):30--44, Dec 2016.

\bibitem{FarinaGiulioniScattolini2016}
M.~Farina, L.~Giulioni, and R.~Scattolini.
\newblock Stochastic linear model predictive control with chance constraints
  – a review.
\newblock {\em Journal of Process Control}, 44(Supplement C):53 -- 67, 2016.

\bibitem{LorenzenMuellerAllgoewer2019}
Matthias Lorenzen, Matthias~A. Müller, and Frank Allgöwer.
\newblock Stochastic model predictive control without terminal constraints.
\newblock {\em International Journal of Robust and Nonlinear Control},
  29(15):4987--5001, 2019.

\bibitem{SeronGoodwinCarrasco2018}
M.M. Seron, G.C. Goodwin, and D.S. Carrasco.
\newblock Stochastic model predictive control: Insights and performance
  comparisons for linear systems.
\newblock {\em International Journal of Robust and Nonlinear Control}, pages
  1--20, 2018.

\bibitem{SchwarmNikolaou1999}
A.T. Schwarm and M.~Nikolaou.
\newblock Chance-constrained model predictive control.
\newblock {\em AIChE Journal}, 45(8):1743--1752, 1999.

\bibitem{BlackmoreEtalWilliams2010}
L.~Blackmore, M.~Ono, A.~Bektassov, and B.C. Williams.
\newblock A probabilistic particle-control approximation of chance-constrained
  stochastic predictive control.
\newblock {\em Trans. Rob.}, 26(3):502--517, June 2010.

\bibitem{SchildbachEtalMorari2014}
G.~Schildbach, L.~Fagiano, C.~Frei, and M.~Morari.
\newblock The scenario approach for stochastic model predictive control with
  bounds on closed-loop constraint violations.
\newblock {\em Automatica}, 50(12):3009 -- 3018, 2014.

\bibitem{CalafioreCampi2006}
G.~C. Calafiore and M.~C. Campi.
\newblock The scenario approach to robust control design.
\newblock {\em IEEE Transactions on Automatic Control}, 51(5):742--753, May
  2006.

\bibitem{CampiGaratti2011}
M.C. Campi and S.~Garatti.
\newblock A sampling-and-discarding approach to chance-constrained
  optimization: Feasibility and optimality.
\newblock {\em Journal of Optimization Theory and Applications},
  148(2):257--280, Feb 2011.

\bibitem{KordaEtalOldewurtel2011}
M.~Korda, R.~Gondhalekar, J.~Cigler, and F.~Oldewurtel.
\newblock Strongly feasible stochastic model predictive control.
\newblock In {\em 50th IEEE Conference on Decision and Control and European
  Control Conference}, Orlando, USA, Dec 2011. 50th IEEE Conference on Decision
  and Control and European Control Conference.

\bibitem{KouvaritakisEtalCheng2010}
B.~Kouvaritakis, M.~Cannon, S.~V. Rakovic, and Q.~Cheng.
\newblock Explicit use of probabilistic distributions in linear predictive
  control.
\newblock {\em Automatica}, 46(10):1719 -- 1724, 2010.

\bibitem{CannonEtalCheng2011}
M.~Cannon, B.~Kouvaritakis, S.~V. Rakovic, and Q.~Cheng.
\newblock Stochastic tubes in model predictive control with probabilistic
  constraints.
\newblock {\em IEEE Transactions on Automatic Control}, 56(1):194--200, Jan
  2011.

\bibitem{LorenzenEtalAllgoewer2017}
M.~Lorenzen, F.~Dabbene, R.~Tempo, and F.~Allgoewer.
\newblock Constraint-tightening and stability in stochastic model predictive
  control.
\newblock {\em IEEE Transactions on Automatic Control}, 62(7):3165--3177, July
  2017.

\bibitem{LuXiLi2019}
J.~Lu, Y.~Xi, and D.~Li.
\newblock Stochastic model predictive control for probabilistically constrained
  markovian jump linear systems with additive disturbance.
\newblock {\em International Journal of Robust and Nonlinear Control},
  29(15):5002--5016, 2019.

\bibitem{JuradoEtalRubio2015}
I.~Jurado, P.~Millán, D.~Quevedo, and F.R. Rubio.
\newblock Stochastic mpc with applications to process control.
\newblock {\em International Journal of Control}, 88(4):792--800, 2015.

\bibitem{OldewurtelEtalMorari2014}
F.~{Oldewurtel}, C.~N. {Jones}, A.~{Parisio}, and M.~{Morari}.
\newblock Stochastic model predictive control for building climate control.
\newblock {\em IEEE Transactions on Control Systems Technology},
  22(3):1198--1205, May 2014.

\bibitem{KumarEtalZavala2018}
R.~{Kumar}, M.J. {Wenzel}, M.J. {Ellis}, M.N. {ElBsat}, K.H. {Drees}, and V.M.
  {Zavala}.
\newblock A stochastic model predictive control framework for stationary
  battery systems.
\newblock {\em IEEE Transactions on Power Systems}, 33(4):4397--4406, July
  2018.

\bibitem{JiangEtalDong2019}
Y.~{Jiang}, C.~{Wan}, J.~{Wang}, Y.~{Song}, and Z.~Y. {Dong}.
\newblock Stochastic receding horizon control of active distribution networks
  with distributed renewables.
\newblock {\em IEEE Transactions on Power Systems}, 34(2):1325--1341, March
  2019.

\bibitem{GrafPlessenEtalBemporad2019}
M.~Graf~Plessen, L.~Puglia, T.~Gabbriellini, and A.~Bemporad.
\newblock Dynamic option hedging with transaction costs: A stochastic model
  predictive control approach.
\newblock {\em International Journal of Robust and Nonlinear Control},
  29(15):5058--5077, 2019.

\bibitem{RipaccioliEtalKolmanovsky2010}
G.~{Ripaccioli}, D.~{Bernardini}, S.~{Di Cairano}, A.~{Bemporad}, and I.~V.
  {Kolmanovsky}.
\newblock A stochastic model predictive control approach for series hybrid
  electric vehicle power management.
\newblock In {\em Proceedings of the 2010 American Control Conference},
  Baltimore, USA, June 2010. Proceedings of the 2010 American Control
  Conference.

\bibitem{CarvalhoEtalBorrelli2014}
A.~Carvalho, Y.~Gao, S.~Lefevre, and F.~Borrelli.
\newblock Stochastic predictive control of autonomous vehicles in uncertain
  environments.
\newblock In {\em 12th International Symposium on Advanced Vehicle Control},
  Tokyo, Japan, 2014. 12th International Symposium on Advanced Vehicle Control.

\bibitem{SchildbachBorrelli2015}
G.~Schildbach and F.~Borrelli.
\newblock Scenario model predictive control for lane change assistance on
  highways.
\newblock In {\em 2015 IEEE Intelligent Vehicles Symposium (IV)}, Seoul, South
  Korea, June 2015. 2015 IEEE Intelligent Vehicles Symposium (IV).

\bibitem{LenzEtalKnoll2015}
D.~Lenz, T.~Kessler, and A.~Knoll.
\newblock Stochastic model predictive controller with chance constraints for
  comfortable and safe driving behavior of autonomous vehicles.
\newblock In {\em 2015 IEEE Intelligent Vehicles Symposium (IV)}, Seoul, South
  Korea, June 2015. 2015 IEEE Intelligent Vehicles Symposium (IV).

\bibitem{CesariEtalBorrelli2017}
G.~Cesari, G.~Schildbach, A.~Carvalho, and F.~Borrelli.
\newblock Scenario model predictive control for lane change assistance and
  autonomous driving on highways.
\newblock {\em IEEE Intelligent Transportation Systems Magazine}, 9(3):23--35,
  Fall 2017.

\bibitem{BruedigamEtalWollherr2018b}
T.~Br\"udigam, M.~Olbrich, M.~Leibold, and D.~Wollherr.
\newblock Combining stochastic and scenario model predictive control to handle
  target vehicle uncertainty in autonomous driving.
\newblock In {\em 21st IEEE International Conference on Intelligent
  Transportation Systems}, Maui, USA, 2018. 21st IEEE International Conference
  on Intelligent Transportation Systems.

\bibitem{SuhChaeYi2018}
J.~{Suh}, H.~{Chae}, and K.~{Yi}.
\newblock Stochastic model-predictive control for lane change decision of
  automated driving vehicles.
\newblock {\em IEEE Transactions on Vehicular Technology}, 67(6):4771--4782,
  June 2018.

\bibitem{SchulmanEtalAbbeel2014}
J.~Schulman, Y.~Duan, J.~Ho, A.~Lee, I.~Awwal, H.~Bradlow, J.~Pan, S.~Patil,
  K.~Goldberg, and P.~Abbeel.
\newblock Motion planning with sequential convex optimization and convex
  collision checking.
\newblock {\em The International Journal of Robotics Research},
  33(9):1251--1270, 2014.

\bibitem{ZanchettinEtalMatthias2016}
A.M. {Zanchettin}, N.M. {Ceriani}, P.~{Rocco}, H.~{Ding}, and B.~{Matthias}.
\newblock Safety in human-robot collaborative manufacturing environments:
  Metrics and control.
\newblock {\em IEEE Transactions on Automation Science and Engineering},
  13(2):882--893, April 2016.

\bibitem{Bauer1958}
H.~Bauer.
\newblock Minimalstellen von funktionen und extremalpunkte.
\newblock {\em Archiv der Mathematik}, 9(4):389--393, Nov 1958.

\bibitem{boyd2004convex}
S.~Boyd and L.~Vandenberghe.
\newblock {\em Convex optimization}.
\newblock Cambridge University Press, Cambridge, 1st edition, 2004.

\bibitem{RawlingsMayneDiehl2017}
J.B. Rawlings, D.Q. Mayne, and M.~Diehl.
\newblock {\em Model Predictive Control: Theory, Computation, and Design}.
\newblock Nob Hill Publishing, 2017.

\bibitem{HercegEtalMorari2013}
M.~Herceg, M.~Kvasnica, C.N. Jones, and M.~Morari.
\newblock {Multi-Parametric Toolbox 3.0}.
\newblock In {\em Proceedings of the European Control Conference}, Z\"urich,
  Switzerland, July 17--19 2013. Proceedings of the European Control
  Conference.

\bibitem{Lofberg2004}
J.~L{\"{o}}fberg.
\newblock Yalmip : A toolbox for modeling and optimization in matlab.
\newblock In {\em Proceedings of the CACSD Conference}, Taipei, Taiwan, 2004.
  Proceedings of the CACSD Conference.

\end{thebibliography}
\bibliographystyle{unsrt}

\end{document}